\def\rknn{\mbox{R$k$NN}}
\def\r1nn{\mbox{R$1$NN}}
\def\knng{\mbox{$k$-NNG}}
\def\ksyg{\mbox{$k$-SYG}}
\def\1syg{\mbox{$1$-SYG}}
\newcommand{\etal}{\emph{et~al.}}
\newcommand{\ie}{\emph{i.e.}}
\newcommand{\eg}{\emph{e.g.}}
\newtheorem{theorem}{Theorem}[section]
\newtheorem{lemma}{Lemma}[section]
\newproof{proof}{Proof}
\newtheorem{corollary}{Corollary}[section]
\journal{Comput. Geom. Theory Appl.}
\begin{document}
\begin{frontmatter}

\title{Kinetic $k$-Semi-Yao Graph and its Applications~\tnoteref{t1}}


\author{Zahed Rahmati\corref{cor1}}
\ead{zahedrahmati@gmail.com}

\author[rvt2]{Mohammad Ali Abam}
\ead{abam@sharif.edu}

\author[rvt1]{Valerie King}
\ead{val@uvic.ca}

\author[rvt1]{Sue Whitesides}
\ead{sue@uvic.ca}

\tnotetext[t1]{Preliminary versions of parts of this paper appeared in \textit{Proceedings of the 26th Canadian Conference on Computational Geometry} (CCCG 2014)~\cite{DBLP:journals/corr/RahmatiA13} and \textit{Proceedings of the 25th International Workshop on Combinatorial Algorithms} (IWOCA 2014)~\cite{RahmatiKW14_IWOCA14}.}

\cortext[cor1]{Corresponding author}

\address[rvt2]{Dept. of Computer Engineering, Sharif University of Technology, Tehran, Iran.}
\address[rvt1]{Dept. of Computer Science, University of Victoria, Victoria, Canada.}

\begin{abstract}
This paper introduces a new proximity graph, called \textit{the $k$-Semi-Yao graph} (\ksyg), on a set $P$ of points in $\mathbb{R}^d$, which is a supergraph of the $k$-nearest neighbor graph (\knng) of $P$. We provide a kinetic data structure (KDS) to maintain the \ksyg~on moving points, where the trajectory of each point is a polynomial function whose degree is bounded by some constant. Our technique gives the first KDS for the theta graph (\ie, \1syg) in $\mathbb{R}^d$. It generalizes and improves on previous work on maintaining the theta graph in~$\mathbb{R}^2$. 

As an application, we use the kinetic \ksyg~to provide the \textit{first} KDS for maintenance of all the $k$-nearest neighbors in $\mathbb{R}^d$, for any $k\geq 1$. Previous works considered the $k=1$ case only.

Our KDS for all the $1$-nearest neighbors is \textit{deterministic}. The best previous KDS for all the $1$-nearest neighbors in $ \mathbb{R}^d$ is \textit{randomized}. Our structure and analysis are simpler and improve on this work for the $k=1$ case. We also provide a KDS for all the $(1+\epsilon)$-nearest neighbors, which in fact gives better performance than previous KDS's for maintenance of all the exact $1$-nearest neighbors.

As another application, we present the \textit{first} KDS for answering \textit{reverse} $k$-nearest neighbor queries on moving points in $ \mathbb{R}^d$, for any $k\geq 1$.
\end{abstract}
\begin{keyword}
$k$-nearest neighbors, reverse $k$-nearest neighbor queries, kinetic data structure, continuous monitoring, continuous queries
\end{keyword}

\end{frontmatter}
\section{Introduction}
The physical and virtual worlds around us are full of moving objects, including players in multi-player game environments, soldiers in a battlefield, tourists in foreign environments, and mobile devices in wireless ad-hoc networks. The problems that deal with attributes  (\eg, closest pair) of sets of objects arising from the distances between objects are known as \textit{proximity problems}.  Considering (kinetic version of) a proximity problem on moving objects in order to solve a proximity problem is called a \textit{kinetic proximity problem}.

The maintenance of attributes of sets of moving points has been studied extensively over the past 15 years; see~\cite{rahmati2014simple} and references therein. A basic framework for this study, which is described in Section~\ref{sec:KDSframework}, is that of a kinetic data structure (KDS) which is in fact a set of data structures and algorithms to track the attributes of moving points. We consider some  fundamental proximity problems, which are stated in Section~\ref{sec:ProbStatment}, in this standard KDS model.

\subsection{Problem Statement}\label{sec:ProbStatment}
Let $P$ be a set of $n$ points in $\mathbb{R}^d$, where $d$ is arbitrary but fixed. Finding the $k$-nearest neighbors to a query point, which is called the \textit{$k$-nearest neighbor} problem, is fundamental in computational geometry. The \textit{all $k$-nearest neighbors} problem, a variant of the $k$-nearest neighbor problem, is to find the $k$-nearest neighbors to each point $p\in P$. Given any $\epsilon>0$, the \textit{all $(1+\epsilon)$-nearest neighbors} problem is to find some $\hat{q}\in P$ for each point $p\in P$, such that the Euclidean distance $|p\hat{q}|$ between $p$ and $\hat{q}$ is within a factor of $(1+\epsilon)$ of the Euclidean distance between $p$ and its nearest neighbor. The graph constructed by connecting each point $p\in P$ to its $k$-nearest neighbors is called the \textit{$k$-nearest neighbor graph} (\knng). The \textit{closest pair} problem is to find the endpoints of the edge in the $1$-NNG whose separation distance is minimum. 
The \textit{theta graph} is a well-studied sparse proximity graph~\cite{Clarkson:1987:AAS:28395.28402,Keil:1988:ACE:61764.61787}. This graph is constructed as follows. Partition the space around each point $p\in P$ into $c$ polyhedral cones $C_l(p)$, $0\leq l \leq c-1$. In each cone $C_l(p)$, a vector $x_l(p)$ is chosen as the cone axis. Then connect the point $p$ to a particular point inside each cone $C_l(p)$, where the particular point is the element of $P$ with minimum length projection on $x_l(p)$~\footnote{By treating $c$ as a parameter of the theta graph, one can obtain an important class of sparse graphs, called \textit{t-spanners}, with different stretch factors $t$~\cite{DBLP:journals/corr/BoseCMRV14}.}.

The \textit{reverse $k$-nearest neighbor} (\rknn) problem is a variant of the $k$-nearest neighbor problem that asks for the influence of a query point on a point set $P$. Unlike the $k$-nearest neighbor problem, the exact number of reverse $k$-nearest neighbors of a query point is not known in advance, but as we prove in Lemma~\ref{the:RkNNsUprBnd} the number is upper-bounded by $O(k)$. The \rknn~problem is formally defined as follows: Given a query point $q\notin P$, find the set $\rknn(q)$ of all $p$ in $P$ for which $q$ is one of $k$-nearest neighbors of $p$. Thus $\rknn(q)=\{p\in P:~|pq|\leq |pp_k|\}$, where $|.|$ denotes Euclidean distance, and $p_k$ is the $k^{th}$ nearest neighbor of $p$ among the points in $P$.

\subsection{KDS Framework}\label{sec:KDSframework}
Basch, Guibas, and Hershberger~\cite{Basch:1997:DSM:314161.314435} introduced the \textit{kinetic data structure framework} to maintain attributes (\eg, closest pair) of moving points. In the kinetic setting, we assume each coordinate of the trajectory of a point $p\in P$ is a polynomial function of degree bounded by some constant $s$. The correctness of an attribute over time is determined based on correctness of a set of \textit{certificates}. A certificate is a boolean function of time, and its \textit{failure time} is the next time after the current time at which the certificate will become invalid. When a certificate fails, we say that an \textit{event} occurs. Using a \textit{priority queue} of the failure times of the certificates, we can know the next time after the current time that an event occurs. When the failure time of the certificate with highest priority in the priority queue is equal to the current time we invoke the update mechanism to reorganize the data structures and replace the invalid certificates with new valid ones. 

To analyse the performance of a KDS there are four standard criteria. A KDS distinguishes between two types of events: \textit{external events} and \textit{internal events}. An event that changes the desired attribute itself is called an external event, and those events that cause only some internal changes in the data structures are called internal events. If the ratio between the  worst-case number of internal events in the KDS to the worst-case number of external events is $O(\text{polylog}(n))$, the KDS is \textit{efficient}. If the response time of the update mechanism to an event is $O(\text{polylog}(n))$, the KDS is \textit{responsive}. The compactness of a KDS refers to size of the priority queue at any fixed time: if the KDS uses $O(n.\text{polylog}(n))$ certificates, it is \textit{compact}. The KDS is \textit{local} if the number of certificates associated with any point at any fixed time is $O(\text{polylog}(n))$. The locality of a KDS is an important criterion; if a KDS is local, it can be updated quickly when a point changes its trajectory.

\subsection{Related Work}\label{sec:RelatedWork}
\paragraph{Stationary setting.} 
For a set $P$ of $n$ stationary points, the closest pair problem can be solved in $O(n\log n)$ time~\cite{4567872,Bentley:1976:DMS:800113.803652}. There is also a linear-time randomized algorithm to find the closest pair~\cite{FSHJ1078}. One can report all the $1$-nearest neighbors in time $O(n\log n)$~\cite{Vaidya:1989:ONL:70530.70532}. For any $k\geq 1$, all the $k$-nearest neighbors can be reported in time $O(kn\log n)$~\cite{Dickerson:1996:APP:236464.236474}, in order of increasing distance; reporting the unordered set takes time $O(n\log n + kn)$~\cite{Callahan366854,Clarkson:1983:FAN:1382437.1382825,Dickerson:1996:APP:236464.236474}. 

The reverse $k$-nearest neighbor problem was first posed by Korn and Muthukrishnan~\cite{Korn:2000:ISB:342009.335415}  in the database community, where it was then considered extensively due to its many applications in, for example, decision support systems, profile-based marketing, traffic networks, business location planning, clustering and outlier detection, and molecular biology~\cite{Kumar:2008:EAR:1463434.1463483,DBLP:conf/sdm/LinED08}. In computational geometry, there exist two data structures~\cite{MaheshwariVZ02cccg2002,CheongIJCGA2011} that give solutions to the \rknn~problem. Both of these solutions only work for $k=1$. Maheshwari~\etal~\cite{MaheshwariVZ02cccg2002} gave a data structure to solve the \r1nn~problem in $\mathbb{R}^2$. Their data structure uses $O(n)$ space and $O(n\log n)$ preprocessing time, and an \r1nn~query can be answered in time $O(\log n)$. Cheong~\etal~\cite{CheongIJCGA2011} considered the \r1nn~problem in $\mathbb{R}^d$, where $d=O(1)$. Their method gives the same complexity as that of~\cite{MaheshwariVZ02cccg2002}~\footnote{It seems that the approach by Cheong~\etal~can be extended to answer \rknn~queries, for any $k\geq 1$, with preprocessing time $O(kn\log n)$, space $O(kn)$, and query time $O(\log n + k)$.}.

\paragraph{Kinetic setting.}
For a set of $n$ moving points in $\mathbb{R}^2$, where each trajectory of a point is a polynomial function of degree bounded by constant $s$, Basch, Guibas, and Hershberger~\cite{Basch:1997:DSM:314161.314435} provided a KDS for maintenance of the closest pair. Their KDS uses linear space and processes $O(n^2\beta_{2s+2}(n)\log n)$ events, each in time $O(\log^2 n)$. Here, $\beta_s(n)$ is an extremely slow-growing function.


Basch, Guibas, and Zhang~\cite{Basch:1997:PPM:262839.262998} used multidimensional range trees to maintain the closest pair in $\mathbb{R}^d$. For a fixed dimension $d$, their KDS uses $O(n\log^{d-1}n)$ space and processes $O(n^2\beta_{2s+2}(n)\log n)$ events, each in worst-case time $O(\log^d n)$. Their KDS is responsive, efficient, compact, and local.

Using multidimensional range trees, Agarwal, Kaplan, and Sharir (TALG'08)~\cite{Agarwal:2008:KDD:1435375.1435379} gave KDS's both for maintenance of the closest pair and for all the $1$-nearest neighbors in $\mathbb{R}^d$. The closest pair KDS by Agarwal~\etal~uses $O(n\log^{d-1} n)$ space and processes $O(n^2\beta_{2s+2}(n)\log n)$ events, each in amortized time $O(\log^d n)$; this KDS is efficient, amortized responsive, local, and compact. Agarwal~\etal~gave the first efficient KDS to maintain all the $1$-nearest neighbors in $\mathbb{R}^d$. For the efficiency of their KDS, they implemented range trees by using randomized search trees (treaps). Their \textit{randomized} kinetic approach uses $O(n\log^d n)$ space and processes $O(n^2\beta_{2s+2}^2(n)\log^{d+1} n)$ events; the expected time to process all events is $O(n^2\beta_{2s+2}^2(n)\log^{d+2} n)$. Their all $1$-nearest neighbors KDS is efficient, amortized responsive, compact, but in general is not local.

Rahmati, King, and Whitesides~\cite{Rahmati2014} gave the first KDS for maintenance of the theta graph in $\mathbb{R}^2$. Their method uses a constant number of kinetic Delaunay triangulations to maintain the theta graph. Their theta graph KDS uses linear space and processes $O(n^2\beta_{2s+2}(n))$ events with total processing time $O(n^2\beta_{2s+2}(n)\log n)$. Using the kinetic theta graph, they improved the previous KDS by Agarwal~\etal~to maintain all the $1$-nearest neighbors in $\mathbb{R}^2$. In particular, their \textit{deterministic} kinetic algorithm, which is also arguably simpler than the randomized kinetic algorithm by Agarwal~\etal, uses $O(n)$ space and processes $O(n^2\beta_{2s+2}^2(n)\log n)$ events with total processing time $O(n^2\beta_{2s+2}^2(n)\log^2 n)$. With the same complexity as their KDS for maintenance of all the $1$-nearest neighbors, they maintain the closest pair over time. Their KDS's for maintenance of the theta graph, all the $1$-nearest neighbors, and the closest pair are efficient, amortized responsive, compact, but in general are not local.

The reverse $k$-nearest neighbor queries for a set of continuously moving objects has attracted the attention of the database community (see~\cite{Cheema:2012:CRK:2124885.2124903} and references therein). To our knowledge there is no previous solution to the kinetic \rknn~problem in the literature.
\subsection{Our Contributions}\label{sec:Contributions} 
We introduce a new sparse proximity graph, called the \textit{$k$-Semi-Yao graph} (\ksyg), and then maintain the \ksyg~for a set of $n$ moving points, where the trajectory of each point is a polynomial function of at most constant degree $s$. We use a constant number of range trees to apply necessary changes to the \ksyg~over time. We prove that the edge set of the \ksyg~includes the pairs of the $k$-nearest neighbors as a subset. This enables us to easily provide the \textit{first} kinetic solutions in $\mathbb{R}^d$ for maintenance of all the $k$-nearest neighbors, and then, as another first, to answer \rknn~queries on moving points, for any $k\geq 1$.

Our KDS for maintenance of the \1syg~(\ie, theta graph), in fixed dimension $d$, uses $O(n\log^d n)$ space and processes $O(n^2)$ events with total processing time $O(n^2\beta_{2s+2}(n)\log^{d+1} n)$.  The KDS is compact, efficient, amortized responsive, and it is local. Our KDS generalizes the previous KDS for the \1syg~by Rahmati~\etal~\cite{Rahmati2014} which only works in $\mathbb{R}^2$. Also, our kinetic approach yields improvements on the previous KDS for maintenance of the \1syg~by Rahmati~\etal~\cite{Rahmati2014}: Our KDS is local, but their KDS is not;  in particular, each point in our KDS participates in $O(1)$ certificates, but in their KDS each point participates in $O(n)$ certificates. Also, our KDS handles $O(n^2)$ events, but their KDS handles $O(n^2\beta_{2s+2}(n))$ events in $\mathbb{R}^2$.

Our KDS for maintenance of all the $1$-nearest neighbors uses $O(n\log^d n)$ space and processes $O(n^2\beta_{2s+2}^2(n)\log n)$) events; the total processing time to handle all the events is $O(n^2\beta_{2s+2}(n)\log^{d+1} n)$.  Our KDS is compact, efficient, amortized responsive, but it is not local in general. For each point $p\in P$ in the \1syg~we construct a tournament tree to maintain the edge with minimum length among the edges incident to the point $p$. Summing over elements of all the tournament trees in our KDS is linear in $n$, which leads to the total number of events $O(n^2\beta_{2s+2}^2(n)\log n)$, which is \textit{independent} of $d$. Our \textit{deterministic} method improves with simpler structure and analysis of the previous \textit{randomized} kinetic algorithm by Agarwal~\etal~\cite{Agarwal:2008:KDD:1435375.1435379}: The expected total size of the tournament trees in their KDS for all $1$-nearest neighbors is $O(n\log^dn)$; thus their KDS processes $O(n^2\beta_{2s+2}^2(n)\log^{d+1} n)$ events, which depends on $d$. Also, we improve their KDS by a factor of $\log n$ in the total cost. Furthermore, on average, each point in our KDS participates in $O(1)$ events, but in their KDS each point participates in $O(\log^d n)$ events.

For maintaining all the $1$-nearest neighbors, neither our KDS nor the KDS by Agarwal~\etal~is local in the \textit{worst-case}, and furthermore, each event in our KDS and in their KDS is handled in a polylogarithmic \textit{amortized} time. To satisfy the locality criterion and to get a worst-case processing time for handling events, we provide a KDS for all the $(1+\epsilon)$-nearest neighbors. In particular, for each point $p$ we maintain some point $\hat{q}$ such that $|p\hat{q}|<(1+\epsilon).|pq|$, where $q$ is the nearest neighbor of $p$ and $|pq|$ is the Euclidean distance between $p$ and $q$. This KDS uses $O(n\log^{d} n)$ space, and handles $O(n^2\log^d n)$ events, each in worst-case time $O(\log^d n\log\log n)$; it is compact, efficient, responsive, and local.


To answer an \rknn~query for a query point $q\notin P$ at any time $t$, we partition the $d$-dimensional space into a constant number of cones around $q$, and then among the points of $P$ in each cone, we examine the $k$ points having shortest projections on the cone axis. We obtain $O(k)$ candidate points for $q$ such that $q$ might be one of their $k$-nearest neighbors at time $t$. To check which if any of these candidate points is a reverse $k$-nearest neighbor of $q$, we maintain the $k^{th}$ nearest neighbor $p_k$ of each point $p\in P$ over time. By checking whether $|pq|\leq |pp_k|$ we can easily check whether a candidate point $p$ is one of the reverse $k$-nearest neighbors of $q$ at time $t$. Given a KDS for maintenance of all the $k$-nearest neighbors, an \rknn~query can be answered at any time $t$ in $O(\log^d n+k)$ time. Note that if an event occurs at the same time $t$, we first spend amortized time $O(\text{polylog}(n))$ to update all the $k$-nearest neighbors, and then we answer the query.

Table~\ref{tab:AllResults} summarizes all the (previous and new) results for the kinetic proximity problems.  In this table, ``Dim.", ``Num.", and ``Proc." stand for ``Dimension", ``Number", and ``Processing", respectively. Here, $\beta(n)$ is an extremely slow-growing function, and $\phi(n)$ is the complexity of the $k$-level, which are defined in Theorems~\ref{the:totallyDFcomplexity} and~\ref{the:k_levelComplexity}, respectively.

\begin{landscape}
\begin{table}
\centering
\begin{tabular}{|l|c|c|c|c|c|} \hline
\textsl{Kinetic problem} & \textsl{Dim.} & \textsl{Space} & \textsl{Num. of events} & \textsl{Proc. time} & \textsl{Local} \\ \hline
      Closest pair~\cite{Basch:1997:DSM:314161.314435}  & $d=2$  & $O(n)$  &   $O(n^2\beta(n)\log n)$  &   $O(\log^2 n)$  /event  &   Yes   \\ \hline
      Closest pair~\cite{Basch:1997:PPM:262839.262998} & $d=O(1)$ & $O(n\log^{d-1} n)$  &   $O(n^2\beta(n)\log n)$  &   $O(\log^d n)$ /event  &   Yes   \\ \hline
      Closest pair~\cite{Agarwal:2008:KDD:1435375.1435379} & $d=O(1)$  & $O(n\log^{d-1} n)$  &   $O(n^2\beta(n)\log n)$  &   $O(\log^d n)$ /event  &   Yes   \\ \hline
      Closest pair~\cite{Rahmati2014} & $d=2$ & $O(n)$  &   $O(n^2\beta^2(n)\log n)$  &   $O(n^2\beta^2(n)\log^2 n)$  &   No   \\ \hline            
      All $1$-NNs~\cite{Agarwal:2008:KDD:1435375.1435379} & $d=O(1)$  & $O(n\log^d n)$  &   $O(n^2\beta(n)\log^{d+1} n)$  &   $O(n^2\beta(n)\log^{d+2} n)$  &   No   \\ \hline
      All $1$-NNs~\cite{Rahmati2014} & $d=2$ & $O(n)$  &   $O(n^2\beta^2(n)\log n)$  &   $O(n^2\beta^2(n)\log^2 n)$  &   No   \\ \hline 
      All $1$-NNs~[Here] & $d=O(1)$  & $O(n\log^d n)$  &   $O(n^2\beta^2(n)\log n)$  &   $O(n^2\beta(n)\log^{d+1} n)$   &   No   \\ \hline       
      All $(1+\epsilon)$-NNs [Here] & $d=O(1)$  & $O(n\log^d n)$  &   $O(n^2\log^d n)$  &   $O(\log^d n\log\log n)$ /event &   Yes   \\ \hline   
     All $k$-NNs~[Here] & $d=O(1)$  & $O(n\log^{d+1} n+kn)$  &   $O(n^2\phi(n))$  &   $O(n^2\phi(n)\log n)$ &   No   \\ \hline
      \1syg~\cite{Rahmati2014} & $d=2$  & $O(n)$  &   $O(n^2\beta(n))$  &   $O(n^2\beta(n)\log n)$  &   No   \\ \hline 
      \1syg~[Here] & $d=O(1)$  & $O(n\log^d n)$  &   $O(n^2)$  &   $O(n^2\beta(n)\log^{d+1} n)$  &   Yes   \\ \hline
\end{tabular}
\caption{The previous results and our results for kinetic proximity problems.}
\label{tab:AllResults}
\end{table}
\end{landscape}
\subsection{Outline}\label{sec:Outline} 
In Section~\ref{sec:Preliminaries}, we describe the necessary background and review the theorems that we use throughout this paper. Section~\ref{sec:Relationship} provides key lemmas, and in fact introduces a new supergraph, namely the \textit{$k$-Semi-Yao graph} (\ksyg), of the \knng. Section~\ref{sec:ReportKNNs} shows how to construct the \ksyg~and report all the $k$-nearest neighbors. Section~\ref{sec:kineticKSYG} gives a kinetic approach for maintenance of the $\ksyg$. Section~\ref{sec:applications}  provides two applications of the kinetic $\ksyg$: maintenance of all the $k$-nearest neighbors, and answering \rknn~queries on moving points. Section~\ref{sec:KineticEpsANN} shows how to maintain all the $(1+\epsilon)$-nearest neighbors. Section~\ref{sec:conclusion} concludes.

\section{Preliminaries}\label{sec:Preliminaries}
\paragraph{Partitioning space around the origin.}
Let $\overrightarrow{v}$ be a unit vector in $\mathbb{R}^d$ with apex at the origin $o$, and let $\theta$ be a constant. We define an \textit{infinite right circular cone} with respect to $\overrightarrow{v}$ and $\theta$ to be the set of points $x\in \mathbb{R}^d$ such that the angle between $\overrightarrow{ox}$ and $\overrightarrow{v}$ is at most $\theta/2$; Figure~\ref{fig:circularCone}(a) depicts an infinite right circular cone in $\mathbb{R}^3$. We define a \textit{polyhedral cone} of opening angle $\theta$ with respect to $\overrightarrow{v}$ to be the intersection of $d$ half-spaces such that the intersection is contained in an infinite right circular cone with respect to $\overrightarrow{v}$ and $\theta$, and such that all the half-spaces contain the origin $o$; Figure~\ref{fig:circularCone}(b) depicts a polyhedral cone in $\mathbb{R}^3$, which is contained in the infinite right circular cone of Figure~\ref{fig:circularCone}(a). The angle between any two rays inside a polyhedral cone of opening angle $\theta$ emanating from $o$ is at most $\theta$.

\begin{figure}[t!]
\centering
\includegraphics[scale=1]{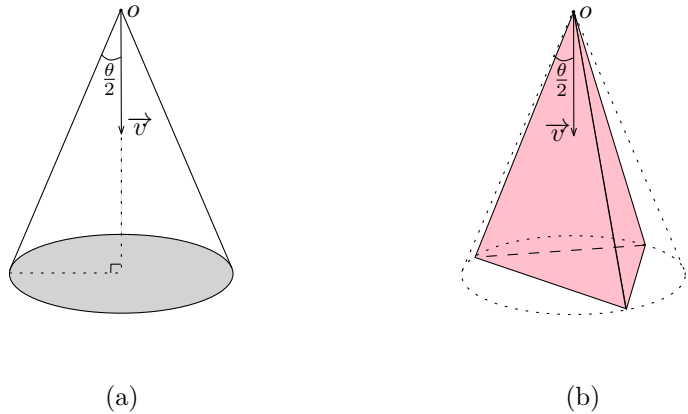}
\caption{An infinite right circular cone and a polyhedral cone.}
\label{fig:circularCone}
\end{figure}

\begin{lemma}{\tt \cite{Abam:2011:KSX:1971362.1971367}}\label{the:NumofPolyCones}
The $d$-dimensional space around a point can be covered by a collection of $c=O(1/\theta^{d-1})$ interior-disjoint polyhedral cones of opening angle $\theta$.
\end{lemma}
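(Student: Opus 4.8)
The plan is to reduce the covering of $\mathbb{R}^d$ by cones with a common apex (say the origin $o$) to a covering of the unit sphere $S^{d-1}$ by spherical patches of small angular diameter, then to build such a cover explicitly and count its pieces. A polyhedral cone with apex $o$ is completely determined by its trace $C\cap S^{d-1}$ on the unit sphere, and its opening angle is controlled by the angular diameter of this trace: if the trace has angular diameter $O(\theta)$ then it lies in a right circular cone of opening angle $O(\theta)$. Interior-disjointness of the cones corresponds to interior-disjointness of the traces, and covering $\mathbb{R}^d\setminus\{o\}$ corresponds to covering $S^{d-1}$. Hence it suffices to tile $S^{d-1}$ by $O(1/\theta^{d-1})$ interior-disjoint patches, each of small angular diameter, each of which is the central projection of a $(d-1)$-simplex so that the associated cone is an intersection of $d$ halfspaces through $o$, matching the required form.

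First I would fix a convex polytope $Q$ circumscribing the unit ball whose boundary consists of a constant number of flat facets --- for concreteness the cube $[-1,1]^d$, with its $2d$ facets. The central projection $\pi(x)=x/\|x\|$ from $o$ maps $\partial Q$ homeomorphically onto $S^{d-1}$, so it is enough to subdivide $\partial Q$ and project. On each facet I lay down a uniform grid of axis-parallel $(d-1)$-cubes of side length $\delta$, and then triangulate each grid cube into a constant number ($(d-1)!$) of $(d-1)$-simplices. The cone over each such simplex with apex $o$ is a simplicial cone, i.e. the intersection of $d$ halfspaces whose bounding hyperplanes pass through $o$; these cones are interior-disjoint, and their union is all of $\mathbb{R}^d$ because the simplices tile $\partial Q$.

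It remains to choose $\delta$ so that every cone has opening angle at most $\theta$, and to count. Since $\partial Q$ lies at bounded distance from $o$ (between $1$ and $\sqrt d$), the projection $\pi$ is bi-Lipschitz with constants depending only on $d$; consequently a cell of Euclidean diameter $O(\delta)$ projects to a patch of angular diameter $O(\delta)$, hence lies in a right circular cone of opening angle $O(\delta)$, and taking $\delta=c_d\,\theta$ for a suitable dimensional constant $c_d$ forces every cone to have opening angle at most $\theta$. With this choice each facet, having bounded $(d-1)$-area, is cut into $O((1/\delta)^{d-1})=O(1/\theta^{d-1})$ grid cells; the triangulation multiplies this by the constant $(d-1)!$, and there are only $2d$ facets, so the total number of cones is $O(1/\theta^{d-1})$, as claimed. (The matching lower bound, which we do not need here, follows from a packing estimate: a patch of angular diameter $\theta$ has $(d-1)$-measure $O(\theta^{d-1})$, while $S^{d-1}$ has constant total measure.)

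The step I expect to be the main obstacle is the uniform control of the opening angle under central projection. The distortion of $\pi$ is not constant over $\partial Q$: cells near the corners and edges of the cube are farther from $o$ and are viewed more obliquely than cells near the facet centers, so their traces are more severely foreshortened. One must therefore bound the angular diameter of a projected cell by its worst case over all of $\partial Q$ and pick $\delta$ accordingly; establishing the dimension-only dependence of the bi-Lipschitz constants --- equivalently, bounding the angle subtended at $o$ by two points of $\partial Q$ at Euclidean distance at most $\delta$ --- is the one genuinely quantitative point, and it is what pins down the constant hidden in $c=O(1/\theta^{d-1})$.
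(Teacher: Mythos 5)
Your proof is correct. Note that the paper itself never proves this lemma---it is imported by citation from Abam and de Berg---so there is no internal proof to compare against; your construction (central projection of the facets of the cube $[-1,1]^d$, subdivided into a $\delta$-grid with $\delta=c_d\theta$ and triangulated, then coned at the origin) is the standard one and is essentially the construction underlying the cited result. One detail you handled that a careless version of this argument misses: the paper's definition requires each polyhedral cone to be an intersection of exactly $d$ half-spaces, and the cone over a $(d-1)$-dimensional grid cube has $2(d-1)$ facets, which is too many once $d>2$; your triangulation of each grid cell into $(d-1)!$ simplices is precisely what makes every cone simplicial and brings the construction in line with the definition, at the cost of only a constant factor in the count.
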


\paragraph{Kinetic rank-based range tree (RBRT).} Let $C=\{C_0,...,C_{c-1}\}$ be a set of polyhedral cones of opening angle $\theta$ with their apex at the origin $o$ that together cover $\mathbb{R}^d$. Denote by $f_1,...,f_d$ the bounding half-spaces $f^+_1,...,f^+_d$ of $C_l$, $0\leq l\leq c-1$. Let $u_i$ be the normal to $f^+_i$, $1\leq i\leq d$. Figure~\ref{fig:aConeC_l}(a) depicts $u_1$ and $u_2$ for the half-spaces $f^+_1$ and $f^+_2$ of a polyhedral cone $C_l\in C$ in $\mathbb{R}^2$. Let $C_l(p)$ denote the translated copy of $C_l$ with apex at $p$; see Figure~\ref{fig:aConeC_l}(b). 

\begin{figure}[t!]
\begin{center}
\includegraphics[scale=1]{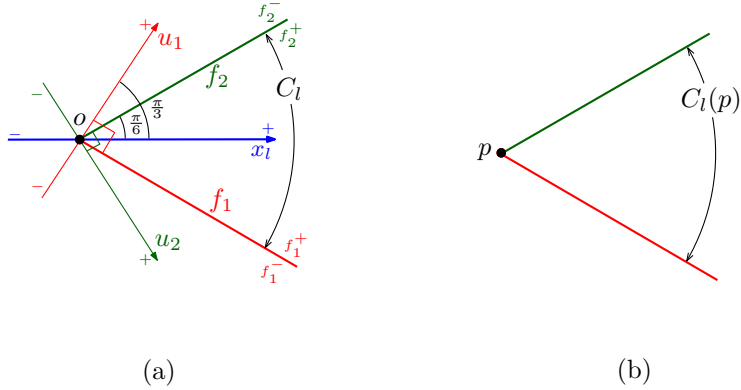}
\end{center}
\caption{(a) The cone $C_l$ in $\mathbb{R}^2$ is bounded by $f_1$ and $f_2$. The coordinate axes $u_1$ and $u_2$ are orthogonal to $f_1$ and $f_2$. (b) A translation of $C_l$ that moves the apex to $p$.}
\label{fig:aConeC_l}
\end{figure}

Consider a set $P$ of moving points. Using a \textit{kinetic range tree} data structure, one can process the moving points in $P$ such that the points of $P$ inside a query range can efficiently be reported at any time $t$. By creating a kinetic range tree data structure ${\cal T}_l$ for the polyhedral cone $C_l$, one can report the points in $P\cap C_l(p)$ for a query range $C_l(p)$ at time $t$. 

Abam and de Berg~\cite{Abam:2011:KSX:1971362.1971367} introduced a variant of range trees, a \textit{rank-based range tree} (RBRT), that avoids rebalancing the range tree and gives a polylogarithmic worst-case processing time when an event occurs. Similar to a regular range tree (see~\cite{Berg:2008:CGA:1370949}), the points at level $i$ of an RBRT ${\cal T}_l$, which is an RBRT corresponding to $C_l$, are sorted at the leaves in ascending order according to their $u_i$-coordinates. The skeleton of an RBRT ${\cal T}_l$ is independent of the position of the points in $\mathbb{R}^d$ and depends on the \textit{ranks} of the points in each of the $u_i$-coordinates. The rank of a point in a tree at level $i$ of the RBRT ${\cal T}_l$ is its position in the sorted list of all the points ordered by their $u_i$-coordinates. Any tree at any level of the RBRT ${\cal T}_l$ is a balanced binary tree, and no matter how many points are in the tree, it is a tree on $n$ ranks. The following gives the complexity of an RBRT ${\cal T}_l$.

\begin{theorem}\label{the:RBRTcomplexity}{\tt \cite{Abam:2011:KSX:1971362.1971367}}
An RBRT ${\cal T}_l$ uses $O(n\log^dn)$ storage and can be constructed in $O(n\log^d n)$ time. It can be described as a set of pairs $\Psi_l=\{(B_1,R_1),...,(B_m,R_m)\}$ with the following properties. 
\begin{itemize}
\item Each pair $(B_j,R_j)\in \Psi_l$ is generated from an internal node or a leaf node of a tree at level $d$ of ${\cal T}_l$.
\item For any two points $p$ and $q$ in $P$ where $q\in C_l(p)$, there is a unique pair $(B_j,R_j)\in \Psi_l$ such that $p\in B_j$ and $q\in R_j$.
\item For any pair $(B_j,R_j)\in \Psi$, if $p\in B_j$ and $q\in R_j$, then $q\in C_l(p)$ and $p\in \bar{C}_l(q)$. Here, $\bar{C}_l(q)$ is the reflection of $C_l(q)$ through $p$, which is intuitively formed by following the lines through $p$ in the half-spaces of $C_l(q)$.
\item Each point $p\in P$ is in $O(\log^d n)$ pairs of $(B_j,R_j)$, which implies that the number of elements of all the pairs $(R_j,B_j)$ is $O(n\log^d n)$.
\item For any point $p\in P$, all the sets $B_j$ (resp. $R_j$), where $p\in B_j$ (resp. $p\in R_j$), can be found in time $O(\log^d n)$.
\item The set $P\cap C_l(p)$ (resp. $P\cap \bar{C}_l(p)$) of points is the union of $O(\log^d n)$ sets $R_j$ (resp. $B_j$), where the subscript $j$ is such that $p\in B_j$ (resp. $p\in R_j$). 
\end{itemize}
For a set of $n$ moving points, where the trajectories are given by polynomials of degree bounded by a constant, the RBRT ${\cal T}_l$ can be maintained by processing $O(n^2)$ events, each in worst-case time $O(\log^d n)$.
\end{theorem}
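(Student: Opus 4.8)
The plan is to realize each polyhedral cone $C_l$ as a dominance region in the skewed coordinate system $(u_1,\dots,u_d)$ and then build a standard multi-level range tree on these coordinates, but keyed on ranks rather than values so that the tree skeleton is fixed. First I would record the geometric reduction. Since $C_l$ is the intersection of the half-spaces $f_i^+=\{x:\langle x,u_i\rangle\ge 0\}$, a point $q$ lies in $C_l(p)$ exactly when $\langle q-p,u_i\rangle\ge 0$ for every $i$, i.e. when $u_i(q)\ge u_i(p)$ for all $i$. Thus $q\in C_l(p)$ iff $q$ dominates $p$ in all $d$ coordinates, and the reflected cone $\bar C_l(q)$ is precisely the ``dominated-by'' orthant, so the equivalence $q\in C_l(p)\iff p\in\bar C_l(q)$ comes for free from the symmetry $u_i(q)\ge u_i(p)\iff u_i(p)\le u_i(q)$. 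This is the third listed property, and it reduces reporting $P\cap C_l(p)$ to a $d$-dimensional one-sided (orthant) range query.

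For the tree itself I would build, level by level, a balanced binary tree whose leaves are the $n$ ranks in the $u_i$-coordinate, attaching at each node the associated $(i+1)$-level structure on its canonical subset. Because the skeleton is indexed by ranks in $\{1,\dots,n\}$, it stays balanced independently of the point positions; this is the feature that later yields worst-case rather than amortized update bounds. A point $q$ lies in $O(\log n)$ canonical subsets at each level, and since these nest multiplicatively across the $d$ levels, $q$ belongs to $O(\log^d n)$ level-$d$ canonical subsets. Summing over the $n$ points gives $O(n\log^d n)$ incidences, from which the $O(n\log^d n)$ storage and construction bounds follow once the pair representation $\Psi_l$ is materialized.

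The combinatorial core, and the step I expect to be the main obstacle, is defining $\Psi_l$ and proving the existence-and-uniqueness property. I would let $R_j$ be the canonical subset of a level-$d$ node $\mu$, and let $B_j$ be the set of apices $p$ for which $\mu$ is one of the canonical nodes selected by the one-sided query $\prod_i[u_i(p),\infty)$. Because a one-sided range is covered by $O(\log n)$ canonical nodes at each level, such a query selects $O(\log^d n)$ level-$d$ nodes; hence each $p$ lies in $O(\log^d n)$ sets $B_j$, matching the fourth property. The delicate point is uniqueness: for a fixed dominating pair $(p,q)$, the canonical decomposition of $[u_1(p),\infty)$ partitions its selected points, so $q$ falls into exactly one selected node $v_1$; recursing into the associated structure of $v_1$ with $[u_2(p),\infty)$, and so on, produces a unique chain $v_1,\dots,v_d$ and thus a unique level-$d$ node $\mu$ with $p\in B(\mu)$ and $q\in R(\mu)$. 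The remaining claims — finding all $B_j$ or $R_j$ containing a given $p$ in $O(\log^d n)$ time, and expressing $P\cap C_l(p)$ as a union of $O(\log^d n)$ sets $R_j$ — then reduce to reading off these $O(\log^d n)$ selected canonical nodes during a single tree descent.

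Finally, for the kinetic bound I would observe that the only combinatorial changes are rank swaps. The order of the points in a fixed coordinate changes only when some difference $u_i(p(t))-u_i(q(t))$ vanishes; this is a polynomial of degree at most $s=O(1)$, so each of the $\binom{n}{2}$ point pairs triggers $O(1)$ swaps per coordinate and, since $d=O(1)$, there are $O(n^2)$ events in total. At such an event two adjacent ranks are exchanged, and because the skeleton is fixed no rebalancing is required; only the $O(\log^d n)$ canonical subsets and associated structures containing the two swapped points need updating, which costs $O(\log^d n)$ in the worst case and establishes the claimed per-event processing time.
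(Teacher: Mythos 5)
Your reconstruction is correct, and it matches the source of this result: the paper itself gives no proof of this theorem---it is quoted directly from Abam and de Berg~\cite{Abam:2011:KSX:1971362.1971367}---and your argument (reducing cone membership $q\in C_l(p)$ to coordinate-wise dominance in the $u_i$-coordinates, building a rank-based multi-level tree with a fixed skeleton, defining the pairs $(B_j,R_j)$ via canonical nodes of one-sided queries with uniqueness following from the disjointness of the canonical decomposition at each level, and bounding events by roots of bounded-degree difference polynomials with $O(\log^d n)$ worst-case update cost since no rebalancing occurs) is precisely the construction and analysis in that cited work. Nothing further is needed.
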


\paragraph{Complexity of the $k$-level.} Consider a set of $n$ moving points, where the $y$-coordinate $y_i(t)$ of each point $p_i$ is a polynomial function of at most constant degree $s$. The \textit{$k$-level} of these polynomial functions is a set of points $q\in \mathbb{R}^2$ such that each point $q$ lies on a polynomial function, and such that it is above exactly $k-1$ other polynomial functions; Figure~\ref{fig:k_level} depicts the $3$-level and breakpoints on the $3$-level of four polynomials. The $k$-level tracks the $k^{th}$ lowest point with respect to $y$-axis.

Theorem~\ref{the:totallyDFcomplexity} gives the complexity of the $1$-level (\ie, the number of breakpoints on the lower envelope) for a set of polynomial functions.  

\begin{theorem}\label{the:totallyDFcomplexity}{\tt \cite{Pettie:2013:SBD:2493132.2462390,Agarwal:1995:DSG:868483}}
The number of breakpoints on the $1$-level of $n$ totally-defined (resp. partially-defined), continuous, univariate  functions, such that each pair of them intersects at most $s$ times, is at most $\lambda_s(n)$ (resp. $\lambda_{s+2}(n)$). The sharp bounds on $\lambda_s(n)$ are as follows:

\[
\lambda_s(n) =n\beta(n) =
\begin{cases}
        {n},  & \text{for $s=1$};\\
        {2n-1},  & \text{for $s=2$};\\
        {2n\alpha(n)+O(n)},  & \text{for $s=3$};\\
        {\Theta(n2^{\alpha(n)})},  & \text{for $s=4$};\\
        {\Theta(n\alpha(n)2^{\alpha(n)})},  & \text{for $s=5$};\\
        {n2^{(1+o(1))\alpha^t(n)/t!}}, & \text{for $s\geq 6$};
\end{cases}
\]
here $t={\lfloor {(s-2)/2}\rfloor}$ and  $\alpha(n)$ denotes the inverse Ackermann function.
\end{theorem}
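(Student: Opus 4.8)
The plan is to reduce this geometric statement about the $1$-level (the lower envelope) to a purely combinatorial bound on the maximum length of a Davenport--Schinzel sequence, and then invoke the known sharp bounds on that length. First I would encode the lower envelope as a sequence $\sigma$ over the alphabet $\{1,\dots,n\}$: sweeping the parameter axis from left to right, record the index of the function currently attaining the minimum, collapsing each maximal run of a repeated index to a single symbol. Then each symbol of $\sigma$ corresponds to one edge of the envelope, the number of breakpoints is exactly $|\sigma|-1$, and no two consecutive symbols of $\sigma$ are equal by construction; so it remains only to bound $|\sigma|$.

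The heart of the reduction is to show that $\sigma$ is a Davenport--Schinzel sequence of order $s$, i.e.\ that it contains no alternating subsequence $a\,b\,a\,b\cdots$ of length $s+2$. I would argue this by restricting $\sigma$ to any pair of symbols $a,b$: each time the minimum switches between $f_a$ and $f_b$ in this restriction, continuity forces the two graphs to cross, and since they cross at most $s$ times the restriction can alternate at most $s+1$ times, ruling out an alternation of length $s+2$. Hence $\sigma$ is such a sequence on $n$ symbols and $|\sigma|\le\lambda_s(n)$, giving the totally-defined bound. For partially-defined functions the same crossing argument applies, but a function may now leave the envelope and re-enter it because the competing function's domain has ended rather than because of a crossing; accounting for the two domain endpoints raises the admissible alternation length by two, so the envelope sequence becomes a Davenport--Schinzel sequence of order $s+2$ and is bounded by $\lambda_{s+2}(n)$, as claimed.

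It then remains to establish the stated sharp bounds on $\lambda_s(n)$ itself. The two smallest cases are elementary: avoiding $aba$ forces every symbol to occur at most once, so $\lambda_1(n)=n$, and a short induction on the number of distinct symbols gives $\lambda_2(n)=2n-1$. I expect the main obstacle to lie entirely in the remaining cases, which are the substance of the Davenport--Schinzel theory: the near-linear bound $\lambda_3(n)=2n\alpha(n)+O(n)$, the bounds $\Theta(n2^{\alpha(n)})$ and $\Theta(n\alpha(n)2^{\alpha(n)})$ for $s=4,5$, and the general asymptotic $n\,2^{(1+o(1))\alpha^t(n)/t!}$ with $t=\lfloor(s-2)/2\rfloor$ for $s\ge 6$, each paired with a matching lower-bound construction. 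These require the intricate recursive decomposition of Davenport--Schinzel sequences and the delicate analysis relating their growth to the inverse Ackermann function; rather than re-derive them I would import these results directly~\cite{Pettie:2013:SBD:2493132.2462390,Agarwal:1995:DSG:868483}, since the geometric-to-combinatorial reduction above is routine by comparison.
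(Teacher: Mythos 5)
Your proposal is correct and matches how the paper treats this statement: the paper gives no proof of its own but imports the result from the cited references, and your reduction of the envelope to a Davenport--Schinzel sequence of order $s$ (order $s+2$ in the partially-defined case) followed by importing the sharp bounds on $\lambda_s(n)$ is exactly the standard argument those references supply. No gap to report; the only hand-wavy spot is the ``$+2$ for domain endpoints'' step, whose careful bookkeeping is nontrivial but is precisely the classical result you cite.
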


\begin{figure}[h]
\centering
\includegraphics[scale=1]{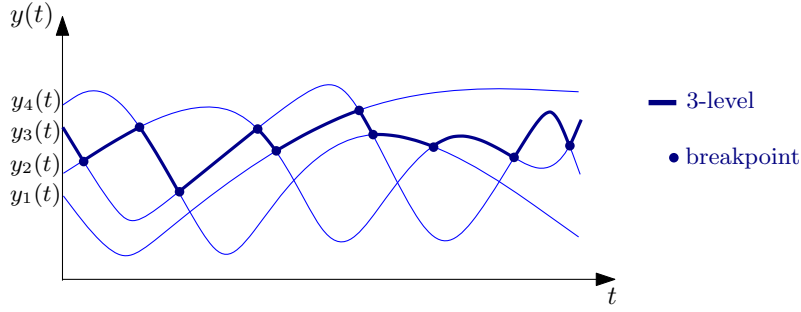}
\caption{The $3$-level of a set of four moving points.}
\label{fig:k_level}
\end{figure}

The following theorem gives the current bounds on the complexity of the $k$-level.

\begin{theorem}\label{the:k_levelComplexity}{\tt \cite{AACS1998,chanii2005,Chan:2008:LAC:1377676.1377691,SharirM1991}}
The complexity of the $k$-level of a set of $n$ partially-defined polynomial functions, such that each pair of them intersects at most $s$ times, is as follows:
\[
\phi(n) = 
\begin{cases}
        {O(n^{3/2} \log n)},  & \text{for $s=2$};\\
        {O(n^{5/3} \text{poly}\log n)},  & \text{for $s=3$};\\
        {O(n^{31/18} \text{poly}\log n)},  & \text{for $s=4$};\\
        {O(n^{161/90-\delta})},  & \text{for $s=5$, for some constant $\delta>0$};\\
        {O(n^{2-1/2s-\delta_s})},  & \text{for odd $s$, for some constant $\delta_s>0$};\\
        {O(n^{2-1/2(s-1)-\delta_s})}, & \text{for even $s$, for some constant $\delta_s>0$}.
\end{cases}
\]
A bound $f(n)$ of $\phi(n)$ can be converted to the $k$-sensitive bound $O(f(k)(n/k)\beta(n/k))$. The complexity of the $(\leq k)$-level is $O(kn\beta(n/k))$. 
\end{theorem}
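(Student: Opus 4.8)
The per-value-of-$s$ estimates on $\phi(n)$ (the $n^{3/2}\log n$, $n^{5/3}\,\mathrm{poly}\log n$, and so on) are the deep, essentially optimal single-level bounds established separately in \cite{SharirM1991,AACS1998,chanii2005,Chan:2008:LAC:1377676.1377691}; I would not attempt to reprove these and would simply quote them. My plan instead focuses on the two structural reductions at the end of the statement: the $(\le k)$-level bound $O(kn\beta(n/k))$, and the $k$-sensitive conversion of a bound $f(n)$ into $O(f(k)(n/k)\beta(n/k))$. Both follow from general arrangement machinery together with the lower-envelope estimate of Theorem~\ref{the:totallyDFcomplexity}.

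For the $(\le k)$-level bound I would run the Clarkson--Shor random-sampling argument. Let $R$ be a random subsample in which each of the $n$ functions is retained independently with probability $p=1/k$, so that $E|R|=n/k$. A breakpoint $v$ lying at level $j$ (that is, with exactly $j$ functions passing strictly below $v$ at its abscissa) is defined by two functions, and it appears on the lower envelope of $R$ precisely when both defining functions are sampled and none of the $j$ functions below it are; this event has probability $p^2(1-p)^j$. Since the lower envelope of $m$ partially-defined functions with at most $s$ pairwise crossings has complexity $\lambda_{s+2}(m)=m\beta(m)$ by Theorem~\ref{the:totallyDFcomplexity}, concavity and Jensen's inequality give $E[\text{complexity of the lower envelope of }R]=O((n/k)\beta(n/k))$. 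Restricting the expectation to vertices with $j\le k$ and using $(1-1/k)^j\ge(1-1/k)^k=\Omega(1)$, I obtain $p^2\cdot\Omega(1)\cdot(\text{number of vertices at level}\le k)=O((n/k)\beta(n/k))$, whence the $(\le k)$-level has complexity $O(k^2)\cdot O((n/k)\beta(n/k))=O(kn\beta(n/k))$; endpoints of the partially-defined functions contribute only an additional $O(n)$ term.

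For the $k$-sensitive conversion I would reduce the exact $k$-level of $n$ functions to single-level problems on $O(k)$ functions by means of a shallow cutting. Given the bound $O(kn\beta(n/k))$ just proved for the $(\le k)$-level, one can build a shallow $(1/r)$-cutting of the $(\le k)$-level consisting of $O((n/k)\beta(n/k))$ cells, each crossed by only $O(k)$ of the functions, whose union covers the part of the arrangement at level $\le k$. Inside a single cell the $k$-level is a connected piece of one level in an arrangement of $O(k)$ functions and hence contributes $O(f(k))$ breakpoints by the definition of $f$ (using that $f$ grows polynomially, so that $f(O(k))=O(f(k))$). Summing over the $O((n/k)\beta(n/k))$ cells yields the claimed bound $O(f(k)(n/k)\beta(n/k))$, and the $k$-sensitive form of each per-$s$ entry follows by substituting the corresponding $f$.

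The main obstacle is the construction and size analysis of the shallow cutting for an arrangement of curves rather than lines: one must verify that cells crossed by $O(k)$ functions suffice to cover the $(\le k)$-level and that only $O((n/k)\beta(n/k))$ of them are needed. This is exactly the step that consumes the $(\le k)$-level estimate from the second paragraph and that supplies the extra $\beta(n/k)$ factor, through the Davenport--Schinzel behaviour of the per-cell boundaries and their merge. Avoiding circularity here --- using the shallow-level \emph{count} to bound the shallow-cutting \emph{size}, rather than trying to bootstrap the exact $k$-level directly --- is the crux, and is precisely why the $(\le k)$-level bound must be established first.
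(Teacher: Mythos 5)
The paper does not prove this theorem at all: it is imported verbatim as background, with both the per-$s$ bounds and the two structural claims (the $k$-sensitive conversion and the $(\le k)$-level bound) attributed to the cited references \cite{SharirM1991,AACS1998,chanii2005,Chan:2008:LAC:1377676.1377691}, which matches your decision to quote the per-$s$ bounds. Your two sketches --- the Clarkson--Shor sampling argument for the $O(kn\beta(n/k))$ bound on the $(\le k)$-level, and the shallow-cutting decomposition into $O((n/k)\beta(n/k))$ cells each crossed by $O(k)$ curves for the $k$-sensitive conversion --- are faithful outlines of the standard proofs in those same references, so your proposal is correct and consistent with (indeed more detailed than) the paper's treatment.
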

\paragraph{Maintaining the $k^{th}$ lowest point.} Assume we want to maintain the \textit{$k^{th}$ lowest point} with respect to the $y$-axis among a set $P$ of moving points, where insertions and deletions into the point set $P$ are allowed; the $y$-coordinates of newly inserted points are polynomials of degrees bounded by some constant $s$. 

Using a (dynamic and) \textit{kinetic tournament tree}, one can easily maintain the lowest point. The following summarizes the complexity of this data structure.

\begin{theorem}\label{the:KineticTT} {\tt (Theorem 3.1. of \cite{Agarwal:2008:KDD:1435375.1435379})}
Assume one is given a sequence of $m$ insertions and deletions into a kinetic tournament tree whose maximum size at any time is $n$ (assuming $m\geq n$). The tournament tree generates $O(m\beta_{s+2}(n)\log n)$ events for a total cost of $O(m\beta_{s+2}(n)\log^2 n)$. Each point participates in $O(\log n)$ certificates, so each update/event can be handled in time $O(\log^2 n)$. A kinetic tournament tree on $n$ elements can be constructed in $O(n)$ time.  
\end{theorem}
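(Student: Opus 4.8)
The plan is to realize the kinetic tournament as a balanced binary tree in which each leaf holds one element, each internal node $v$ stores the current \emph{winner} (the element of minimum $y$-coordinate among the leaves of its subtree), and this winner is certified by a single comparison certificate between the winners of $v$'s two children. The first step is to observe that, as time varies, the winner stored at $v$ traces out exactly the lower envelope (the $1$-level) of the $y$-coordinate functions of all elements that ever reside in the subtree of $v$. Since insertions and deletions make these functions \emph{partially defined} (each element is present only during its active interval), and since two such functions, being polynomials of degree at most $s$, cross at most $s$ times, Theorem~\ref{the:totallyDFcomplexity} bounds the number of breakpoints of this envelope — equivalently, the number of winner changes at $v$ — by $\lambda_{s+2}(n_v) = n_v\beta_{s+2}(n_v)$, where $n_v$ is the number of elements that ever appear in $v$'s subtree.

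The second step is to sum this over all nodes. The key combinatorial estimate is $\sum_v n_v = O(m\log n)$: each of the $m$ inserted elements belongs to only the $O(\log n)$ nodes lying on the root-to-leaf path through its leaf, so the total number of (element, node) incidences created over the entire history is $O(m\log n)$. Using the monotonicity of $\beta_{s+2}$ together with $n_v\le n$, I would then conclude $\sum_v \lambda_{s+2}(n_v)\le \beta_{s+2}(n)\sum_v n_v = O(m\beta_{s+2}(n)\log n)$, giving the claimed bound on the number of events.

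For the processing cost, locality, and construction, I would charge all work to winner changes. Each winner change at $v$ forces the single certificate at $v$'s parent to be recomputed (and, if $v$'s own certificate failed, a replacement certificate created at $v$), each recomputation being a constant number of priority-queue operations costing $O(\log n)$ on a queue of $O(n)$ certificates; multiplying the $O(m\beta_{s+2}(n)\log n)$ winner changes by $O(\log n)$ yields the total cost $O(m\beta_{s+2}(n)\log^2 n)$. Because an element is the winner of only the $O(\log n)$ nodes on a contiguous path above its leaf, it participates in $O(\log n)$ certificates; hence a single certificate failure can cascade up through $O(\log n)$ ancestors and be handled in worst-case time $O(\log^2 n)$, matching the per-event bound. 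Finally, the tree, its winners, and the initial priority queue of failure times can all be built bottom-up in $O(n)$ time, using linear-time heap construction for the queue.

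The main obstacle is making the incidence bound $\sum_v n_v = O(m\log n)$ hold \emph{in the presence of rebalancing}: naively, rotations that keep the tree balanced under insertions and deletions can migrate an element through many subtrees during its lifetime, inflating the incidence count, and can also force recomputation of winners and certificates along the affected paths. I would resolve this by adopting a balancing scheme whose amortized restructuring cost per update is $O(\log n)$ (for instance a weight-balanced tree), and by arguing that each rebalancing operation creates only $O(\log n)$ new (element, node) incidences and triggers only $O(\log n)$ certificate recomputations; the $m$ updates together then preserve both the $O(m\log n)$ incidence bound underlying the event count and the $O(\log^2 n)$ worst-case cost of handling an individual update or event.
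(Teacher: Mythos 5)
The paper itself offers no proof of this statement: it is imported verbatim as Theorem~3.1 of \cite{Agarwal:2008:KDD:1435375.1435379}, so the only fair comparison is with the proof in that source. Your outline follows essentially the same route as Agarwal, Kaplan, and Sharir: the winner stored at each internal node traces the lower envelope ($1$-level) of partially defined polynomial functions of degree at most $s$, so Theorem~\ref{the:totallyDFcomplexity} bounds the winner changes at a node by $\lambda_{s+2}(n_v)=n_v\beta_{s+2}(n_v)$; the incidence bound $\sum_v n_v=O(m\log n)$ then converts this into the claimed $O(m\beta_{s+2}(n)\log n)$ event count; charging $O(\log n)$ priority-queue work per winner change gives the total cost $O(m\beta_{s+2}(n)\log^2 n)$; the root-to-leaf path argument gives $O(\log n)$ certificates per element and the $O(\log^2 n)$ worst-case handling time; and bottom-up construction is $O(n)$. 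All of that is sound.

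The genuine gap is exactly where you located it, and your proposed repair does not work as stated. The claim that ``each rebalancing operation creates only $O(\log n)$ new (element, node) incidences'' is false: a single rotation at a node whose subtree has size $s$ places an entire subtree of $\Theta(s)$ elements under a new ancestor, and hence starts $\Theta(s)$ new partial functions contributing to that ancestor's envelope --- near the root this is $\Theta(n)$ incidences, not $O(\log n)$; the same objection applies per update, since one insertion can force a rotation high in the tree. What actually rescues the bound, and what Agarwal, Kaplan, and Sharir use, is the defining property of weight-balanced (BB[$\alpha$]) trees: between two consecutive rebalancing operations at a node of weight $s$, at least $\Omega(s)$ updates must occur inside its subtree. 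Charging the $\Theta(s)$ new incidences of each rotation to those $\Omega(s)$ updates costs each update $O(1)$ per ancestor level, i.e., $O(\log n)$ in total, so $\sum_v n_v=O(m\log n)$ survives rebalancing in an amortized sense; this is also what keeps the restructuring cost within the stated totals. The property genuinely matters: with a red-black tree, for instance, a rotation at a node of large weight can follow $O(1)$ updates below it, and the incidence sum --- and with it the event bound --- would no longer follow from your argument. So your choice of structure is the right one, but the per-operation quantitative claim you rest on must be replaced by this amortization over updates.
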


To maintain the $k^{th}$ lowest point (for any $k\geq 1$) over time, we need to track the order of the moving points, so we use a (dynamic and) \textit{kinetic sorted list}. Each newly inserted point into a kinetic sorted list can exchange its order with other points at most $O(n)$ times.  Thus it is easy to obtain the following.

\begin{theorem}\label{the:KineticSL}
Given a sequence of $m$ insertions and deletions into a kinetic sorted list whose maximum size at any time is $n$. The kinetic sorted list generates $O(mn)$ events. Each point participates in $O(1)$ certificates, so each update/event can be handled in time $O(\log n)$. A kinetic sorted list on $n$ elements can be constructed in $O(n\log n)$ time.  
\end{theorem}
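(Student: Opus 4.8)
The plan is to realize the kinetic sorted list as the points of $P$ stored in a linear list sorted by the relevant coordinate, certified by the $O(n)$ comparisons between consecutive elements. I would first fix the certificate set: for each pair of elements that are currently adjacent in the sorted order we keep one \emph{ordering certificate} asserting which one precedes the other. At any fixed time there are at most $n-1$ such certificates, and a given point appears only in the (at most two) certificates shared with its immediate predecessor and successor; this immediately gives the claim that each point participates in $O(1)$ certificates.

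Next I would describe the update mechanism and bound its cost. The failure time of an ordering certificate for two points is the next time their coordinate difference, a polynomial of degree at most $s$, changes sign, which is a root of a bounded-degree polynomial and is computed in $O(1)$ time. When such a certificate fails, the two points are adjacent and exchange order; I swap them in the list and replace the $O(1)$ affected certificates (the one between the swapping pair and the two certificates joining them to their outer neighbours) by newly computed ones, each insertion into or deletion from the priority queue of failure times costing $O(\log n)$ since the queue holds $O(n)$ certificates. An insertion locates the new point's position by binary search in $O(\log n)$, splices it in, and creates and destroys $O(1)$ certificates; a deletion is symmetric. Hence every update and every event is handled in $O(\log n)$ time. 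Building the structure on $n$ initial points amounts to sorting them by the coordinate in $O(n\log n)$ time and then creating the $O(n)$ initial certificates and building the priority queue in $O(n)$ time, for $O(n\log n)$ total.

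The heart of the proof is the $O(mn)$ bound on the number of events. Every event is an adjacent transposition, which can occur only when the two points' trajectories cross; because each coordinate is a polynomial of degree at most $s$, the coordinate difference of any fixed pair of points is a polynomial of degree at most $s$ and so changes sign at most $s=O(1)$ times while both points are present. It therefore suffices to count pairs of points whose lifetimes in the list overlap, since only such pairs can ever be adjacent and swap, and each such pair contributes $O(1)$ events. I would charge each coexisting pair to whichever of its two points was inserted later: at the moment a point is inserted at most $n-1$ other points are present, so each inserted point is charged for coexistence with at most $n-1$ earlier points, i.e., is involved in $O(n)$ order exchanges. Summing this $O(n)$ bound over the $m$ operations yields $O(mn)$ events, matching the statement.

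The step I expect to be the main obstacle is precisely this counting, because a naive bound that multiplies the number of distinct points ever present, which is $O(m)$, by the per-pair crossing bound would give $O(m^2)$ rather than $O(mn)$; the savings come entirely from the fact that at most $n$ points are simultaneously active, so the charging argument must be set up so that a point pays only for the points already present when it arrives, the crossings with later-arriving points being charged to those later points. The remaining pieces, namely the certificate bookkeeping, the $O(\log n)$ per-event cost, and the $O(n\log n)$ construction, are routine once the priority queue of failure times from the KDS framework is in place.
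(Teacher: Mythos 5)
Your proposal is correct and takes essentially the same route as the paper, which disposes of the whole theorem with the single observation that each newly inserted point can be held accountable for $O(n)$ order exchanges, the rest being standard KDS bookkeeping. Your charging argument --- attributing each adjacent swap to the later-inserted point of the pair, which can coexist with at most $n-1$ already-present points, each contributing $O(1)$ crossings --- is precisely the formalization of that one-line justification, and it correctly identifies why the bound is $O(mn)$ rather than the naive $O(m^2)$.
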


\section{Key Lemmas: Relationships}\label{sec:Relationship}
Here we provide a key insight to obtain the relationships between the proximity problems that are stated in Section~\ref{sec:ProbStatment}.

Consider a polyhedral cone $C_l\in C$ with respect to $\overrightarrow{v}$, where  $C=\{C_0,...,C_{c-1}\}$ is a set of polyhedral cones of opening angle $\theta$ with their apex at the origin $o$ that together cover $\mathbb{R}^d$ (see Lemma~\ref{the:NumofPolyCones}). From now on, we assume that $\theta\leq \pi/3$. Denote by $x_l$ the cone axis of $C_l$ (\ie, the vector in the direction of the unit vector $\overrightarrow{v}$ of $C_l$, $0\leq l\leq c-1$; see Section~\ref{sec:Preliminaries}). Recall that $C_l(p)$ denote a translated copy of $C_l$ with apex at $p$. Denote by $L(P\cap C_l(p))$ the list of the points in $P\cap C_l(p)$, sorted by increasing order of their $x_l$-coordinates.

\begin{lemma}\label{the:keyLemma2}
Let $p_i$ be the $i^{th}$ nearest neighbor of $p$ among a set $P$ of points in $\mathbb{R}^d$, and let $C_{l}(p_i)$ be the cone of $p_i$ that contains $p$. Then point $p$ is among the first $i$ points in $L(P\cap C_l(p_i))$.
\end{lemma}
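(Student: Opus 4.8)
The plan is to reduce the statement to a single cone inequality and then count. Fix the cone $C_l(p_i)$ with apex $p_i$ and axis $x_l$, and recall $\theta\le\pi/3$. I would first establish the geometric claim: if $q\in P\cap C_l(p_i)$, $q\neq p_i$, and the $x_l$-coordinate (projection on the axis) of $q$ is strictly smaller than that of $p$, then $|pq|<|p\,p_i|$. Intuitively, any point of the cone lying ``before'' $p$ along the axis is strictly closer to $p$ than the apex $p_i$ is. Granting this, the lemma follows by counting: every point of $P$ other than the apex that precedes $p$ in the sorted list $L(P\cap C_l(p_i))$ is strictly closer to $p$ than $p_i$, hence is one of the first $i-1$ nearest neighbors $p_1,\dots,p_{i-1}$, of which there are at most $i-1$. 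Thus at most $i-1$ points precede $p$, so $p$ occupies one of the first $i$ positions. (Here it is essential that the apex $p_i$ itself is not counted among the cone's points, i.e. a point does not lie in its own cone; otherwise $p_i$ would add one spurious predecessor.)

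To prove the claim I would translate $p_i$ to the origin and set $a=p-p_i$, $b=q-p_i$. Let $\alpha,\beta$ be the angles that $a,b$ make with the axis $x_l$ and let $\gamma$ be the angle between $a$ and $b$. Since both rays lie in a polyhedral cone of opening angle $\theta$, we have $\alpha,\beta\le\theta/2$, and by the spherical triangle inequality $\gamma\le\alpha+\beta\le\theta\le\pi/3$. The projection hypothesis reads $|b|\cos\beta<|a|\cos\alpha$, and since $|pq|^2=|a-b|^2=|a|^2+|b|^2-2|a||b|\cos\gamma$, it suffices to verify $|b|<2|a|\cos\gamma$. I would derive this from $|b|<|a|\cos\alpha/\cos\beta$ together with the trigonometric inequality $\cos\alpha\le 2\cos\beta\cos\gamma$; using $\cos\gamma\ge\cos(\alpha+\beta)$ and the identity $2\cos\beta\cos(\alpha+\beta)=\cos\alpha+\cos(\alpha+2\beta)$ reduces the latter to $\cos(\alpha+2\beta)\ge 0$, which holds precisely because $\alpha+2\beta\le 3\theta/2\le\pi/2$. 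This is the only place the hypothesis $\theta\le\pi/3$ enters, and it is sharp.

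The step I expect to be the main obstacle is exactly this cone inequality, for two reasons. First, in dimension $d\ge 3$ the axis $x_l$ need not lie in the plane spanned by $a$ and $b$, so one cannot reason purely inside the triangle $p_i\,p\,q$; the spherical triangle inequality $\gamma\le\alpha+\beta$ is what connects the per-vector angle bounds to the angle $\gamma$ appearing in the law of cosines, and one must check that all relevant angles lie in $[0,\pi]$ so that $\cos$ is monotone. Second, care is needed to turn the strict projection inequality into a strict distance inequality: this requires $|b|>0$, i.e. $q\neq p_i$, which is why the apex must be excluded. I would close with the minor bookkeeping for ties: the argument bounds the number of cone points whose projection is strictly smaller than that of $p$, which is all that ``among the first $i$'' needs, and under distinct interpoint distances the set $\{p_1,\dots,p_{i-1}\}$ has exactly $i-1$ elements.
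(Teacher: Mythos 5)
Your proof is correct, and its skeleton matches the paper's: both reduce the lemma to the single‑cone fact that a point of $P\cap C_l(p_i)$ other than the apex whose $x_l$-projection is smaller than that of $p$ must be strictly closer to $p$ than $p_i$ is, and both then count. The difference lies in how that fact is established and in the direction of the bookkeeping. The paper deletes $p_1,\dots,p_{i-1}$, notes that $p_i$ is then the nearest neighbor of $p$ in the remaining set $P'$, and derives a contradiction in the triangle $p\,p_i\,r$: from $|pp_i|<|pr|$ it gets $\angle pp_ir>\angle prp_i$, which contradicts $\angle pp_ir\le\pi/3$ together with the bound $\angle prp_i>\pi/3$, the latter being quoted (via Lemma 8.1 of Agarwal et~al.) rather than proved; it then re‑inserts the $i-1$ deleted points and observes that even if all of them land in the cone ahead of $p$, point $p$ still sits among the first $i$ entries. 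You instead prove the cone inequality unconditionally by the law of cosines, reducing it to $\cos\alpha\le 2\cos\beta\cos\gamma$ and verifying this via the spherical triangle inequality $\gamma\le\alpha+\beta$, the product‑to‑sum identity, and $\alpha+2\beta\le 3\theta/2\le\pi/2$; the counting then becomes direct: every predecessor of $p$ in $L(P\cap C_l(p_i))$ is closer to $p$ than $p_i$, hence lies in $\{p_1,\dots,p_{i-1}\}$, so there are at most $i-1$ of them. The two arguments are contrapositives of one another, but yours buys three things the paper leaves implicit: it is self‑contained where the paper leans on a citation; it makes the $d\ge 3$ subtlety explicit (the cone axis need not lie in the plane of the triangle, which is exactly what the unproved assertion $\angle prp_i>\pi/3$ hides); and it pins down two conventions the paper uses silently, namely that the apex is excluded from $L(P\cap C_l(p_i))$ (otherwise the paper's claim that $p$ has minimum $x_l$-coordinate in $P'\cap C_l(p_i)$ would already be false, since $p_i\in P'$) and that $q\ne p_i$ is what yields strictness. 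Your sharpness remark at $\theta=\pi/3$ is also correct; the only issue both you and the paper gloss over is projection ties, which the kinetic setting handles by general position.
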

\begin{proof}
Let $P'=P\backslash \{p_1,...,p_{i-1}\}$. Then point $p_i$ is the closest point to $p$ among the points in $P'$; see Figure~\ref{fig:ProofKeyLemma}(a). It can be proved by contradiction that point $p$ has the minimum $x_l$-coordinate among the points in $P'\cap C_l(p_i)$ (Lemma 8.1 of~\cite{Agarwal:2008:KDD:1435375.1435379}): Assume there is a point $r\in P$ inside the cone $C_l(p_i)$ whose $x_l$-coordinate is less than the $x_l$-coordinate of $p$; see Figure~\ref{fig:ProofKeyLemma}(b) for an example where $i=3$. Consider the triangle $pp_ir$. Since $p_i$ is the closest point to $p$ among the points in $P'$, $|pp_i|<|pr|$, which implies that angle $\angle pp_ir>\angle prp_i$. This is a contradiction, because $\angle pp_ir\leq\pi/3$ and $\angle prp_i>\pi/3$.

Now we add the points $p_1,...,p_{i-2}$, and $p_{i-1}$  to the point set $P'$. Consider the worst case scenario that all these  $i-1$ points insert inside the cone $C_l(p_i)$, and that the $x_l$-coordinates of all these points are less than the $x_l$-coordinate of $p$. Then the point $p$ is still among the  first $i$ points in the sorted list  $L(P\cap C_l(p_i))$.
\end{proof}

\begin{figure}[h]
\centering
\includegraphics[scale=1]{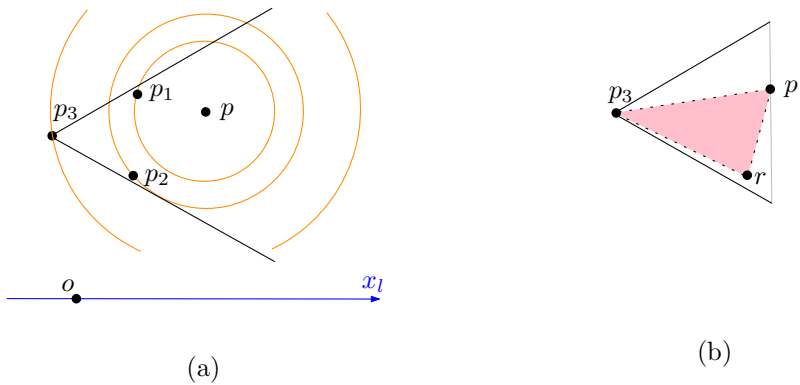}
\caption{Point $p_3$ is the $3^{rd}$ nearest neighbor of $p$. After deleting the points $p_1$ and $p_2$, point $p_3$ is the closest point to $p$; among the points in $C_0(p_3)$, $p$ has the minimum length projection on the bisector $x_0$.}
\label{fig:ProofKeyLemma}
\end{figure}

Consider the \textit{$k$-nearest neighbor graph} (\knng) of a point set $P$, which is constructed by connecting each point in $P$ to all its $k$-nearest neighbors. Let ${\cal K}_l(p)$ be the set of the first $k$ points in the sorted list $L(P\cap C_l(p))$. If we connect each point $p\in P$ to the points in ${\cal K}_l(p)$, for $l=0,...,c-1$, we obtain what we call the \textit{$k$-Semi-Yao graph}\footnote{Rahmati~\etal~\cite{Rahmati2014} called the theta graph the \textit{Semi-Yao graph} because of its close relationship to the Yao graph~\cite{DBLP:journals/siamcomp/Yao82}. Here, we call the generalization of the Semi-Yao graph, with respect to $k$, the \textit{$k$-Semi-Yao graph}.} (\ksyg). The \ksyg~has the following property. 

\begin{lemma}\label{the:NNGsubSYG}
The \knng~of a point set $P$ in $\mathbb{R}^d$ is a subgraph of the \ksyg~of $P$.
\end{lemma}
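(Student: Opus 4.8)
The plan is to show that every edge of the \knng~appears in the \ksyg, which by definition of ``subgraph'' suffices. The key observation is that the two graphs are built from opposite viewpoints: the \knng~records, for each point $p$, which points are \emph{near} $p$, whereas the \ksyg~records, for each point and each cone, which points have \emph{smallest projection} inside that cone. Lemma~\ref{the:keyLemma2} is precisely the bridge between these two viewpoints, and the whole argument is a direct application of it, with the roles of the two endpoints exchanged.

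Concretely, I would fix an arbitrary edge $\{p,p_i\}$ of the \knng, where $p_i$ denotes the $i^{th}$ nearest neighbor of $p$ for some $i\leq k$; such an index exists since edges of the \knng~join each point to its $k$-nearest neighbors. Because the cones translated to apex $p_i$ cover $\mathbb{R}^d$ (Lemma~\ref{the:NumofPolyCones}), the point $p$ lies in some cone $C_l(p_i)$ of $p_i$, and I would single out this cone. I then apply Lemma~\ref{the:keyLemma2} with $p_i$ playing the role of the ``center'' point and $C_l(p_i)$ the cone containing $p$: the lemma guarantees that $p$ is among the first $i$ points of the sorted list $L(P\cap C_l(p_i))$. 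Since $i\leq k$, the point $p$ is a fortiori among the first $k$ points of this list, so $p\in {\cal K}_l(p_i)$. By the construction of the \ksyg, the point $p_i$ is joined to every point of ${\cal K}_l(p_i)$, in particular to $p$; hence the edge $\{p,p_i\}$ is present in the \ksyg, completing the argument.

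The reasoning contains no genuine obstacle once Lemma~\ref{the:keyLemma2} is in hand. The only point requiring care is the \emph{reversal of roles}: the \knng~edge is created ``at $p$'' looking outward to its neighbor $p_i$, but its membership in the \ksyg~must be certified ``at $p_i$'' looking back at $p$ inside one of $p_i$'s cones. Making sure one invokes the key lemma from the correct endpoint, together with the trivial inequality $i\leq k$ that upgrades ``among the first $i$'' to ``among the first $k$,'' is all that the proof needs.
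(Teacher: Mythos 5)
Your proof is correct and follows essentially the same route as the paper: both arguments reduce the claim to Lemma~\ref{the:keyLemma2} applied at the neighbor's endpoint, concluding that $p\in {\cal K}_l(p_i)$ and hence that every \knng~edge appears in the \ksyg. If anything, your version is slightly more complete, since the paper phrases the argument only for the $k^{th}$ nearest neighbor $p_k$ while you explicitly handle every $i^{th}$ neighbor with $i\leq k$ via the observation that ``among the first $i$'' implies ``among the first $k$.''
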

\begin{proof}
Lemma~\ref{the:keyLemma2} gives a necessary condition for $p_k$ to be the $k^{th}$ nearest neighbor of $p$: $p\in {\cal K}_l(p_k)$, where $l$ is such that $p\in C_l(p_k)$.  Therefore, the edge set of the \ksyg~covers the edges of the \knng.
\end{proof}

Now we obtain the following, for answering \rknn~queries.

\begin{lemma}\label{the:RkNNsUprBnd}
The set of reverse $k$-nearest neighbors of a query point $q\notin P$ is a subset of the union of the sets ${\cal K}_l(q)$, for $l=0,...,c-1$. The number of reverse $k$-nearest neighbors of the query point $q$ is upper-bounded by $O(k)$.
\end{lemma}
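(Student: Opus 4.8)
The plan is to prove the containment $\rknn(q)\subseteq\bigcup_{l=0}^{c-1}{\cal K}_l(q)$ first, and then read off the cardinality bound. Indeed, once the containment is known, Lemma~\ref{the:NumofPolyCones} tells us there are only $c=O(1/\theta^{d-1})=O(1)$ cones, and each set ${\cal K}_l(q)$ contains at most $k$ points by definition, so
$|\rknn(q)|\le\big|\bigcup_{l=0}^{c-1}{\cal K}_l(q)\big|\le ck=O(k)$,
which is the second assertion. So the whole task reduces to the containment.

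For the containment I would avoid redoing any geometry and instead reduce directly to the already-proved Lemma~\ref{the:keyLemma2}, applied to the \emph{augmented} point set $\hat P=P\cup\{q\}$. Fix $p\in\rknn(q)$, so that $|pq|\le|pp_k|$, where $p_k$ is the $k^{th}$ nearest neighbor of $p$ in $P$. The first step is to bound the rank of $q$ as a neighbor of $p$ within $\hat P$: every point of $P$ strictly closer to $p$ than $q$ has distance $<|pq|\le|pp_k|$, and at most $k-1$ points of $P$ can be strictly nearer to $p$ than its $k^{th}$ nearest neighbor $p_k$. Hence $q$ is the $i^{th}$ nearest neighbor of $p$ in $\hat P$ for some $i\le k$, where in case of ties at distance exactly $|pq|$ we choose an ordering that places $q$ no later than position $k$.

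The second step is to invoke Lemma~\ref{the:keyLemma2} with ground set $\hat P$, base point $p$, and $q$ in the role of the $i^{th}$ nearest neighbor $p_i$. Letting $l$ be the index of the cone $C_l(q)$ with apex $q$ that contains $p$, the lemma gives that $p$ is among the first $i$ points of $L(\hat P\cap C_l(q))$. Since $q$ is the apex of $C_l(q)$, it is the reference point and is not counted among the cone's points populating this list, so $L(\hat P\cap C_l(q))=L(P\cap C_l(q))$. Because $i\le k$, this shows $p$ lies among the first $k$ points of $L(P\cap C_l(q))$, i.e.\ $p\in{\cal K}_l(q)$. As $p\in\rknn(q)$ was arbitrary, the containment $\rknn(q)\subseteq\bigcup_{l}{\cal K}_l(q)$ follows.

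I expect the only delicate point to be the ranking argument of the first step: one must check carefully that the inequality $|pq|\le|pp_k|$ forces $q$ into a position at most $k$ among the neighbors of $p$ in $\hat P$, correctly handling equidistant points. Everything else is bookkeeping, and the crucial cone geometry is borrowed wholesale from Lemma~\ref{the:keyLemma2}. As a self-contained alternative that avoids augmenting the point set, one could instead argue by contradiction: if $p\notin{\cal K}_l(q)$ then there are at least $k$ points $r\in P\cap C_l(q)$ with smaller $x_l$-coordinate than $p$, and for each such $r$ the triangle $pqr$ satisfies $\angle pqr\le\theta\le\pi/3$ while $\angle qrp>\pi/3$, whence $|pr|<|pq|$; these $k$ points of $P$ would all be strictly closer to $p$ than $q$, contradicting $|pq|\le|pp_k|$. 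I would present the reduction via $\hat P$ as the primary proof, since it reuses Lemma~\ref{the:keyLemma2} cleanly and needs no new geometric estimates.
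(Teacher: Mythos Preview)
Your proposal is correct and follows essentially the same route as the paper: augment $P$ by $q$, observe that $q$ is the $i^{th}$ nearest neighbor of $p$ in $P\cup\{q\}$ for some $i\le k$, invoke Lemma~\ref{the:keyLemma2} to place $p$ among the first $i$ points of $L(P\cap C_l(q))$, and read off the $O(k)$ bound from $c=O(1)$. The paper's proof is terser and glosses over the ranking argument and the apex-exclusion point that you spell out; your added care there (and the alternative direct triangle-inequality argument) is sound but not a different strategy.
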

\begin{proof}
Assume, among the points in $P\cup\{q\}$, that $q$ is the $i^{th}$ nearest neighbor of some point $p$, where $i\leq k$. There exists a cone $C_l(q)$ of $q$ such that $p\in C_l(q)$. From Lemma~\ref{the:keyLemma2}, $p\in {\cal K}_l(q)$. Therefore, each of the $k$-reverse nearest neighbors of $q$ is in the union of ${\cal K}_l(q)$, $l=0,...,c-1$. 

We assume $d$ is arbitrary but fixed, so $c$ is a constant. Thus the cardinality of the union of ${\cal K}_l(q)$ is $O(k)$, which implies that the number of reverse $k$-nearest neighbors is upper-bounded by $O(k)$. 
\end{proof}
\section{Computing the \ksyg~and All $k$-Nearest Neighbors}\label{sec:ReportKNNs}
Here we first describe how to compute the \ksyg, which will aid in understanding how our kinetic approach works. Then, via a construction of the \ksyg, we give a simple method for reporting all the $k$-nearest neighbors. 

To efficiently construct the \ksyg, we need a data structure to perform the following operation efficiently: For each $p\in P$ and any of its cones $C_l(p)$, $0\leq l\leq c-1$, find ${\cal K}_l(p)$, the set of the first $k$ points in the sorted list $L(P\cap C_l(p))$. Such an operation can be performed by using range tree data structures. For each cone $C_l$, we construct an associated $d$-dimensional range tree ${\cal T}_l$ as follows.

Consider a particular cone $C_l$ with apex at $o$; see Figure~\ref{fig:aConeC_l}(a). The cone $C_l$ is the intersection of $d$ half-spaces $f^+_1,...,f^+_d$  with coordinate axes $u_1,...,u_d$.

The range tree ${\cal T}_l$ is a regular $d$-dimensional range tree based on the $u_i$-coordinates (see~\cite{Berg:2008:CGA:1370949}). The points at level $i$ are sorted at the leaves according to their $u_i$-coordinates. Any $d$-dimensional range tree, \eg, ${\cal T}_l$, uses $O(n\log^{d-1} n)$ space and can be constructed in time $O(n\log^{d-1} n)$, and for any point $r\in \mathbb{R}^d$, the points of $P$ inside the query cone $C_l(r)$ whose sides are parallel to $f_i$, $1\leq i\leq d$, can be reported in time $O(\log^{d-1} n + z)$, where $z$ is the cardinality of the set $P\cap C_l(r)$~\footnote{For a set of stationary points, there are lots of improvements for answering rectangular range queries (\eg, see~\cite{Chan:2011:ORS:1998196.1998198}).}.

Now we add a new level to ${\cal T}_l$, based on the coordinate $x_l$. To find ${\cal K}_l(p)$ in an efficient time, we use the level $(d+1)$ of ${\cal T}_l$, which is constructed as follows: For each internal node $v$ at level $d$ of ${\cal T}_l$, we create a list $L(R(v))$ sorted by increasing order of $x_l$-coordinates of the points in $R(v)$. For the set $P$ of $n$ points in $\mathbb{R}^d$, the modified range tree ${\cal T}_l$, which now is a $(d+1)$-dimensional range tree, uses $O(n\log^d n)$ space and can be constructed in time $O(n\log^d n)$~\cite{Berg:2008:CGA:1370949}.
 
The following establishes the processing time for obtaining a set ${\cal K}_l(p)$. 

\begin{lemma}\label{the:SortingLists}
Given ${\cal T}_l$, the set ${\cal K}_l(p)$ can be found in time $O(\log^d n+k)$.
\end{lemma}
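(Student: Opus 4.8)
The plan is to reduce the computation of ${\cal K}_l(p)$ to a single range query followed by selecting the $k$ smallest keys among a small family of presorted lists. First I would use the first $d$ levels of ${\cal T}_l$ to answer the query for the cone $C_l(p)$, whose facets are parallel to $f_1,\dots,f_d$ and which is therefore an axis-aligned (in the $u$-coordinate system) query region. As for any $d$-dimensional range tree, this query need not report points individually; instead it identifies a canonical decomposition of $P\cap C_l(p)$ into $m=O(\log^d n)$ pairwise-disjoint subsets $R(v_1),\dots,R(v_m)$, each $v_j$ being a node at level $d$, and this identification takes $O(\log^d n)$ time. By the construction of level $d+1$, every $v_j$ already stores the list $L(R(v_j))$ of its points sorted by increasing $x_l$-coordinate, so no additional sorting is required.

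Since the sets $R(v_j)$ partition $P\cap C_l(p)$ and each associated list is sorted by $x_l$, the $k$ points of smallest $x_l$-coordinate taken over all $m$ lists are exactly the first $k$ points of $L(P\cap C_l(p))$, which is precisely ${\cal K}_l(p)$. The remaining task is thus to extract the $k$ smallest keys from $m=O(\log^d n)$ sorted lists. The direct approach is a $k$-way merge: build a min-heap keyed on $x_l$ over the $m$ current front elements, then repeat $k$ times the operation of extracting the minimum, appending it to ${\cal K}_l(p)$, advancing the pointer of the list it came from, and reinserting that list's new front. Building the heap costs $O(m)=O(\log^d n)$ and each of the $k$ steps costs $O(\log m)$, giving a running time of $O(\log^d n+k\log m)=O(\log^d n+k\log\log n)$.

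The main obstacle is shaving the $\log m=\Theta(\log\log n)$ factor off the second term to reach the stated bound $O(\log^d n+k)$. I would handle this with a heap-selection argument in place of repeated extract-min operations. View each sorted list $L(R(v_j))$ as a degenerate binary min-heap (a path), hang the $m$ list heads as the leaves of a balanced binary tree whose $m-1$ internal connector nodes all carry the key $-\infty$, and regard the whole object as a single heap-ordered binary tree: connector-to-child edges satisfy the heap order because $-\infty$ is minimal, and within each list the keys are ascending. Applying Frederickson's linear-time selection of the $t$ smallest elements of a binary heap with $t=k+(m-1)$ extracts all $m-1$ dummy keys together with the $k$ genuine smallest keys in $O(t)=O(k+m)$ time; discarding the $O(m)$ dummies then leaves exactly ${\cal K}_l(p)$. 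Because $m=O(\log^d n)$, the procedure runs in $O(\log^d n)+O(k+m)=O(\log^d n+k)$ time, matching the claim. I expect the only delicate point to be the bookkeeping that the connector nodes contribute at most $m-1$ spurious keys, so that requesting the $k+(m-1)$ smallest certainly captures all $k$ real ones while keeping the selection cost linear.
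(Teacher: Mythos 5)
Your proof is correct, and its skeleton is the same as the paper's: both use the first $d$ levels of ${\cal T}_l$ to decompose $P\cap C_l(p)$ into $\hat{m}=O(\log^d n)$ canonical subsets whose level-$(d+1)$ lists are already sorted by $x_l$, and both then appeal to a linear-time selection result to avoid the $\log\log n$ overhead of a heap-based $k$-way merge. The difference is the selection subroutine. The paper simply cites Theorem~1 of Frederickson and Johnson (selection in a collection of sorted lists), which directly yields the $k^{th}$ smallest $x_l$-coordinate over the $\hat{m}$ lists in $O(\hat{m}+k)$ time, after which the members of ${\cal K}_l(p)$ are collected by scanning prefixes of the lists. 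You instead reduce multi-list selection to selection in a single binary min-heap: each sorted list is viewed as a path-shaped heap, the list heads are hung from a balanced tree of $m-1$ connector nodes keyed $-\infty$, and Frederickson's $O(t)$-time heap-selection algorithm is run with $t=k+(m-1)$; your accounting of the dummy keys is sound, since exactly $m-1$ connectors exist and each is smaller than every real key, so the $t$ smallest elements contain all $k$ genuine minima. Both routes give $O(\log^d n + k)$; the paper's is shorter because the cited theorem matches the setting exactly, while yours is a nice alternative if one only has the (arguably better-known) heap-selection result at hand, at the cost of the connector-tree gadget. One presentational caveat: your intermediate claim that the $k$-way merge bound $O(\log^d n + k\log\log n)$ must be improved is right, and you correctly do not stop there, since that weaker bound would not suffice for the lemma as stated.
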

\begin{proof}
The set $P\cap C_l(p)$ is the union of $\hat{m}=O(\log^d n)$ sets $R(v)$, where $v$ ranges over internal nodes $v$ at level $d$ of ${\cal T}_l$. Consider the associated sorted lists $L(R(v))$. Given $\hat{m}$ sorted lists $L(R(v))$, the $k^{th}$ point in $L(P\cap C_l(p))$ can be obtained in time $O(\hat{m}+k)$ (Theorem 1 of~\cite{Frederickson1982197}). 

By examining the points in each of the $\hat{m}$ sorted lists whose $x_l$-coordinates are less than or equal to the $x_l$-coordinate of the $k^{th}$ point, we can find the members of ${\cal K}_l(p)$  in time $O(k)$.
\end{proof}

By Lemma~\ref{the:SortingLists}, we can find all the ${\cal K}_l(p)$, for all $p\in P$. This gives the following.

\begin{corollary}\label{the:kSYG_Construction}
Using a data structure of size $O(n\log^d n)$, the edges of the $\ksyg$ of a set of $n$ points in $\mathbb{R}^d$ can be reported in time $O(n\log^d n+kn)$.
\end{corollary}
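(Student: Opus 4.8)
The plan is to assemble the corollary directly from the machinery already in place: one range tree ${\cal T}_l$ per cone, together with the per-query cost supplied by Lemma~\ref{the:SortingLists}. The whole statement is really an aggregation of that lemma over all points and all cones, so the work is in checking that the sums collapse to the claimed bounds.

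First I would fix the set $C=\{C_0,\dots,C_{c-1}\}$ of polyhedral cones covering $\mathbb{R}^d$. Since $d$ is fixed and $\theta$ is a constant, Lemma~\ref{the:NumofPolyCones} guarantees $c=O(1/\theta^{d-1})=O(1)$. For each cone $C_l$ I would build the $(d+1)$-dimensional range tree ${\cal T}_l$ exactly as described above; each such tree uses $O(n\log^d n)$ storage and takes $O(n\log^d n)$ time to construct. Because there are only $c=O(1)$ of them, the total space and total construction time both remain $O(n\log^d n)$.

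Next, for every point $p\in P$ and every cone index $l$, I would invoke the procedure of Lemma~\ref{the:SortingLists} on ${\cal T}_l$ to compute ${\cal K}_l(p)$, the first $k$ points of $L(P\cap C_l(p))$, and then output the edges joining $p$ to each point of ${\cal K}_l(p)$. By the definition of the \ksyg~given in Section~\ref{sec:Relationship}, the union over all $p\in P$ and all $l$ of these edge sets is precisely the edge set of the \ksyg, so this reports exactly the graph we want.

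It then remains only to sum the costs. There are $n$ points and $c=O(1)$ cones, hence $cn=O(n)$ calls to Lemma~\ref{the:SortingLists}, each costing $O(\log^d n+k)$; this gives $O(n(\log^d n+k))=O(n\log^d n+kn)$ for all the queries, and adding the $O(n\log^d n)$ construction time keeps the total at $O(n\log^d n+kn)$ while the space stays $O(n\log^d n)$. The one place that warrants a sentence of care is confirming that the output size itself is consistent: each point contributes at most $k$ edges per cone and there are $O(1)$ cones, so the \ksyg~has $O(kn)$ edges, matching the $kn$ term in the running time. I do not expect any genuine obstacle here — the proof is essentially bookkeeping, and the single fact that makes everything collapse cleanly is the constancy of $c$, which lets the factor $c$ be absorbed into the $O(\cdot)$ in both the space and the time bounds.
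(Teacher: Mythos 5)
Your proposal is correct and follows exactly the paper's route: build the $(d+1)$-dimensional range trees ${\cal T}_l$ (one per cone, $c=O(1)$ of them), then apply Lemma~\ref{the:SortingLists} once per point per cone and sum the $O(\log^d n + k)$ query costs over the $O(n)$ calls. The paper's own proof is just the one-line remark preceding the corollary; your write-up supplies the same argument with the bookkeeping made explicit.
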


Now we state and prove the cost of reporting all the $k$-nearest neighbors in our approach, which in fact derives the known results in a new way~\footnote{For $k=\Omega(\log^{d-1}n)$, both our data structure and the best previous data structure~\cite{Dickerson:1996:APP:236464.236474} have the same complexity for reporting all the $k$-nearest neighbors. Arya~\etal~\cite{Arya:1998:OAA:293347.293348} have a kd-tree implementation to approximate the nearest neighbors of a query point that is in use by practitioners~\cite{10.1109/TVCG.2010.9} who have found it challenging to implement the theoretical algorithms~\cite{Vaidya:1989:ONL:70530.70532,Callahan366854,Clarkson:1983:FAN:1382437.1382825,Dickerson:1996:APP:236464.236474}. Since to report all the $k$-nearest neighbors ordered by distance from each point our method uses multidimensional range trees, which can be easily implemented, we believe our method may be useful in practice.}.


\begin{theorem}\label{the:kNNG_Construction}
For a set of $n$ points in $\mathbb{R}^d$, our data structure can report all the $k$-nearest neighbors, in order of increasing distance from each point, in time $O(n\log^d n + kn\log n)$. The data structure uses $O(n\log^d n + kn)$ space.
\end{theorem}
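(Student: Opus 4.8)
The plan is to build the \ksyg~explicitly, observe that it has only $O(kn)$ edges, and then extract the $k$-nearest neighbors of each point by a local distance computation and sort over its incident edges. First I would invoke Corollary~\ref{the:kSYG_Construction} to construct the \ksyg~and report all its edges in time $O(n\log^d n + kn)$ using the data structure of size $O(n\log^d n)$. As the edges are reported, I would store them in an \emph{undirected} adjacency list, placing each edge $(p,q)$ in the lists of both $p$ and $q$. Since each point $p$ contributes at most $k$ edges per cone $C_l(p)$ and there are $c=O(1)$ cones, the total number of directed edges leaving all points is $O(kn)$; hence the undirected \ksyg~has $O(kn)$ edges and the sum of vertex degrees $\sum_{p}\deg(p)$ is $O(kn)$. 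This also bounds the space for the adjacency lists by $O(kn)$, so together with the range trees the total space is $O(n\log^d n + kn)$, as claimed.

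Next I would extract the answer. The key observation, supplied by Lemma~\ref{the:keyLemma2}, is that if $p_i$ is the $i^{th}$ nearest neighbor of $p$ with $i\leq k$, then $p\in {\cal K}_l(p_i)$ for the cone $C_l(p_i)$ containing $p$; equivalently, $p_i$ appears as a neighbor of $p$ in the \ksyg. Thus all the $k$-nearest neighbors of each $p$ lie among $p$'s neighbors in the \ksyg, and it suffices to process each adjacency list separately. For each point $p$, I would compute the Euclidean distance from $p$ to each of its $\deg(p)$ neighbors in $O(1)$ time per neighbor (since $d$ is fixed), sort that list by distance, and output the first $k$ entries, which are then exactly the $k$-nearest neighbors in increasing order of distance.

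Finally I would account for the running time. Computing all distances costs $O(\sum_p \deg(p)) = O(kn)$. Sorting the adjacency list of $p$ costs $O(\deg(p)\log \deg(p))$; summing and using $\deg(p)\leq n$ gives $\sum_p \deg(p)\log\deg(p)\leq \log n\sum_p \deg(p)=O(kn\log n)$. Adding the construction cost $O(n\log^d n + kn)$ yields the claimed $O(n\log^d n + kn\log n)$ total time. The main point requiring care is the direction of the edges: the $k$-nearest neighbors of $p$ are found among the \emph{incoming} edges (those $q$ with $p\in {\cal K}_l(q)$) rather than among $p$'s own cone sets ${\cal K}_l(p)$, which is precisely why the graph must be stored undirected. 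A single vertex may then have degree as large as $n$, but the degree-sum bound $\sum_p \deg(p)=O(kn)$ absorbs this, so no individual point's sort can dominate and the aggregate bound holds.
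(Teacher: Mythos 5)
Your proposal is correct and follows essentially the same route as the paper: construct the \ksyg~via Corollary~\ref{the:kSYG_Construction}, then sort each point's incident edges by Euclidean length, bounding the aggregate sorting cost by $O(kn\log n)$ through the $O(kn)$ bound on the total number of edges. Your explicit handling of the edge-direction subtlety (storing the \ksyg~undirected so that the $k$-nearest neighbors of $p$, which arise as \emph{incoming} edges by Lemma~\ref{the:keyLemma2}, appear in $p$'s list) is left implicit in the paper's proof, but it is the same argument.
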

\begin{proof}
Suppose we are given the \ksyg~(see Corollary~\ref{the:kSYG_Construction}), which is a supergraph of the \knng~(from Lemma~\ref{the:NNGsubSYG}), and we want to report all the $k$-nearest neighbors. 

Let $E_p$ be the set of edges incident to the point $p$ in the \ksyg. By sorting these edges in non-decreasing order according to their Euclidean lengths, which can be done in time $O(|E_p|\log |E_p|)$, we can find the $k$-nearest neighbors of $p$  ordered by increasing Euclidean distance from $p$. 

Since the number of edges in the $\ksyg$ is $O(kn)$ and each edge $pq$ belongs to exactly two sets $E_p$ and $E_q$, the time to find all the $k$-nearest neighbors, for all the points $p\in P$, is $\sum_{p} O(|E_p|\log |E_p|) = O(kn\log n)$. The proof obtains by combining this with the results of Corollary~\ref{the:kSYG_Construction}.
\end{proof}

\section{Kinetic $k$-Semi-Yao Graph}\label{sec:kineticKSYG}
In Section~\ref{sec:KDSfor1SYG}, we first provide a KDS for the \ksyg, for $k=1$. Then in Section~\ref{sec:KDSforkSYG} we extend our kinetic approach to any $k\geq 1$.

\subsection{The case $k=1$}\label{sec:KDSfor1SYG}

The \1syg~remains unchanged as long as the order of the points in each of the coordinates $u_1,...,u_d$, and $x_l$ associated to each cone $C_l\in C$ remains unchanged. Therefore, to track the changes to the \1syg~over time, we distinguish between two types of events:

\begin{itemize}
\item \textbf{$u$-swap event:} Such an event occurs if two points exchange their order in the $u_i$-coordinate. 
\item \textbf{$x$-swap event:} This event occurs whenever two points exchange their order in the $x_l$-coordinate. 
\end{itemize}

The $u$-swap events can be tracked by defining $d$ kinetic sorted lists $L(u_1),...,L(u_d)$ of the points for each of the coordinates $u_1,...,u_d$ (see Section~\ref{sec:Preliminaries}). In addition, to track the $x$-swap events, we create a kinetic sorted list $L(x_l)$ of the points with respect to the $x_l$-coordinates of the points.

Fix a cone $C_l\in C$, $0\leq l\leq c-1$. Corresponding to the cone $C_l$, we create kinetic ranked-based range trees (RBRTs) ${\cal T}_l$ (see Section~\ref{sec:Preliminaries}). Consider the corresponding cone separated pair decomposition (CSPD) $\Psi_l=\{(B_1,R_1),...,(B_m,R_m)\}$ of ${\cal T}_l$. Let $r_j$ be the point with minimum $x_l$-coordinate among the points in $R_j$. Denote by $\ddot{w}_l$ the point in $P\cap C_l(w)$ with minimum $x_l$-coordinate; in fact $\ddot{w}_l$ is the point with the minimum $x_l$-coordinate among the points $r_j$, where the subscripts $j$ are such that $P\cap C_l(w)=\bigcup_j R_j$. Note that to maintain the \1syg, for each point $w\in P$, in fact we must track $\ddot{w}_l$.
To apply required changes to $\ddot{w}_l$ for all $w\in P$, when an event occurs, in addition $r_j$, we need to maintain more information for each subscript $j$ (\ie, at each internal node $v$ at level $d$ of ${\cal T}_l$). The next paragraph describes the extra information.

Allocate a \textit{label} to each point in $P$. Let $L(B_j)$ be a sorted list of the points $w\in B_j$ according to the labels of their $\ddot{w}_l$. This sorted list is used to answer the following query while processing $x$-swap events: Given a query point $p$, find all the points $w\in B_j$ such that $\ddot{w}_l=p$. Since we perform  updates (insertions/deletions) to the sorted lists $L(B_j)$ over time, we implement them using a dynamic binary search tree (\eg, a \textit{red-black tree}); each update is performed in worst-case time $O(\log n)$. Furthermore, for each $w\in P$, we create a set of links to $w$ in the sorted lists $L(B_j)$; denote this set by $Link(w)$; we use this set to efficiently delete a point $w$ from the sorted lists $L(B_j)$ when we are handling the events.

In the preprocessing step before the motion, for any subscript $j$ and for any point $w\in P$, we find $r_j$ and $\ddot{w}_l$, and then we construct $L(B_j)$ and $Link(w)$.

\begin{lemma}\label{the:PreProcStep}
Our KDS uses  $O(n\log^dn)$ space and $O(n\log^{d+1}n)$ preprocessing time.
\end{lemma}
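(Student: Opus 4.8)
The plan is to account separately for the space and for the preprocessing time, since the statement bundles two independent claims.

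\medskip

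\noindent\textbf{Space.} First I would observe that the dominant structure is the collection of RBRTs. There are $c=O(1)$ cones, and by Theorem~\ref{the:RBRTcomplexity} each RBRT ${\cal T}_l$ occupies $O(n\log^d n)$ storage, so the range trees together use $O(n\log^d n)$ space. It then remains to check that the auxiliary data attached to each ${\cal T}_l$ does not exceed this bound. The points $r_j$ and the values $\ddot{w}_l$ add only $O(1)$ per pair and per point, hence $O(n\log^d n)$ in total. The sorted lists $L(B_j)$ (stored as red-black trees) have total size $\sum_j |B_j|$; since by Theorem~\ref{the:RBRTcomplexity} each point $p$ lies in $O(\log^d n)$ of the sets $B_j$, this sum is $O(n\log^d n)$. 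Finally, the link sets $Link(w)$ point into exactly those lists $L(B_j)$ with $w\in B_j$, so $\sum_w |Link(w)| = \sum_j |B_j| = O(n\log^d n)$. Summing over the constantly many cones leaves the space at $O(n\log^d n)$.

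\medskip

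\noindent\textbf{Preprocessing time.} Here I would build up the cost in the order the preprocessing step constructs its pieces. Constructing the RBRTs ${\cal T}_l$ takes $O(n\log^d n)$ time by Theorem~\ref{the:RBRTcomplexity}, and the kinetic sorted lists $L(u_i)$ and $L(x_l)$ cost $O(n\log n)$ each by Theorem~\ref{the:KineticSL}; neither is the bottleneck. The interesting work is computing $r_j$ and then $\ddot{w}_l$ for every point $w$. For a fixed cone, computing $r_j$ for each pair is a single minimum over $R_j$; and to find $\ddot{w}_l$ I would use the fact from Theorem~\ref{the:RBRTcomplexity} that $P\cap C_l(w)$ is the union of the $O(\log^d n)$ sets $R_j$ with $w\in B_j$, together with the guarantee that these sets are locatable in $O(\log^d n)$ time, so $\ddot{w}_l$ is the minimum over $O(\log^d n)$ precomputed values $r_j$, costing $O(\log^d n)$ per point and $O(n\log^d n)$ per cone. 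The extra logarithmic factor in the claimed bound comes from the sorted lists $L(B_j)$: building each as a balanced red-black tree requires sorting its $|B_j|$ entries by label, costing $O(|B_j|\log|B_j|)=O(|B_j|\log n)$, and since $\sum_j|B_j|=O(n\log^d n)$ this totals $O(n\log^{d+1} n)$ per cone. The $Link(w)$ sets are then filled in by a single pass over the lists, adding no higher-order term. Multiplying by the constant $c$ preserves $O(n\log^{d+1} n)$, which dominates the RBRT construction and yields the stated preprocessing bound.

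\medskip

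\noindent The main obstacle, and the step deserving the most care, is the bookkeeping for the sorted lists $L(B_j)$: I must verify that their total size is genuinely $O(n\log^d n)$ (which rests entirely on the ``each point lies in $O(\log^d n)$ pairs'' property of Theorem~\ref{the:RBRTcomplexity}) and that building each as a searchable balanced tree is what introduces the single extra $\log n$ factor. Everything else is a routine aggregation of per-cone, per-point, or per-pair costs against the RBRT bounds already established.
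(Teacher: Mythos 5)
Your proposal is correct and follows essentially the same route as the paper's own proof: both rest on the Theorem~\ref{the:RBRTcomplexity} guarantee that each point lies in $O(\log^d n)$ sets $B_j$ and $R_j$ (giving total set sizes $O(n\log^d n)$ and hence the space bound, the $O(n\log^d n)$ cost of computing all $r_j$ and $\ddot{w}_l$, and the $O(n\log^d n)$ cost of filling in the $Link(\cdot)$ sets), and both identify the sorting of the points of the sets $B_j$ by the labels of their $\ddot{w}_l$ as the sole source of the extra $\log n$ factor, yielding $O(n\log^{d+1} n)$ preprocessing time. Your write-up is simply a more detailed, itemized version of the paper's argument.
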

\begin{proof}
By Theorem~\ref{the:RBRTcomplexity}, each point $p\in P$ is in at most $O(\log^d n)$ sets $B_j$, and $O(\log^dn)$ sets $R_j$, so the cardinality of each set $Link(p)$ is $O(\log^d n)$, and the size of sets $B_j$ and $R_j$, for all $j$, is $O(n\log^dn)$. This implies that $(i)$ the KDS uses $O(n\log^dn)$ storage, $(ii)$ we can find all the $r_j$ and $\ddot{w}_l$ in time $O(n\log^d n)$, and $(iii)$  we can sort the points $w$ in all the $B_j$ according to the labels of their $\ddot{w}_l$ in $O(n\log^{d+1}n)$ time, and then by tracing the members of the sorted lists $L(B_j)$, we can create $Link(p)$ for all $p\in P$ in the same time  $O(n\log^{d}n)$.
\end{proof}

Now let the points move. The following shows how to maintain and reorganize $Link(w)$, $L(B_j)$ and $r_j$, for any subscript $j$ and for any point $w\in P$,  when a $u$-swap event or an $x$-swap event occurs. Note that maintenance of the sets $Link(w)$, for all $w\in P$, in fact gives a kinetic maintenance of the \1syg.

\begin{figure}[t]
\centering
\includegraphics[scale=1]{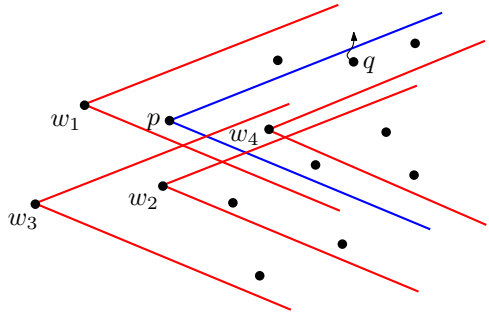}
\caption{A $u$-swap between $p$ and $q$ does not change the memberships of points in other cones.}
\label{fig:Uswap}
\end{figure}

\paragraph{Handling $u$-swap events.} Consider a $u$-swap between $p$ and $q$. Without loss of generality, assume $q\in C_l(p)$ before the event; see Figure~\ref{fig:Uswap}. After the event, $q$ moves outside the cone $C_l(p)$. Note that this event does not change the points in $P\cap C_l(w)$ for other points $w\in P$. Therefore, the only change that might happen to the \1syg~is to replace an edge incident to $p$ inside the cone $C_l(p)$ with a new one. In particular, when two points $p$ and $q$ exchange their order with respect to the $u_i$-coordinate, we perform the following steps.
\begin{itemize}
\item[\texttt{U1)}] We update the kinetic sorted list $L(u_i)$.
\item[\texttt{U2)}] A $u$-swap event may change the structure of the RBRT ${\cal T}_l$, so we update ${\cal T}_l$.
\item[\texttt{U3)}] We delete the point(s) $p$ from the sorted lists $L(B_j)$ where $p\in B_j$.
\item[\texttt{U4)}] We delete the members of $Link(p)$.
\item[\texttt{U5)}] We update the values in $\{r_j~|~p\in R_j~\vee~q\in R_j\}$.
\item[\texttt{U6)}] We find the point $\ddot{p}_l$ in $P\cap C_l(p)$ whose $x_l$-coordinate is minimum among all the $r_j$ such that $p\in B_j$.
\item[\texttt{U7)}] We add the point $p$ into all the sorted lists $L(B_j)$ according to the label of the new value of $\ddot{p}_l$. Then we construct the set $Link(p)$, which in fact is the new set of links to $p$ in the sorted lists $L(B_j)$.
\end{itemize}

The following lemma gives the complexity of the steps \texttt{U1},...,\texttt{U7} above.

\begin{lemma}\label{the:Uswap}
For maintenance of the \1syg, our KDS handles $O(n^2)$ $u$-swap events, each in worst-case time $O(\log^{d+1} n)$.
\end{lemma}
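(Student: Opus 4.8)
The plan is to establish the two claims separately: the $O(n^2)$ bound on the total number of $u$-swap events, and the $O(\log^{d+1}n)$ worst-case bound on the cost of processing a single such event.

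For the count, I would argue directly from the polynomial trajectories. A $u$-swap in a coordinate $u_i$ is exactly a change in the relative order of two points in the kinetic sorted list $L(u_i)$. Since each coordinate of each point is a polynomial of degree at most $s$, the $u_i$-coordinate of a point---being the projection onto $u_i$, i.e. a fixed linear combination of these coordinate polynomials---is itself a polynomial of degree at most $s$. Hence for any pair $p,q$, the difference of their $u_i$-coordinates is a polynomial of degree at most $s$ and vanishes at most $s$ times, so $p$ and $q$ swap order in $L(u_i)$ at most $s$ times. Summing over the $O(n^2)$ pairs and over the $O(1)$ distinct coordinate directions (there are $d$ directions per cone and $c=O(1)$ cones, all constants) gives $O(s\cdot n^2) = O(n^2)$ $u$-swap events in total.

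For the per-event cost, I would bound each of the steps \texttt{U1}--\texttt{U7}. Step \texttt{U1} updates one kinetic sorted list in $O(\log n)$ by Theorem~\ref{the:KineticSL}. Step \texttt{U2} updates each affected RBRT; by Theorem~\ref{the:RBRTcomplexity} an RBRT event costs $O(\log^d n)$ in the worst case, and only the $O(1)$ RBRTs whose cone uses $u_i$ are affected, so this is $O(\log^d n)$. The key quantitative input for the remaining steps is the bound from Theorem~\ref{the:RBRTcomplexity} that each point lies in $O(\log^d n)$ sets $B_j$ and $O(\log^d n)$ sets $R_j$. Thus the deletion of $p$ from the $O(\log^d n)$ red-black trees $L(B_j)$ in step \texttt{U3}, using $Link(p)$ to locate the occurrences, costs $O(\log^d n)$ deletions at $O(\log n)$ each, i.e. $O(\log^{d+1} n)$; step \texttt{U4} removes the $O(\log^d n)$ links in $O(\log^d n)$ time; step \texttt{U5} recomputes $r_j$ for the $O(\log^d n)$ subscripts with $p\in R_j$ or $q\in R_j$; step \texttt{U6} scans the $O(\log^d n)$ values $r_j$ with $p\in B_j$ to recompute $\ddot p_l$; and step \texttt{U7} reinserts $p$ into the $O(\log^d n)$ trees $L(B_j)$ at $O(\log n)$ each, again $O(\log^{d+1} n)$, then rebuilds $Link(p)$ in $O(\log^d n)$. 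Taking the maximum, the reinsertion/deletion steps \texttt{U3} and \texttt{U7} dominate, giving worst-case time $O(\log^{d+1} n)$ per event.

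The delicate point---and where I expect the real work to lie---is step \texttt{U5}, namely recomputing the minima $r_j$ after the swap. For this I need the structural fact that a single $u$-swap changes the membership of only $O(\log^d n)$ of the canonical sets $R_j$, and that each affected $r_j$ can be recomputed in $O(1)$ (or at worst $O(\log n)$) time. The former follows from the RBRT's rank-based skeleton, which confines the effect of swapping two adjacent ranks to the $O(\log^d n)$ nodes on the relevant search paths; the latter relies on each level-$d$ node storing its points in $x_l$-sorted order, so that $r_j$ is simply the current list minimum and is unaffected by the change in $u$-order except for the $O(1)$ points entering or leaving the set. I would make explicit that the $x_l$-order is untouched by a $u$-swap, so that no cascade of $r_j$ recomputations beyond the membership changes is triggered, and verify that these bookkeeping updates are consistent with the RBRT maintenance of Theorem~\ref{the:RBRTcomplexity}. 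Finally I would note that all operations involved are worst-case---the RBRT avoids rebalancing and the red-black tree operations are worst-case logarithmic---so the resulting per-event bound is genuinely worst-case rather than amortized, as claimed.
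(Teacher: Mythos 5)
Your proof is correct and takes essentially the same approach as the paper: the $O(n^2)$ event count follows from the bounded-degree polynomial trajectories (the paper invokes Theorem~\ref{the:KineticSL} for this), and the per-event cost is obtained by charging each of steps \texttt{U1}--\texttt{U7} exactly as the paper does, with the $O(\log^d n)$ red-black-tree operations at $O(\log n)$ each in \texttt{U3}/\texttt{U7} dominating at $O(\log^{d+1} n)$. Your one deviation is harmless: for step \texttt{U5} you posit $x_l$-sorted lists at the level-$d$ nodes (the device the paper introduces only for the general $k$-SYG in Section~\ref{sec:KDSforkSYG}), whereas for $k=1$ the paper simply updates the $O(\log^d n)$ affected values $r_j$ directly in $O(\log^d n)$ worst-case time --- either way the bound stands.
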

\begin{proof}
For a fixed dimension $d$, (by Theorem~\ref{the:KineticSL}) the kinetic sorted lists $L(u_i)$, $1\leq i \leq d$, handle $O(n^2)$ events, each in $O(\log n)$ time (Step \texttt{U1}). 

From Theorem~\ref{the:RBRTcomplexity}, an update to ${\cal T}_l$ takes $O(\log^d n)$ time (Step \texttt{U2}). By using the links in $Link(p)$, Step \texttt{U3} can be done in $O(\log^{d+1} n)$ time. 

By Theorem~\ref{the:RBRTcomplexity}, all the $R_j$ can be found in $O(\log^d n)$ time, so the values $r_j$ can be updated in $O(\log^d n)$ worst-case time  (Step \texttt{U5}); also, since each point is in $O(\log^dn)$ sets $B_j$, Step \texttt{U6} takes $O(\log^d n)$ time. 

Each operation in a sorted list $L(B_j)$ can be done in $O(\log n)$ time; this implies that Step \texttt{U7} takes $O(\log^{d+1} n)$ time.
\end{proof}

\paragraph{Handling $x$-swap events.} Denote by $x_l(p)$ the $x_l$-coordinate of $p$. Let $p$ and $q$ be two consecutive points with $p$ preceding $q$ (\ie, $x_l(p)<x_l(q)$) before the $x$-swap event. The structure of ${\cal T}_l$ remains unchanged when an $x$-swap event between $p$ and $q$ occurs. Such an event might change the value of $\ddot{w}_l$ of some points $w$ of the sorted lists $L(B(.))$ and if so, we must find such points $w$ and apply the required changes. 

The number of all changes to the \1syg~depends on how many points $w\in P$ have both $p$ and $q$ in their cones $C_l(w)$. Note that, while reporting the points in $P\cap C_l(w)$ for $w$, both $p$ and $q$ might be in the same set $R_j$ (see Figure~\ref{fig:Xswap}(a)) or in two different sets $R_j$ and $R_{\bar{j}}$ (see Figure~\ref{fig:Xswap}(b)). To find such points $w$, when an $x$-swap event between $p$ and $q$ occurs, we seek (I) subscripts $j$ where $\{p,q\}\subseteq R_j$, and (II) subscripts $j$ and $\bar{j}$ where $p\in R_j$ and $q\in R_{\bar{j}}$. In the first case, we must find any point $w\in B_j$ such that $\ddot{w}_l=p$ (\ie, $p$ is the point with minimum $x_l$-coordinate in the cone $C_l(w)$). Then we replace $p$ by $q$ after the event: $\ddot{w}_l=q$. This means that we replace the edge $wp$ of the \1syg~with $wq$.

\begin{figure}[t]
\centering
\includegraphics[scale=1]{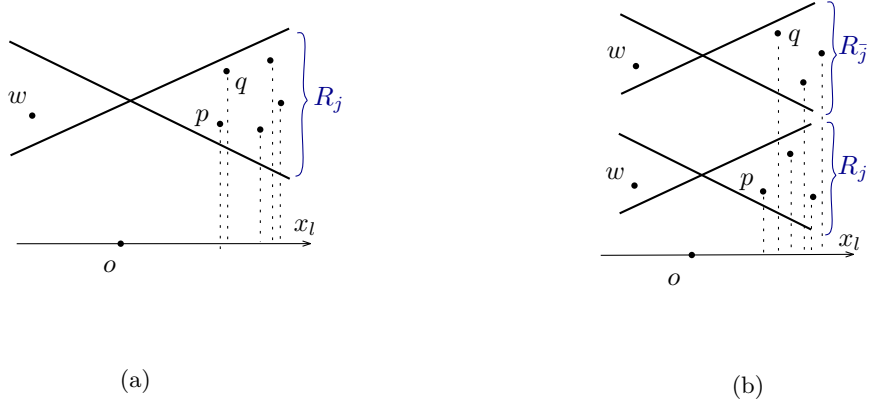}
\caption{Two cases when an $x$-swap between $p$ and $q$ occurs.}
\label{fig:Xswap}
\end{figure}

Note that in the second case there is no point $r\in B_j$ such that $\ddot{r}_l=q$, because $x_l(p)<x_l(q)$. Also note that if there is a point $w\in B_j$ such that $\ddot{w}_l=p$, we change the value of $\ddot{w}_l$ to $q$ if $q\in C_l(w)$; in the case that $q\in C_l(w)$, there is a unique pair $(B_{\bar{j}},R_{\bar{j}})$ where $w\in B_{\bar{j}}$ and $q\in R_{\bar{j}}$. Thus we can find $w$ in the set $B_{\bar{j}}$ and we do not need to check whether point $w$ is in $B_j$ or not. In particular, for the second case, we only need to check whether there is a point $w\in B_{\bar{j}}$ such that $\ddot{w}_l=p$; if so, we change the value of $\ddot{w}_l$ to $q$ ($\ddot{w}_l=q$).

From the above discussion, the following steps summarize the update mechanism of our KDS for maintenance of the \1syg~when an $x$-swap event occurs.

\begin{itemize}
\item[\texttt{X1)}] We update the kinetic sorted list $L(x_l)$.
\item[\texttt{X2)}] We find all the subscripts $j$ such that $\{p,q\}\subseteq R_j$ and $r_j=p$. Also, we find all the subscripts $j$ where $r_j=q$ (see Figure~\ref{fig:Xswap}).
\item[\texttt{X3)}] For each subscript $j$ (from Step \texttt{X2}), we find all the points $w$ in the sorted list $L(B_j)$ where $\ddot{w}_l=p$.
\item[\texttt{X4)}] For each $w$ (from Step \texttt{X3}), using the links in $Link(w)$, we update all the corresponding sorted lists $L(B_j)$: we delete $w$ from them, change the value of $\ddot{w}_l$ to $q$, and add $w$ into the sorted lists according to the label of $q$.
\end{itemize}

The number of edges incident to a point $p$ in the \1syg~is $O(n)$. Thus when an $x$-swap event between $p$ and some point $q$ occurs, it might cause $O(n)$ changes to the \1syg. The following lemma proves that an $x$-swap event can be handled in polylogarithmic amortized time.

\begin{lemma}\label{the:Xswap}
For maintenance of the \1syg, our KDS handles $O(n^2)$ $x$-swap events with total processing time $O(n^2\beta_{s+2}(n)\log^{d+1} n)$.
\end{lemma}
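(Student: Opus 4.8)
The plan is to split the total work into a fixed polylogarithmic overhead charged to each of the $O(n^2)$ $x$-swap events, plus an amortized cost charged to each \emph{genuine} change of the \1syg. First I would bound the number of $x$-swap events. Since each coordinate of every trajectory is a polynomial of degree at most $s$, the difference $x_l(p)-x_l(q)$ is a polynomial of degree at most $s$ and so vanishes $O(1)$ times; hence every pair swaps in the $x_l$-order $O(1)$ times, giving $O(n^2)$ events in a single kinetic sorted list $L(x_l)$ and $O(n^2)$ events over the $c=O(1)$ cones, consistent with Theorem~\ref{the:KineticSL}.

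Next I would account for the per-event overhead in Steps \texttt{X1}--\texttt{X3}. Step \texttt{X1} costs $O(\log n)$ by Theorem~\ref{the:KineticSL}. For Step \texttt{X2}, Theorem~\ref{the:RBRTcomplexity} guarantees that $p$ lies in only $O(\log^d n)$ sets $R_j$ and that these can be listed in $O(\log^d n)$ time; testing $r_j=p$ and locating the relevant pairs (including the case-II pair $(B_{\bar j},R_{\bar j})$) is then $O(\log^d n)$. In Step \texttt{X3}, for each of the $O(\log^d n)$ relevant subscripts we locate in $L(B_j)$ the run of points $w$ with $\ddot{w}_l=p$ in $O(\log n)$ time, so the search cost is $O(\log^{d+1} n)$ per event and $O(n^2\log^{d+1} n)$ in total.

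The heart of the argument, and the main obstacle, is to bound the total number of genuine changes to the \1syg, i.e.\ the number of times some $\ddot{w}_l$ changes value, since each such change triggers Step \texttt{X4}. The key observation is that $\ddot{w}_l$ is precisely the point of minimum $x_l$-coordinate currently inside $C_l(w)$, and it changes only from $p$ to the overtaking consecutive point $q$ at an $x$-swap. For a fixed $w$ and a fixed cone $C_l$, a point $p$ enters or leaves $C_l(w)$ only when $p$ and $w$ swap in one of the $d$ coordinates $u_1,\dots,u_d$; by the same degree-$s$ argument each of these $O(d)$ boundary relations flips $O(1)$ times, so the membership of $p$ in $C_l(w)$ changes $O(ds)=O(1)$ times and the active set $C_l(w)$ undergoes only $O(n)$ insertions and deletions over the whole motion. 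Thus $\ddot{w}_l$ is the lower envelope of $n$ partially-defined degree-$s$ coordinate functions with $O(n)$ pieces, so by Theorem~\ref{the:totallyDFcomplexity} it has $O(\lambda_{s+2}(n))=O(n\beta_{s+2}(n))$ breakpoints. Summing over the $n$ points $w$ and the $c=O(1)$ cones gives $O(n^2\beta_{s+2}(n))$ total changes.

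Finally I would combine the pieces. Each change of $\ddot{w}_l$ is processed in Step \texttt{X4} by deleting $w$ from its $O(\log^d n)$ lists $L(B_j)$ using $Link(w)$, resetting $\ddot{w}_l$ to $q$, and reinserting, at cost $O(\log n)$ per list, hence $O(\log^{d+1} n)$ per change. Multiplying the $O(n^2\beta_{s+2}(n))$ changes by $O(\log^{d+1} n)$ and adding the $O(n^2\log^{d+1} n)$ overhead yields total processing time $O(n^2\beta_{s+2}(n)\log^{d+1} n)$, with the number of events remaining $O(n^2)$. The delicate point to nail down is the ``$O(1)$ membership flips'' claim underlying the piece count, since the in-cone test is a conjunction of $d$ half-space conditions; I would argue that the total number of sign changes of the defining polynomials over all $d$ boundaries is $O(ds)=O(1)$ per pair, which is exactly what controls the number of pieces in the partial lower-envelope bound.
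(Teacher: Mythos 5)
Your proposal is correct, and its overall accounting skeleton (per-event polylogarithmic overhead for Steps \texttt{X1}--\texttt{X3}, plus $O(\log^{d+1} n)$ charged to each genuine change of the \1syg~in Step \texttt{X4}) coincides with the paper's proof. The one genuine difference lies in how the key quantity --- the total number of changes to the \1syg~--- is bounded by $O(n^2\beta_{s+2}(n))$. The paper simply cites this as a known fact (Theorem 6 of~\cite{Rahmati2014}), whereas you derive it from scratch: you observe that membership of a point $p$ in a cone $C_l(w)$ is a conjunction of $d$ sign conditions on degree-$s$ polynomials, hence flips $O(ds)=O(1)$ times per pair, so the $x_l$-coordinates of points inside $C_l(w)$ form $O(n)$ partially-defined degree-$s$ functions, whose lower envelope has $O(\lambda_{s+2}(n))=O(n\beta_{s+2}(n))$ breakpoints by Theorem~\ref{the:totallyDFcomplexity}; summing over the $n$ choices of $w$ and $c=O(1)$ cones gives the bound. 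This is exactly the argument the paper itself deploys later for general $k$ (Lemma~\ref{the:allKSYGchanges}, via the $k$-level of $O(n)$ partial functions), so your route is consistent with the paper's methodology; it also has a small advantage in rigor, since the cited theorem of~\cite{Rahmati2014} was established for $\mathbb{R}^2$ while the present lemma is claimed in $\mathbb{R}^d$, and your self-contained envelope argument works in any fixed dimension. Both routes yield the same final count of $O(n^2)$ events and total cost $O(n^2\beta_{s+2}(n)\log^{d+1} n)$.
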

\begin{proof} 
From Theorem~\ref{the:KineticSL}, Step \texttt{X1} takes $O(\log n)$ time.  By Theorem~\ref{the:RBRTcomplexity}, all the subscripts $j$ at Step \texttt{X2} can be found in $O(\log^d n)$ time.

For each $j$ of Step \texttt{x3}, the update mechanism spends $O(\log n + z_j)$ time where $z_j$ is the number of all the points $w\in B_j$ such that $\ddot{w}_l=p$. For all the subscripts $j$, the second step takes $O(\log^{d+1} n + \sum_j z_j)$ time. Note that $\sum_j z_j$ is equal to the number of exact changes to the \1syg. Since the number of exact changes to the ~\1syg~of a set of $n$ moving points in a fixed dimension $d$ is $O(n^2\beta_{s+2}(n))$ (Theorem 6 of~\cite{Rahmati2014}), the total processing time of Step \texttt{X3} for all the $O(n^2)$ $x$-swap events is $O(n^2\log^{d+1} n + n^2\beta_{s+2}(n))=O(n^2\log^{d+1} n)$. 

For each $w$ of Step \texttt{X4}, the processing time to apply changes to the KDS is $O(\log^{d+1} n)$. For each $w$ of \texttt{X4}, it is in fact a change to the \1syg. Thus the update mechanism spends $O(n^2\beta_{s+2}(n)\log^{d+1} n)$ time to handle all the $O(n^2)$ events.

Summing over the complexities of Steps \texttt{X1}-\texttt{X4}, for all the $x$-swap events, gives the total processing time  $O(n^2\beta_{s+2}(n)\log^{d+1} n)$.
\end{proof}

Now we obtain the main result of this section, which summarizes the complexity of the proposed KDS for the \1syg.

\begin{theorem}\label{the:KineticSYG}
Our KDS for maintenance of the \1syg~of a set of $n$ moving points in $\mathbb{R}^d$, where the coordinates of each point are given by polynomials of at most constant degree $s$, uses $O(n\log^d n)$  space, $O(n\log^{d+1} n)$ preprocessing time, and handles $O(n^2)$ events with a total cost of $O(n^2\beta_{s+2}(n)\log^{d+1} n)$. The KDS is compact, efficient, amortized responsive, and local.
\end{theorem}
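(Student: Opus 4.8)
The plan is to obtain the theorem purely by aggregating Lemmas~\ref{the:PreProcStep}, \ref{the:Uswap}, and~\ref{the:Xswap}, and then to verify the four KDS criteria of Section~\ref{sec:KDSframework} one at a time. First I would read the storage bound $O(n\log^d n)$ and the preprocessing bound $O(n\log^{d+1} n)$ directly off Lemma~\ref{the:PreProcStep}. Since the \1syg~changes only at $u$-swap and $x$-swap events, the total event count is the sum of the two counts from Lemmas~\ref{the:Uswap} and~\ref{the:Xswap}, namely $O(n^2)+O(n^2)=O(n^2)$. For the total processing cost I would add the $u$-swap contribution $O(n^2)\cdot O(\log^{d+1} n)=O(n^2\log^{d+1} n)$ to the $x$-swap contribution $O(n^2\beta_{s+2}(n)\log^{d+1} n)$; because $\beta_{s+2}(n)\ge 1$ the latter absorbs the former, giving $O(n^2\beta_{s+2}(n)\log^{d+1} n)$ as claimed.

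It then remains to check the four criteria. For \emph{compactness} and \emph{locality} I would observe that the only kinetic certificates in the structure are the adjacency certificates of the sorted lists $L(u_1),\dots,L(u_d)$ (one per coordinate) and $L(x_l)$ (one per cone); the auxiliary lists $L(B_j)$ are keyed by discrete labels and so carry no time-varying certificates, and all updates to ${\cal T}_l$ are driven by the coordinate swaps detected by these lists. Since $d$ and the number $c$ of cones are constants, there are $O(1)$ such lists, and by Theorem~\ref{the:KineticSL} each list holds $O(n)$ certificates while involving each point in only $O(1)$ of them. Hence the priority queue contains $O(n)$ certificates (compact) and every point participates in only $O(1)$ certificates (local).

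For \emph{amortized responsiveness} I would combine the worst-case per-event bound $O(\log^{d+1} n)$ for $u$-swaps (Lemma~\ref{the:Uswap}) with the $x$-swap total $O(n^2\beta_{s+2}(n)\log^{d+1} n)$ spread over $O(n^2)$ events (Lemma~\ref{the:Xswap}), which amortizes to $O(\beta_{s+2}(n)\log^{d+1} n)=O(\text{polylog}(n))$ per event. For \emph{efficiency} the external events are exactly the structural changes of the \1syg; their worst-case number is $\Omega(n^2)$ (and at most $O(n^2\beta_{s+2}(n))$ by Theorem~6 of~\cite{Rahmati2014}), whereas the number of internal events is bounded by the total event count $O(n^2)$, so the internal-to-external ratio is $O(1)$, which is polylogarithmic.

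The one place that needs genuine care is the $x$-swap accounting: a single $x$-swap can force $\Theta(n)$ edge replacements in the \1syg, so there is no worst-case polylogarithmic per-event bound, and neither responsiveness nor efficiency can be argued event-by-event. The resolution, already packaged inside Lemma~\ref{the:Xswap}, is to charge the replacements produced at Steps \texttt{X3}--\texttt{X4} against the global $O(n^2\beta_{s+2}(n))$ bound on the number of structural changes rather than against individual events; this is what converts the $\Theta(n)$ worst-case work into an amortized polylogarithmic cost and what makes the efficiency ratio collapse to $O(1)$. Everything else is a direct transcription of the three lemmas, the only arithmetic subtlety being that $\beta_{s+2}(n)\ge 1$ lets the $x$-swap term dominate the total cost.
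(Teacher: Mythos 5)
Your proposal is correct and follows essentially the same route as the paper's own proof: it aggregates Lemmas~\ref{the:PreProcStep}, \ref{the:Uswap}, and~\ref{the:Xswap} for the space, preprocessing, event-count, and total-cost bounds, and then verifies compactness, locality, amortized responsiveness, and efficiency exactly as the paper does (certificates live only in the $O(1)$ kinetic sorted lists, so there are $O(n)$ of them with $O(1)$ per point, and the $x$-swap cost is charged against the global $O(n^2\beta_{s+2}(n))$ bound on structural changes). Your treatment of the efficiency ratio is in fact slightly more careful than the paper's, since you note the $\Omega(n^2)$ worst-case lower bound on external events needed to make the internal-to-external ratio argument meaningful.
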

\begin{proof}
From Lemma~\ref{the:PreProcStep}, the KDS uses  $O(n\log^{d+1} n)$ preprocessing time and $O(n\log^d n)$  space. The total cost to process all the $O(n^2)$ events is $O(n^2\beta_{2s+2}(n)\log^{d+1} n)$ (by Lemmas~\ref{the:Uswap} and ~\ref{the:Xswap}); this implies that the KDS is amortized responsive. 

Since the number of the certificates is $O(n)$, the KDS is compact. 

Each point in a kinetic sorted list participates in two certificates, one created with the previous point and one with the next point, which implies the KDS is local. 

Since the number of the external events is $O(n^2\beta_{s+2}(n))$ and the number of the events that the KDS processes is $O(n^2)$, the KDS is efficient.
\end{proof}

\subsection{The general case: any $k\geq 1$}\label{sec:KDSforkSYG}
Here, we extend our kinetic approach to maintain the \ksyg, for any $k\geq 1$.

For maintenance of the \ksyg~over time, we must track the sets ${\cal K}_l(p)$, $0\leq l\leq c-1$, for each point $p\in P$. In order to do this, for each subscript $j\in \{1,...,m\}$, we need to maintain a list $L(R_j)$ of the points in $R_j$, sorted in ascending order according to their $x_l$-coordinates over time. Note that each set $R_j$ is some $R(v)$, the set of points at the leaves of the subtree rooted at some internal node $v$ at level $d$ of the RBRT ${\cal T}_l$. To maintain these sorted lists $L(R_j)$, we add a new level to the RBRT ${\cal T}_l$; the points at the new level are sorted at the leaves in ascending order according to their $x_l$-coordinates. Therefore, for updating the modified RBRT ${\cal T}_l$, in addition to the $u$-swap events, we handle the $x$-swap events as well. The modified RBRT ${\cal T}_l$ behaves like a $(d+1)$-dimensional RBRT. From Theorem~\ref{the:RBRTcomplexity}, when a $u$-swap event or an $x$-swap event occurs, ${\cal T}_l$ can be updated in worst-case time $O(\log^{d+1}n)$.

Denote by $\ddot{p}_{l,k}$ the $k^{th}$ point in $L(P\cap C_l(p))$.  To track and update the points in ${\cal K}_l(p)$, for all the points $p\in P$, we maintain the following over time:

\begin{itemize}
\item A set of $d+1$ kinetic sorted lists $L(u_i)$, $i=1,...,d$, and $L(x_l)$ of the points in $P$. We use these sorted lists to track the order of the points in the coordinates $u_i$, $1\leq i\leq d$, and $x_l$, respectively.
\item For each $B_j$, a sorted list $L(B_j)$ of the points in $B_j$. The order of the points $p$ in $L(B_j)$ is according to the labels of their $\ddot{p}_{l,k}$. This sorted list $L(B_j)$ is used to efficiently answer the following query: Given a query point $q$ and a $B_j$, find all $p\in B_j$ such that $\ddot{p}_{l,k}=q$.
\item The $k^{th}$ point $r_{j,k}$ in the sorted list $L(R_j)$. We maintain the values $r_{j,k}$ in order to make necessary changes to the \ksyg~when an $x$-swap event occurs.
\end{itemize}

As in Section~\ref{sec:KDSfor1SYG}, when the points move, we handle two types of events, \textit{$u$-swap events} and \textit{$x$-swap events}.

\paragraph{Handling $u$-swap events.}
Let $q\in C_l(p)$ before the $u$-swap event. Whenever the two points $p$ and $q$ exchange their $u_i$-order, the only change that might occur is the replacement of a member of ${\cal K}_l(p)$ with a new one. In particular, when such an event occurs, we perform the following updates.

\begin{enumerate}
\item[\texttt{U1)}] We update the kinetic sorted list $L(u_i)$.
\item[\texttt{U2)}] We update ${\cal T}_l$ . If a point is deleted or inserted into a $B_j$, we update the corresponding sorted list $L(B_j)$.
\item[\texttt{U3)}] After updating ${\cal T}_l$, a point $q$ might be inserted or deleted from some $R_j$ and change the values of $r_{i,k}$. For all $R_j$ where $q\in R_j$, before and after the event, we perform the following. We check whether the $x_l$-coordinate of  $q$ is less than or equal to the $x_l$-coordinate of $r_{j,k}$; if so, we take the successor or predecessor point of $r_{j,k}$ in $L(R_j)$ as the new value for $r_{j,k}$.
\item[\texttt{U4)}] We query to find ${\cal K}_l(p)$.
\item[\texttt{U5)}] If we obtain a new value for $\ddot{p}_{l,k}$, which in fact is the point with maximum $x_l$-coordinate among the points in ${\cal K}_l(p)$, we update all $L(B_j)$ such that $p\in B_j$.
\end{enumerate}

Now the following gives the complexity of handling $u$-swap events.

\begin{lemma}\label{the:UswapEvents}
Our KDS for maintenance of the \ksyg~handles $O(n^2)$ $u$-swap events, each in worst-case time $O(\log^{d+1}n +k)$.
\end{lemma}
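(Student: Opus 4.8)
The plan is to establish the two parts of the claim in turn: the $O(n^2)$ bound on the number of $u$-swap events, and the $O(\log^{d+1} n + k)$ worst-case cost of executing Steps \texttt{U1}--\texttt{U5} for one such event.

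For the event count, I would note that a $u$-swap is by definition an exchange of two points in one of the coordinate sorted lists $L(u_i)$, and that these lists are the only witnesses to a change in the $u_i$-order of $P$. Since each coordinate of every trajectory is a polynomial of degree at most $s$, any two points exchange their order in a fixed $u_i$-direction at most $s=O(1)$ times, so each list $L(u_i)$ undergoes $O(n^2)$ swaps; this matches Theorem~\ref{the:KineticSL} with $m=n$, and is consistent with the $O(n^2)$ event bound for ${\cal T}_l$ in Theorem~\ref{the:RBRTcomplexity}. Because there are only $O(1)$ coordinate directions across the constantly many cones, summing gives $O(n^2)$ $u$-swap events in total.

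For the per-event cost, the decisive structural fact is \emph{locality}: when $p$ and $q$ swap with $q\in C_l(p)$ beforehand, the event can only move $q$ out of $C_l(p)$, so the sole set that changes is ${\cal K}_l(p)$, and it changes by a single replacement. This confines all work to the one point $p$ (and to the $O(\log^d n)$ canonical sets touching $p$ or $q$). I would then charge the steps individually: \texttt{U1} costs $O(\log n)$ by Theorem~\ref{the:KineticSL}; \texttt{U2} costs $O(\log^{d+1} n)$, combining the worst-case update of the $(d+1)$-dimensional RBRT (Theorem~\ref{the:RBRTcomplexity}) with the $O(\log^d n)$ red-black-tree operations on the lists $L(B_j)$ caused by the membership changes of $p$ and $q$, each at $O(\log n)$; \texttt{U3} revisits the $O(\log^d n)$ sets $R_j$ containing $q$ and corrects each $r_{j,k}$ by a single successor/predecessor step in $L(R_j)$, for $O(\log^{d+1} n)$; \texttt{U4} recomputes ${\cal K}_l(p)$ in $O(\log^d n + k)$ via Lemma~\ref{the:SortingLists}; and \texttt{U5} re-inserts $p$ at the label of its new $\ddot{p}_{l,k}$ into the $O(\log^d n)$ trees $L(B_j)$ with $p\in B_j$, for $O(\log^{d+1} n)$. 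Adding these yields $O(\log^{d+1} n + k)$ per event, the additive $k$ arising only from the query of \texttt{U4}.

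The hard part is ensuring the bound is genuinely worst-case rather than amortized, and that the auxiliary bookkeeping does not blow past $O(\log^{d+1} n)$. The first point is handled by the rank-based range tree of Theorem~\ref{the:RBRTcomplexity}, whose skeleton is never rebalanced, so every update and every canonical-set membership change is charged a deterministic $O(\log n)$ factor. The second point rests on the counting bound from the same theorem --- each point lies in only $O(\log^d n)$ pairs $(B_j,R_j)$ --- together with the observation that a single $u$-swap alters only the memberships of $p$ and $q$, so Steps \texttt{U2}, \texttt{U3}, and \texttt{U5} each perform at most $O(\log^d n)$ red-black-tree operations. Once locality and these counting bounds are secured, the summation is routine.
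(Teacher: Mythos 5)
Your proposal is correct and follows essentially the same route as the paper's proof: the $O(n^2)$ event bound comes from bounded-degree trajectories giving $O(1)$ swaps per pair per coordinate direction, and the per-event cost is obtained by charging Steps \texttt{U1}--\texttt{U5} individually ($O(\log n)$ for the sorted list, $O(\log^{d+1} n)$ for the RBRT and the $O(\log^d n)$ red-black-tree updates on the lists $L(B_j)$, and $O(\log^d n + k)$ for recomputing ${\cal K}_l(p)$ via Lemma~\ref{the:SortingLists}). If anything, your write-up spells out the locality observation and the bookkeeping in Steps \texttt{U3} and \texttt{U5} more explicitly than the paper, which dismisses them as obvious.
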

\begin{proof}
Each swap event in a kinetic sorted list can be handled in $O(\log n)$ time (Step \texttt{U1}). Since each update (insertion/deletion) to $L(B_j)$ takes $O(\log n)$ time, and since each point is in $O(\log^d n)$ sets $B_j$, Step \texttt{U2} takes $O(\log^{d+1}n)$ time. It is obvious that the processing time of Steps \texttt{U3} and \texttt{U5} is $O(\log^{d+1} n)$. From Lemma~\ref{the:SortingLists}, Step \texttt{U4} takes $O(\log^d n +k)$ time.

The trajectories of the points are given by  bounded degree polynomials, so the number of events, \ie, changes to the order of the points, is $O(n^2)$.
\end{proof}

\paragraph{Handling $x$-swap events.}
Consider an $x$-swap event between two consecutive points $p$ and $q$ with $p$ preceding $q$. This event does not change the elements of the pairs $(B_j,R_j)\in \Psi_l$, but this event changes the $k$-SYG if both $p$ and $q$ are in the cone $C_l(w)$, for some $w$ such that $\ddot{w}_{l,k}=p$. In particular, we perform the following updates when two points $p$ and $q$ exchange their $x_l$-order.
\begin{enumerate}
\item[\texttt{X1)}] We update the kinetic sorted list $L(x_l)$; this takes $O(\log n)$ time.
\item[\texttt{X2)}] We update ${\cal T}_l$, which takes $O(\log ^{d+1} n)$ time.
\item[\texttt{X3)}] We find all the sets $R_j$ where both $p$ and $q$ belong to $R_j$ and such that $r_{j,k}=p$. Also, we find all the sets $R_j$ where $r_{j,k}=q$. This step takes $O(\log^d n)$ time.
\item[\texttt{X4)}] For each $R_j$ (from Step \texttt{X3}), we extract all the points $w$ in the sorted lists $L(B_j)$ such that $\ddot{w}_{l,k}=p$. Note that each change to the value of $\ddot{w}_{l,k}$ is a change to the \ksyg.
\item[\texttt{X5)}] For each $w$ (from Step \texttt{X4}), we update all the sorted lists $L(B_j)$ where $w\in B_j$: we delete $w$ from the sorted lists $L(B_j)$, update the previous value of $\ddot{w}_{l,k}$, which is $p$, by the new value $q$, and add $w$ back to the sorted lists $L(B_j)$ according to the label of $\ddot{w}_{l,k}$.
\end{enumerate}

To prepare for Lemma~\ref{the:XswapEvents} below, which summarizes the complexity of handling $x$-swap events, we first give, in Lemma~\ref{the:allKSYGchanges}, an upper bound for the number of changes to the \ksyg~of a set of moving points.

\begin{lemma}\label{the:allKSYGchanges}
The number of changes to the \ksyg~of a set of $n$ moving points, where the coordinates of each point are given by polynomial functions of at most constant degree $s$, is $\chi_k=O(n\phi(n))$, where $\phi(n)$ denotes the complexity of the $k$-level of  partially-defined polynomial functions of degree bounded by some constant $s$. 
\end{lemma}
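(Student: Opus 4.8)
The plan is to bound the total number of changes to the \ksyg~by summing, over all cones $C_l$ and all points $w\in P$, the number of times the set ${\cal K}_l(w)$ changes, and then to reduce the per-point count to a $k$-level complexity of a suitable family of univariate functions. First I would observe that a change to the \ksyg~occurs precisely when, for some point $w$ and some cone $C_l$, the membership of the set ${\cal K}_l(w)$ — the first $k$ points of $L(P\cap C_l(w))$ — changes; this happens when the $k^{\text{th}}$ point $\ddot{w}_{l,k}$ in the sorted list $L(P\cap C_l(w))$ is replaced. Since the number of cones $c$ is a constant (Lemma~\ref{the:NumofPolyCones}), it suffices to fix one cone $C_l$ and bound the total number of such changes summed over all $w\in P$, then multiply by the constant $c$.

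Next I would fix $w$ and a cone $C_l$ and translate the problem into a $k$-level question. For each other point $p_i\in P$, define a partially-defined function that records the $x_l$-coordinate of $p_i$ restricted to the time intervals during which $p_i\in C_l(w)$ (\ie, $p_i$ lies in the translated cone of $w$). Because the trajectories are polynomials of degree at most $s$, each such $x_l$-coordinate is itself a polynomial of degree at most $s$, and the domain of definition — the set of times when $p_i\in C_l(w)$ — is determined by the $d$ half-space inequalities bounding the cone, each of bounded degree. Thus each function is a partially-defined polynomial of degree bounded by a constant, and any two of them cross at most $s$ times. The point $\ddot{w}_{l,k}$ is exactly the function realizing the $k^{\text{th}}$ lowest value at each instant, so the number of changes to $\ddot{w}_{l,k}$ equals the number of breakpoints on the $k$-level of this family of $n-1$ functions, which is $O(\phi(n))$ by Theorem~\ref{the:k_levelComplexity}.

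Summing over the $n$ choices of $w$ gives $O(n\phi(n))$ changes for a single cone, and multiplying by the constant number $c$ of cones yields the claimed bound $\chi_k=O(n\phi(n))$. The main obstacle I anticipate is the \emph{partial-definedness} of the functions: one must argue carefully that the restriction of $x_l(p_i)$ to the time intervals when $p_i\in C_l(w)$ produces continuous, partially-defined functions whose pairwise intersections are still bounded by a constant, so that Theorem~\ref{the:k_levelComplexity} applies. This is precisely why the bound is stated in terms of the complexity $\phi(n)$ of the $k$-level of \emph{partially-defined} polynomial functions rather than totally-defined ones. A secondary subtlety is that naively one might worry the domain boundaries (cone-entry and cone-exit events) introduce extra breakpoints, but these are absorbed into the bounded number of intersections per pair and hence into the asymptotic $\phi(n)$ bound.
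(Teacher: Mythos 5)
Your proof is correct and follows essentially the same route as the paper's: fix a point $w$ and a cone $C_l$, view the $x_l$-coordinates of points during their stays inside $C_l(w)$ as $O(n)$ partially-defined polynomial functions, identify changes of $\ddot{w}_{l,k}$ with breakpoints on the $k$-level of this family (bounded by $\phi(n)$ via Theorem~\ref{the:k_levelComplexity}), and sum over the $n$ points and the constant number of cones. The extra care you take with partial-definedness and cone entry/exit events is precisely the point the paper compresses into ``there are $O(n)$ insertions/deletions into the cone, whose $x_l$-coordinates create $O(n)$ partial functions.''
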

\begin{proof}
Fix a point $p\in P$ and one of its cones $C_l(p)$. There are $O(n)$ insertions/deletions into the cone $C_l(p)$ over time. The $x_l$-coordinates of these points create $O(n)$ partial functions. The \ksyg~changes if a change to $\ddot{p}_{l,k}$ occurs. The number of all changes to $\ddot{p}_{l,k}$ is equal to  the complexity of the $k$-level of these $O(n)$ partial functions.

Therefore, summing over all the $n=|P|$ points, the number of changes to the \ksyg~is within a linear factor of $\phi(n)$: $\chi_k=O(n\phi(n))$. 
\end{proof}

\begin{lemma}\label{the:XswapEvents}
Our KDS for maintenance of the \ksyg~handles $O(n^2)$ $x$-swap events with a total cost of $O(n\phi(n)\log^{d+1}n)$.
\end{lemma}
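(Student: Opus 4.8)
The plan is to mirror the proof of Lemma~\ref{the:Xswap} for the case $k=1$, separating a fixed per-event overhead from a variable cost that is charged against the actual changes to the \ksyg, and then to invoke Lemma~\ref{the:allKSYGchanges} to bound the latter. First I would count the events: since each $x_l$-coordinate is a polynomial of degree at most $s=O(1)$, any two points swap their $x_l$-order $O(1)$ times, so the kinetic sorted list $L(x_l)$ generates $O(n^2)$ $x$-swap events by Theorem~\ref{the:KineticSL}.

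Next I would bound the overhead incurred at \emph{every} event, namely Steps \texttt{X1}--\texttt{X3} together with the search part of Step \texttt{X4}. Step \texttt{X1} costs $O(\log n)$; Step \texttt{X2} costs $O(\log^{d+1}n)$ for updating the $(d+1)$-dimensional RBRT (Theorem~\ref{the:RBRTcomplexity}); and Step \texttt{X3} costs $O(\log^d n)$ to locate the $O(\log^d n)$ relevant subscripts $j$ (Theorem~\ref{the:RBRTcomplexity}). In Step \texttt{X4}, because $L(B_j)$ is ordered by the labels of the $\ddot{w}_{l,k}$, the points $w$ with $\ddot{w}_{l,k}=p$ form a contiguous block retrievable in $O(\log n + z_j)$ time, where $z_j$ counts those points; summing over the $O(\log^d n)$ subscripts gives $O(\log^{d+1}n + \sum_j z_j)$. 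Hence the fixed part of a single event is $O(\log^{d+1}n)$, contributing $O(n^2\log^{d+1}n)$ over all $O(n^2)$ events.

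Then I would bound the variable cost. Every point $w$ extracted in Step \texttt{X4} has $\ddot{w}_{l,k}$ updated in Step \texttt{X5}, and each such update is by construction a genuine change to the \ksyg. Consequently the grand total of extracted points, $\sum_j z_j$ accumulated over all $x$-swap events, equals the number $\chi_k$ of changes to the \ksyg, which is $O(n\phi(n))$ by Lemma~\ref{the:allKSYGchanges}. Each change is serviced in Step \texttt{X5} in $O(\log^{d+1}n)$ time, by deleting $w$ from and reinserting it into the $O(\log^d n)$ lists $L(B_j)$ that contain it, each operation costing $O(\log n)$; the variable cost thus totals $O(\chi_k\log^{d+1}n)=O(n\phi(n)\log^{d+1}n)$. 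Adding the two parts yields $O(n^2\log^{d+1}n + n\phi(n)\log^{d+1}n)$, and since $\phi(n)$ is superlinear for every $s\geq 2$ (Theorem~\ref{the:k_levelComplexity}) we have $n^2=O(n\phi(n))$, so the overhead is absorbed and the bound reduces to $O(n\phi(n)\log^{d+1}n)$.

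The part I expect to require the most care is the charging argument: I must verify that the points extracted across all events are in bijection with the changes to the $\ddot{w}_{l,k}$ values---so that no change is double-counted and none is overlooked---which is precisely what licenses equating $\sum_j z_j$ (over all events) with $\chi_k$. The second point to state explicitly is the reliance on $\phi(n)=\Omega(n)$ to swallow the $O(n^2\log^{d+1}n)$ per-event overhead into the final bound.
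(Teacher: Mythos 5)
Your proposal is correct and follows essentially the same route as the paper's proof: the same decomposition into a fixed $O(\log^{d+1}n)$ per-event overhead over $O(n^2)$ events plus a variable cost charged bijectively to changes of the $\ddot{w}_{l,k}$ values, bounded by $\chi_k=O(n\phi(n))$ via Lemma~\ref{the:allKSYGchanges}, with each change serviced in $O(\log^{d+1}n)$ time in Step \texttt{X5}. Your explicit remark that absorbing the $O(n^2\log^{d+1}n)$ overhead requires $n=O(\phi(n))$ is in fact slightly more careful than the paper, which performs this absorption implicitly by writing $O(n^2\log^{d+1}n+\chi_k)=O(\chi_k)$.
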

\begin{proof}
The complexities of the first three steps are clear. For each found $R_j$ from Step \texttt{X3}, Step \texttt{X4} takes $O(\log n + \xi_j)$ time, where $\xi_j$ is the number of points $w\in B_j$ such that $\ddot{w}_{l,k}=p$. Thus, for all the $O(\log^d n)$ sets $R_j$ of Step \texttt{X3}, Step \texttt{X4} takes $O(\log^{d+1}n +\sum_j \xi_j)$ time, where $\sum_j \xi_j$ is the number of exact changes to the \ksyg. Therefore, for all the $O(n^2)$ $x$-swap events, the total processing time for this step is $O(n^2\log^{d+1}n+\chi_k)=O(\chi_k)$.

The processing time for Step \texttt{X5} is a function of $\chi_k$. For each change to the \ksyg, this step spends $O(\log^{d+1}n)$ time to update the sorted lists $L(B_j)$. Thus the total processing time for all the $x$-swap events in this step is $O(\chi_k*\log^{d+1}n)$.
\end{proof}

Now we can obtain the following.
\begin{theorem}\label{the:KinetickSYG}
For a set of $n$ moving points in $\mathbb{R}^d$, where the coordinates of each point are polynomial functions of at most constant degree $s$, our \ksyg~KDS uses $O(n\log^{d+1} n)$ space and $O(n\log^{d+1} n)$ preprocessing time, and handles $O(n^2)$ events with a total cost of $O(kn^2 + n\phi(n)\log^{d+1}n)$.
\end{theorem}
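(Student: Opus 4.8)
The plan is to assemble the theorem from the per-event lemmas of this subsection, together with a space and preprocessing analysis of the augmented data structure, in exactly the same way that Theorem~\ref{the:KineticSYG} assembled the $k=1$ case. I would treat the four quantities (space, preprocessing time, number of events, total cost) separately, since each follows from a result already in hand.

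First I would bound the \textbf{space} and \textbf{preprocessing time}. The only structural change from the $k=1$ case is that each RBRT ${\cal T}_l$ now carries an extra level sorted by $x_l$, so it behaves as a $(d+1)$-dimensional RBRT. Applying Theorem~\ref{the:RBRTcomplexity} with $d$ replaced by $d+1$ gives $O(n\log^{d+1}n)$ storage and construction time for each of the $c=O(1)$ trees. The auxiliary structures — the sorted lists $L(B_j)$ ordered by the labels of $\ddot{p}_{l,k}$, the link sets, the values $r_{j,k}$, and the $O(1)$ kinetic sorted lists $L(u_i)$ and $L(x_l)$ — together occupy space proportional to the number of elements of all pairs $(B_j,R_j)$, which is $O(n\log^{d+1}n)$. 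Building the initial values and lists is the $(d+1)$-dimensional analogue of Lemma~\ref{the:PreProcStep}, whose dominant term is sorting the points inside all the $B_j$; here I would compute each $\ddot{p}_{l,k}$ by \emph{selecting} the $k^{th}$ element over the $O(\log^d n)$ canonical sorted lists (Frederickson--Johnson selection, the mechanism underlying Lemma~\ref{the:SortingLists}), which costs $O(\log^{d+1}n)$ per point and thus avoids any $kn$ term, keeping preprocessing at $O(n\log^{d+1}n)$.

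Next, the \textbf{number of events}. Because every coordinate is a polynomial of degree at most $s$, each of the $O(1)$ kinetic sorted lists $L(u_i)$ and $L(x_l)$ changes $O(n^2)$ times, so by Lemmas~\ref{the:UswapEvents} and~\ref{the:XswapEvents} the KDS processes $O(n^2)$ $u$-swap events and $O(n^2)$ $x$-swap events, for a total of $O(n^2)$ events.

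Finally, for the \textbf{total cost} I would simply add the two per-type bounds. By Lemma~\ref{the:UswapEvents} the $u$-swap events cost $O(n^2(\log^{d+1}n+k))=O(n^2\log^{d+1}n+kn^2)$, and by Lemma~\ref{the:XswapEvents} the $x$-swap events cost $O(n\phi(n)\log^{d+1}n)$; summing gives $O\bigl(kn^2+n^2\log^{d+1}n+n\phi(n)\log^{d+1}n\bigr)$. The only point needing care — and the main (if modest) obstacle — is showing the middle term is subsumed: since the $k$-level of $n$ partial polynomial pieces has complexity $\phi(n)=\Omega(n)$ (every bound in Theorem~\ref{the:k_levelComplexity} is superlinear), we have $n^2\log^{d+1}n=O(n\phi(n)\log^{d+1}n)$, so the total collapses to $O(kn^2+n\phi(n)\log^{d+1}n)$, as claimed. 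The $kn^2$ term cannot be absorbed in the same way, because $k$ may exceed $\phi(n)\log^{d+1}n/n$, which is precisely why it is retained explicitly.
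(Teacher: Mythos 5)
Your proposal is correct and follows essentially the same route as the paper, whose entire proof is the single sentence that the result ``obtains by combining Theorem~\ref{the:RBRTcomplexity} and Lemmas~\ref{the:UswapEvents} and~\ref{the:XswapEvents}'' --- exactly the combination you carry out. You merely make explicit what the paper leaves implicit (the $(d+1)$-dimensional RBRT bound for space/preprocessing, the summation $O(n^2(\log^{d+1}n+k)) + O(n\phi(n)\log^{d+1}n)$, and the absorption of the $n^2\log^{d+1}n$ term using $\phi(n)=\Omega(n)$, an absorption the paper itself performs inside the proof of Lemma~\ref{the:XswapEvents}), so no gap remains.
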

\begin{proof}
The proof obtains by combining the results of Theorem~\ref{the:RBRTcomplexity} and Lemmas~\ref{the:UswapEvents} and~\ref{the:XswapEvents}.
\end{proof}
\section{The Applications}\label{sec:applications}
\subsection{Kinetic All $k$-Nearest Neighbors}\label{sec:app_kNNs}
Let us be given a KDS for the \ksyg, a supergraph of the \knng~(from Theorems~\ref{the:KinetickSYG} and~\ref{the:KineticSYG}). This section shows how to maintain all the $k$-nearest neighbors over time. We first consider the case $k=1$, and then the general case, for any $k\geq 1$.

\paragraph{The case $k=1$.}\label{sec:KDSfor1NNs}
We use dynamic and kinetic tournament trees (see Section~\ref{sec:Preliminaries}) to maintain all the $1$-nearest neighbors. For each point $p$ in the \1syg, we create a dynamic and kinetic tournament tree $TT_p$, whose elements are the edges incident to $p$ in the \1syg. 

The following gives the complexity of our KDS for all $1$-nearest neighbors.

\begin{theorem}\label{the:KineticAllNN}
Our KDS for maintenance of all the $1$-nearest neighbors of a set of $n$ moving points in $\mathbb{R}^d$, where the coordinates of each point are polynomial functions of at most constant degree $s$, has the following properties. 
\begin{enumerate}
\item The KDS uses $O(n\log^d n)$ space and $O(n\log^{d+1} n)$ preprocessing time.
\item It processes $O(n^2)$ $u$-swap events, each in worst-case time $O(\log^{d+1} n)$.
\item It processes $O(n^2)$ $x$-swap events, for a total cost of $O(n^2\beta_{2s+2}(n)\log^{d+1} n)$.
\item The KDS processes $O(n^2\beta_{2s+2}^2(n)\log n)$ tournament events, and processing all the events takes $O(n^2\beta_{2s+2}^2(n)\log^2 n)$ time.
\item The KDS is efficient, amortized responsive, compact, and each point participates in $O(1)$ certificates on average.
\end{enumerate}
\end{theorem}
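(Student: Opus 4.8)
The plan is to analyze the structure as the superposition of two layers and to charge the tournament layer's cost on top of the already-established $\1syg$ bounds. The lower layer is the kinetic $\1syg$ of Section~\ref{sec:KDSfor1SYG} (Theorem~\ref{the:KineticSYG}), which supplies and maintains the $O(n)$ candidate edges; the upper layer is the family of dynamic kinetic tournament trees $\{TT_p\}_{p\in P}$, where $TT_p$ distills from the edges incident to $p$ the one of minimum Euclidean length, i.e. the current nearest neighbor of $p$. The single structural bridge between the layers is that every combinatorial change to the $\1syg$ replaces one incident edge $wp$ by a new edge $wq$, and this forces exactly $O(1)$ tournament updates: delete $wp$ from $TT_w$ and $TT_p$, insert $wq$ into $TT_w$ and $TT_q$. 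A second point to fix at the outset is the polynomial degree driving the tournament certificates: since each coordinate is a polynomial of degree $\leq s$, the squared length of an edge has degree $\leq 2s$, so two edge-length functions cross at most $2s$ times and Theorem~\ref{the:KineticTT} must be invoked with parameter $2s$, producing the factor $\beta_{2s+2}(n)$ throughout. The inherited $\1syg$ bound of Lemma~\ref{the:Xswap} carries the smaller $\beta_{s+2}(n)$, which I would simply absorb via $\beta_{s+2}(n)=O(\beta_{2s+2}(n))$.

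Items 1--3 then follow by adding the tournament layer to the $\1syg$ bounds. For item 1, each edge lies in exactly two trees, so $\sum_p|TT_p|=O(n)$; a tournament tree is built in linear time (Theorem~\ref{the:KineticTT}), so the tournament layer contributes only $O(n)$ to space and preprocessing, leaving the totals $O(n\log^d n)$ and $O(n\log^{d+1} n)$ of Theorem~\ref{the:KineticSYG} and Lemma~\ref{the:PreProcStep}. Item 2 is Lemma~\ref{the:Uswap} together with the $O(1)$ tournament operations a $u$-swap can trigger (at most one incident edge changes), each of cost $O(\log^2 n)$, which is dominated by the $O(\log^{d+1} n)$ $\1syg$ update. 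Item 3 is Lemma~\ref{the:Xswap} together with the tournament operations caused by the resulting $\1syg$ changes; since there are $O(n^2\beta_{2s+2}(n))$ such changes, each spawning $O(1)$ operations of cost $O(\log^2 n)$, this adds $O(n^2\beta_{2s+2}(n)\log^2 n)$, absorbed into $O(n^2\beta_{2s+2}(n)\log^{d+1} n)$.

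The heart of the theorem is item 4, which I would prove by an aggregate amortized analysis over all $n$ tournament trees. The total number $m$ of insertions and deletions over the whole motion is $O(n)$ initial insertions plus $O(1)$ per $\1syg$ change, hence $m=O(n^2\beta_{2s+2}(n))$. Since each $TT_p$ has maximum size $O(n)$, summing the per-tree bounds of Theorem~\ref{the:KineticTT} and collapsing $\sum_p m_p=m$ gives $O(m\,\beta_{2s+2}(n)\log n)=O(n^2\beta_{2s+2}^2(n)\log n)$ tournament events at total cost $O(m\,\beta_{2s+2}(n)\log^2 n)=O(n^2\beta_{2s+2}^2(n)\log^2 n)$. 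The delicate point here, and the one I expect to be the main obstacle, is justifying that Theorem~\ref{the:KineticTT} (stated for a \emph{single} tree undergoing $m$ operations) applies in aggregate to a \emph{family} of trees with time-varying individual counts $m_p$ and sizes: one must check that the charging arguments behind the single-tree bound are additive across trees, so that only $\sum_p m_p$ and the global size bound $O(n)$ enter, and that feeding the motion-dependent, itself near-quadratic quantity $m=O(n^2\beta_{2s+2}(n))$ into the bound legitimately yields the squared factor $\beta_{2s+2}^2(n)$.

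Finally, for item 5 I would count certificates and compare event classes. The live certificates are those of the $O(1)$ kinetic sorted lists (totalling $O(n)$) plus one per internal node of the tournament trees; since $\sum_p|TT_p|=O(n)$, the latter is also $O(n)$, so the KDS carries $O(n)$ certificates and is \emph{compact}. As each certificate involves $O(1)$ points, the total point--certificate incidence is $O(n)$, giving $O(1)$ certificates per point \emph{on average}; locality fails in the worst case because a single high-in-degree point can be an endpoint of edges in $\Theta(n)$ distinct trees. For \emph{efficiency}, the total number of events is $O(n^2\beta_{2s+2}^2(n)\log n)$, while the worst-case number of external events (actual nearest-neighbor changes) is near-quadratic, so the internal-to-external ratio is $O(\beta_{2s+2}^2(n)\log n)=O(\mathrm{polylog}(n))$. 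For \emph{amortized responsiveness}, dividing the total cost $O(n^2\beta_{2s+2}^2(n)\log^2 n)$ by the $O(n^2\beta_{2s+2}^2(n)\log n)$ events yields $O(\log n)$ amortized per event, and the $u$- and $x$-swap events are likewise polylogarithmic in amortized time, which completes the proof.
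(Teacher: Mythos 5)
Your proposal is correct and follows essentially the same route as the paper's own proof: items 1--3 are inherited from Theorem~\ref{the:KineticSYG}, item 4 is obtained by letting $m_p$ count insertions/deletions into $TT_p$, observing that each \1syg~edge change triggers $O(1)$ tournament updates so $\sum_p m_p=O(n^2\beta(n))$, and then applying Theorem~\ref{the:KineticTT} in aggregate, and item 5 follows from the $O(n)$ total size of the tournament trees and sorted lists together with the internal-to-external event ratio. The points you flag as delicate (summing the single-tree bound of Theorem~\ref{the:KineticTT} over the family $\{TT_p\}$, and the degree-$2s$ edge-length functions forcing $\beta_{2s+2}$) are handled implicitly and identically in the paper, so your treatment is if anything slightly more explicit, not different.
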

\begin{proof}
Theorem~\ref{the:KineticSYG} gives the statements $1-3$. 

Let  $m_p$ be the number of insertions/deletions into $TT_p$. By Theorem~\ref{the:KineticTT}, all $TT_p$, for all $p\in P$, generate at most $O(\sum_pm_p\beta_{2s+2}(n)\log n)=O(\beta_{2s+2}(n)\log n\sum_pm_p)$ events. Since each edge is incident to two points, inserting (resp. deleting) an edge $pq$ into the \1syg~causes two insertions (resp. deletions) into $TT_p$ and $TT_q$. The number of all edge insertions/deletions into the \1syg~is $O(n^2\beta_{s+2}(n))$ (Theorem 6 of~\cite{Rahmati2014}, so  $\sum_pm_p=O(n^2\beta_{s+2}(n))$. Hence the number of all events by all the dynamic and kinetic tournament trees is $O(n^2\beta^2_{2s+2}(n)\log n)$, and the total cost is $O(n^2\beta^2_{2s+2}(n)\log^2 n)$. 

The ratio of the number of internal events $O(n^2\beta^2_{2s+2}(n)\log n)$ to the number of external events $O(n^2\beta_{2s})$ is polylogarithmic, which implies that the KDS is efficient. 

The ratio of the total processing time to the number of internal events that the KDS processes is polylogarithmic, and so the KDS is amortized responsive.

The total size of all the tournament trees is $O(n)$, so the number of certificates of the tournament trees is linear. Also, the number of all certificates corresponding to the kinetic sorted lists $L(u_i)$ and $L(x_l)$ is linear. Thus the KDS is compact.  Since the number of all certificates is $O(n)$, each point participates in a constant number of certificates  on average.
\end{proof}

\paragraph{The general case: any $k\geq 1$.}\label{sec:KDSforkNNs}
For maintenance of the $k$-nearest neighbors to each point $p\in P$, for any $k\geq 1$, we need to track the order of the edges incident to $p$ in the \ksyg~according to their Euclidean lengths. This can easily be done by using a kinetic sorted list. 

Let $E_p$ be the set of edges incident to point $p\in P$ in the \ksyg. Let $L(E_p)$ denote a kinetic sorted list that maintains the edges in $E_p$ according to their Euclidean lengths. The following gives the complexity of our kinetic approach.

\begin{theorem}\label{the:KinetickNNs}
For a set of $n$ moving points in $\mathbb{R}^d$, where the coordinates of each point are given by polynomials of at most constant degree $s$, our KDS for maintenance of all the $k$-nearest neighbors, ordered by distance from each point, uses $O(n\log^{d+1} n +kn)$ space and $O(n\log^{d+1} n + kn\log n)$ preprocessing time. Our KDS handles $O(n^2\phi(n))$ events, each in  amortized time $O(\log n)$.
\end{theorem}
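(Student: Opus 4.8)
The plan is to build on the \ksyg~KDS of Theorem~\ref{the:KinetickSYG}, which maintains a supergraph of the \knng. Because the \knng~is a subgraph of the \ksyg~(Lemma~\ref{the:NNGsubSYG}), the $k$ nearest neighbors of any point $p$ are precisely the endpoints of the $k$ shortest edges incident to $p$ in the \ksyg. I would therefore attach to each point $p$ the kinetic sorted list $L(E_p)$ that keeps the edges of $E_p$ in increasing order of Euclidean length; its first $k$ entries give the $k$ nearest neighbors of $p$ in sorted order. As the points move, these lists stay correct in two ways: the \ksyg~KDS signals each edge that must be inserted into or deleted from some $E_p$, and the sorted lists themselves produce swap events whenever two incident edges exchange length order.

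For the space bound I would use that the \ksyg~has $O(kn)$ edges (Corollary~\ref{the:kSYG_Construction}) and that each edge $pq$ lies in exactly the two lists $L(E_p)$ and $L(E_q)$, so $\sum_p|E_p|=O(kn)$ and the lists occupy $O(kn)$ space; adding the $O(n\log^{d+1}n)$ used by the \ksyg~KDS gives $O(n\log^{d+1}n+kn)$. For preprocessing, constructing $L(E_p)$ costs $O(|E_p|\log|E_p|)=O(|E_p|\log n)$ by Theorem~\ref{the:KineticSL}, which sums to $O(kn\log n)$; together with the $O(n\log^{d+1}n)$ setup of the \ksyg~KDS this yields the claimed preprocessing time.

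The heart of the argument is the event count. Every change the \ksyg~KDS makes to an edge is a single insertion or deletion into the two incident lists, so by Lemma~\ref{the:allKSYGchanges} the total number of list updates is $\sum_p m_p=O(\chi_k)=O(n\phi(n))$, where $m_p$ denotes the number of insertions/deletions into $L(E_p)$. Applying Theorem~\ref{the:KineticSL} to each list, $L(E_p)$ generates $O(m_p\,N_p)$ swap events, where $N_p$ is the maximum size of $E_p$ over time; bounding $N_p\le n$ and summing gives $\sum_p m_p N_p \le n\sum_p m_p = O(n^2\phi(n))$. Together with the $O(n^2)$ events of the underlying \ksyg~KDS, which is dominated, the structure processes $O(n^2\phi(n))$ events. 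Each swap event in a sorted list is handled in $O(\log n)$ time (Theorem~\ref{the:KineticSL}), so the lists contribute $O(n^2\phi(n)\log n)$ in total; the $O(n\phi(n))$ edge insertions/deletions add another $O(n\phi(n)\log n)$, and the total cost $O(kn^2+n\phi(n)\log^{d+1}n)$ of the \ksyg~KDS (Theorem~\ref{the:KinetickSYG}) is dominated by $O(n^2\phi(n)\log n)$ for fixed $d$ (using $k\le n$ and $\phi(n)=\omega(n)$). Dividing this total running time by the $O(n^2\phi(n))$ events gives the amortized bound $O(\log n)$ per event.

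The main subtlety, and the step I would be most careful about, is that individual vertex degrees in the \ksyg~are \emph{not} $O(k)$: a single point can be the cone-representative of $\Theta(n)$ others, so $|E_p|$ and hence the list size $N_p$ can be linear. The event bound therefore cannot rest on a per-list size of $O(k)$; it relies instead on charging against the global quantity $\chi_k=O(n\phi(n))$ of \ksyg~changes (Lemma~\ref{the:allKSYGchanges}) and absorbing the crude factor $N_p\le n$ only once, in the product $\sum_p m_p N_p$. I would also verify explicitly that it is the requirement of maintaining the \emph{full} sorted order (rather than merely the top $k$) that forces the $O(m_p N_p)$ sorted-list bound instead of a cheaper tournament-tree bound; this is the price paid for reporting the neighbors ordered by distance, and it is what separates this analysis from the $k=1$ tournament-tree construction of Theorem~\ref{the:KineticAllNN}.
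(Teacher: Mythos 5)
Your proposal is correct and follows essentially the same route as the paper's proof: maintain a kinetic sorted list $L(E_p)$ of incident \ksyg~edges per point, bound $\sum_p m_p = O(n\phi(n))$ via Lemma~\ref{the:allKSYGchanges}, apply Theorem~\ref{the:KineticSL} with maximum list size $O(n)$ to get $O(n^2\phi(n))$ events at $O(\log n)$ each, and combine with Theorem~\ref{the:KinetickSYG} for the space, preprocessing, and total-cost bounds. Your added remarks (that per-point degree can be $\Theta(n)$, and that the domination $O(kn^2+n\phi(n)\log^{d+1}n)=O(n^2\phi(n)\log n)$ needs justification) simply make explicit what the paper leaves implicit.
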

\begin{proof}
Let $m_p$ be the number of insertions/deletions to the set $E_p$ over time. Since the cardinality of $E_p$ is $O(n)$, each insertion into a kinetic sorted list $L(E_p)$ can cause $O(n)$ swaps. Each change (\eg, inserting/deleting an edge $pq$) to the \ksyg~creates two insertions/deletions in the kinetic sorted lists $L(E_p)$ and $L(E_q)$; this implies that $\sum_p m_p=O(n\phi(n))$ (from Lemma~\ref{the:allKSYGchanges}). By Theorem~\ref{the:KineticSL}, all the kinetic sorted lists $L(E_p)$, for all $p\in P$, handle a total of $O(n\sum_p m_p)$  events, each in time $O(\log n)$. Combining with Theorem~\ref{the:KinetickSYG}, we obtain the total processing time $O(kn^2 + n\phi(n)\log^{d+1}n + n^2\phi(n)\log n) = O(n^2\phi(n)\log n)$ for all the events.
\end{proof}

Now we measure the performance of our KDS for maintenance of all the $k$-nearest neighbors in $\mathbb{R}^d$ by the four standard criteria in the KDS framework.

\begin{lemma}\label{the:kNNsPerfCri}
The efficiency, responsiveness, compactness, and locality of our KDS for maintenance of all the $k$-nearest neighbors are $O({\phi(n)\over k\beta(n/k)})$, $O(\log n)$ in an amortized sense, $O(kn)$, and $O(k)$ on average, respectively.
\end{lemma}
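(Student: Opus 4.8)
The plan is to evaluate each of the four KDS criteria directly from their definitions (given in Section~\ref{sec:KDSframework}) using the event counts and processing times already established in Theorems~\ref{the:KinetickSYG} and~\ref{the:KinetickNNs} together with the bounds of Lemma~\ref{the:allKSYGchanges}. Recall that efficiency is the ratio of internal to external events, responsiveness is the time per event, compactness is the number of certificates, and locality is the maximum number of certificates on any single point.

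First I would compute the efficiency. The \emph{external} events are precisely the changes to the set of $k$-nearest neighbors, which equal the changes to the \ksyg, namely $\chi_k=O(n\phi(n))$ by Lemma~\ref{the:allKSYGchanges}; more sharply, using the $k$-sensitive bound from Theorem~\ref{the:k_levelComplexity} this is $O(n\cdot\phi(k)(n/k)\beta(n/k))$, but for the ratio it is cleaner to keep $\phi(n)$ and express the external count as $\Theta(kn\cdot\beta(n/k)\cdot(\phi(n)/(k\beta(n/k))))$-type quantity. The \emph{internal} events are the total events processed, $O(n^2\phi(n))$ from Theorem~\ref{the:KinetickNNs}. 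Dividing internal by external and simplifying against the $k$-sensitive form of $\phi$ yields the stated efficiency $O(\phi(n)/(k\beta(n/k)))$; this is the one step where I expect some care is needed, since one must correctly interpret which quantity plays the role of the external-event count and invoke the $k$-sensitive conversion of Theorem~\ref{the:k_levelComplexity} to get the $k\beta(n/k)$ denominator rather than a cruder bound.

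Next, responsiveness follows immediately: Theorem~\ref{the:KinetickNNs} states that each event is handled in amortized time $O(\log n)$, so the responsiveness is $O(\log n)$ in the amortized sense, exactly as claimed. For compactness I would sum the certificate counts across all the data structures. The kinetic sorted lists $L(u_i)$ and $L(x_l)$ contribute $O(n)$ certificates each (one per consecutive pair), and the kinetic sorted lists $L(E_p)$ contribute certificates in proportion to $\sum_p |E_p|=O(kn)$, since the \ksyg~has $O(kn)$ edges (each point has $O(k)$ incident edges, as $|{\cal K}_l(p)|=k$ over a constant number $c$ of cones). The dominating term is therefore $O(kn)$, giving the stated compactness.

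Finally, locality is the maximum number of certificates charged to a single point, on average. A point $p$ participates in a constant number of certificates in each sorted list $L(u_i)$ and $L(x_l)$, but in $L(E_p)$ it participates in a number of certificates proportional to $|E_p|=O(k)$ (its degree in the \ksyg, summed over the constant number of cones), so each point is in $O(k)$ certificates on average, matching the claim. The only genuine obstacle is the efficiency computation; the remaining three criteria are direct substitutions of the already-proven event counts, certificate totals, and per-point degree bounds into the definitions of Section~\ref{sec:KDSframework}, so I would present them briefly and devote the bulk of the argument to justifying the $k$-sensitive simplification that produces the $O(\phi(n)/(k\beta(n/k)))$ efficiency.
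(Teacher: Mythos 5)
Your efficiency argument contains a genuine gap. You identify the external events with the changes to the \ksyg~and bound them by $\chi_k=O(n\phi(n))$ via Lemma~\ref{the:allKSYGchanges}. That is the wrong quantity: the attribute being maintained is the (ordered) set of $k$-nearest neighbors of each point, not the \ksyg~itself, and the two counts are genuinely different. With your identification the internal-to-external ratio would be $O(n^2\phi(n))/O(n\phi(n))=O(n)$, not the claimed $O(\phi(n)/(k\beta(n/k)))$; the subsequent rewriting of $O(n\phi(n))$ as a ``$\Theta(kn\beta(n/k)\cdot(\phi(n)/(k\beta(n/k))))$-type quantity'' is circular --- it merely renames the same bound and cannot produce the stated efficiency. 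The missing idea, which is exactly what the paper's proof supplies, is a separate count of the external events via the $(\leq k)$-level: fix a point $p$; the distances from the other $n-1$ points to $p$ form $n-1$ functions, each pair intersecting at most $2s$ times; a change to the ordered $k$-nearest neighbors $p_1,\dots,p_k$ of $p$ is a breakpoint on the $(\leq k)$-level of these functions, whose complexity is $O(kn\beta(n/k))$ by Theorem~\ref{the:k_levelComplexity}. Summing over all $n$ points gives $O(kn^2\beta(n/k))$ external events, and dividing the $O(n^2\phi(n))$ processed events by this count yields $O(\phi(n)/(k\beta(n/k)))$. Nothing in your proposal invokes the $(\leq k)$-level, so the step you flagged as ``where some care is needed'' is in fact unfixable along the route you describe.

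The other three criteria you handle essentially as the paper does: responsiveness is immediate from the amortized $O(\log n)$ bound of Theorem~\ref{the:KinetickNNs}, and compactness follows from the $O(n)$ certificates of $L(u_i)$ and $L(x_l)$ plus the $O(kn)$ total size of the lists $L(E_p)$. One caution on locality: you justify it by asserting that each point has $|E_p|=O(k)$ incident edges, but this bounds only the out-degree of $p$ in the \ksyg~(the $k$ chosen points per cone over $c$ cones); the in-degree, and hence $|E_p|$, can be $\Theta(n)$ in the worst case --- the paper notes exactly this for the \1syg~in Section~\ref{sec:KDSfor1SYG}. The correct argument, and the paper's, is to divide the total certificate count $O(kn)$ across the $n$ points, which is precisely why the lemma claims $O(k)$ locality only \emph{on average} rather than per point.
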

\begin{proof}
Fix a point $p\in P$. The distances of the points of $P\backslash \{p\}$ to $p$ create $n-1$ functions, such that each pair of them intersects at most $2s$ times. The number of changes to the (ordered) $k$-nearest neighbors $p_1,...,p_k$ of $p$ is equal to the complexity of the $(\leq k)$-level, which is $O(kn\beta(n/k))$  (by Theorem~\ref{the:k_levelComplexity}). Thus the total number of changes, for all $p\in P$, is $O(kn^2\beta(n/k))$. Since our KDS  handles $O(\phi(n) n^2)$ events (by Theorem~\ref{the:KinetickNNs}), the efficiency is $O({\phi(n)\over k\beta(n/k)})$.

Each event in our KDS can be handled in amortized time $O(\log n)$. This implies the proof of the responsiveness of the KDS.


For each two consecutive elements in each of the kinetic sorted lists $L(u_i)$, $L(x_l)$, and $L(E_p)$, we have a certificate. The size of the kinetic sorted lists $L(u_i)$ and $L(x_l)$ is $O(n)$, and the size of the kinetic sorted lists $L(E_p)$, for all $p\in P$, is $O(kn)$. This implies that the compactness of our KDS is $O(kn)$, and the number of certificates corresponding to each point is $O(k)$ on average.
\end{proof}
\subsection{\rknn~Queries for Moving Points}
Suppose we are given a query point $q\notin P$ at some time $t$. To find the reverse $k$-nearest neighbors of $q$, we seek the points in each cone $C_l(q)$ of $q$ and find ${\cal K}_l(q)$, the set of the first $k$ points in $L(P\cap C_l(q))$. The union of ${\cal K}_l(q)$, $l=0,...,c-1$, contains a set of candidate points for $q$ such that $q$ might be one of their $k$-nearest neighbors. We check whether these candidate points are the reverse $k$-nearest neighbors of $q$ at time $t$ or not; this can be easily done by application of Theorem~\ref{the:KineticAllNN}/\ref{the:KinetickNNs}, which in fact maintains the $k^{th}$ nearest neighbor $p_k$ of each $p\in P$. Note that if one asks a query at time $t$, which is coincident with the time when an event occurs in the all $k$-nearest neighbors KDS, we first handle the event and then answer the query. 

The following theorem gives the main results of this section.

\begin{theorem}\label{the:KineticRkNNQ}
Consider a set $P$ of $n$ moving points in $\mathbb{R}^d$, where the coordinates of each one are given by bounded-degree polynomials. Our KDS uses $O(n\log^{d+1} n + kn)$ space and $O(n\log^{d+1} n + kn\log n)$ preprocessing time. At any time $t$, an \rknn~query can be answered in time $O(\log^d n+k)$, and the number of reverse $k$-nearest neighbors for the query point is $O(k)$. If an event occurs at time $t$, the KDS spends polylogarithmic amortized time on updating itself.
\end{theorem}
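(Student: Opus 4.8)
The plan is to derive all four assertions of the theorem by composing the all $k$-nearest neighbors KDS of Theorem~\ref{the:KinetickNNs} with the structural bound of Lemma~\ref{the:RkNNsUprBnd}; no genuinely new data structure is needed.

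For the space and preprocessing bounds I would observe that answering an \rknn~query reuses precisely the structures already maintained for all the $k$-nearest neighbors: the $c=O(1)$ range trees ${\cal T}_l$ and, for each $p\in P$, the maintained $k^{th}$ nearest neighbor $p_k$ (with $O(1)$ access to $|pp_k|$). Hence the $O(n\log^{d+1} n + kn)$ space and $O(n\log^{d+1} n + kn\log n)$ preprocessing time are inherited verbatim from Theorem~\ref{the:KinetickNNs}.

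Next I would establish the query bound via the two-phase strategy described before the statement. In the candidate phase, for each of the $c=O(1)$ cones $C_l(q)$ I query ${\cal T}_l$ to extract ${\cal K}_l(q)$, the first $k$ points of $L(P\cap C_l(q))$; by Lemma~\ref{the:SortingLists} each extraction costs $O(\log^d n+k)$, so this phase costs $O(\log^d n+k)$ and yields $O(k)$ candidates. In the verification phase, for each candidate $p$ I test $|pq|\leq|pp_k|$ at time $t$; since $p_k$ is maintained, each test is an $O(1)$ distance comparison, so verifying the $O(k)$ candidates costs $O(k)$, giving total query time $O(\log^d n+k)$. The completeness of the candidate set and the bound $O(k)$ on the number of reverse $k$-nearest neighbors are exactly Lemma~\ref{the:RkNNsUprBnd}: every reverse $k$-nearest neighbor of $q$ lies in $\bigcup_{l} {\cal K}_l(q)$, whose cardinality is $O(k)$ because $c$ is constant. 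The final clause follows from Theorem~\ref{the:KinetickNNs}: if an event coincides with a query time $t$, we first spend amortized $O(\log n)$ time updating the all $k$-nearest neighbors KDS and then answer the query.

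The one point needing care is the candidate phase for a query point $q\notin P$: the tree ${\cal T}_l$ is built and kinetically maintained on the points of $P$, so I must verify that a cone range query anchored at an \emph{external} point $q$ is answered by the same search that underlies Lemma~\ref{the:SortingLists}. This holds because that search descends purely on the $u_i$- and $x_l$-coordinate values of the query cone $C_l(q)$, which are well defined for every $q\in\mathbb{R}^d$; the ranks underlying the RBRT skeleton serve only to keep the tree balanced and are not consulted in locating the canonical nodes covering $C_l(q)$. Granting this, the theorem is a routine assembly of the cited results.
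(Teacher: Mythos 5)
Your proposal is correct and follows essentially the same route as the paper's own proof: candidates from ${\cal K}_l(q)$ via Lemma~\ref{the:SortingLists} and Lemma~\ref{the:RkNNsUprBnd}, $O(1)$ verification per candidate against the maintained $k^{th}$ nearest neighbors of Theorem~\ref{the:KinetickNNs}, and handling any coincident event before answering. Your extra check that the cone range search remains valid for a query apex $q\notin P$ (since the descent uses coordinate values, not ranks) is a subtlety the paper leaves implicit, but it does not change the argument.
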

\begin{proof}
From Lemma~\ref{the:SortingLists}, the $O(k)$ candidate points for the query point $q$ can be found in worst-case time $O(\log^d n +k)$. We use a KDS for maintenance of all the $k$-nearest neighbors over time (see  Theorem~\ref{the:KineticAllNN}/\ref{the:KinetickNNs}). Checking a candidate point can be done in $O(1)$ time by comparing distance $|pq|$ to distance $|pp_k|$; so it takes $O(k)$ time to check which of these candidate points (${\cal K}_l(q)$, $l=0,...,c-1$) are reverse $k$-nearest neighbors of the query point $q$.

If one asks a query at time $t$, which coincides with the time when one of the events in the KDS occurs, we first spend polylogarithmic amortized time to handle the event (by  Theorems~\ref{the:KineticAllNN} and~\ref{the:KinetickNNs}), and then spend worst-case time $O(\log^d n +k)$ to answer the query.
\end{proof}

\section{Kinetic All $(1+\epsilon)$-Nearest Neighbors}\label{sec:KineticEpsANN}
Let $q$ be the nearest neighbor of $p$ and let $\hat{q}$ be some point such that $|p\hat{q}|<(1+\epsilon).|pq|$. We call $\hat{q}$ the \textit{$(1+\epsilon)$-nearest neighbor} of $p$. In this section, we provide a KDS to maintain some $(1+\epsilon)$-nearest neighbor for any point $p\in P$. This KDS gives better performance than the KDS of Section~\ref{sec:app_kNNs} for maintenance of the exact all $1$-nearest neighbors.

Consider a cone $C_l$ of opening angle $\theta$, which is bounded by $d$ half-spaces. Let $x_l$ be a vector inside the cone $C_l$ that passes through the apex of $C_l$. Recall a CSPD $\Psi_{C_l}=\{(B_1,R_1),...,(B_m,R_m)\}$ for $P$ with respect to the cone $C_l$. Figure~\ref{fig:RNNgraph} depicts the cone $C_l$ and a pair $(B_i,R_i)\in \Psi_{C_l}$. 

\begin{figure}[h]
  \begin{center}
    \includegraphics[scale=1]{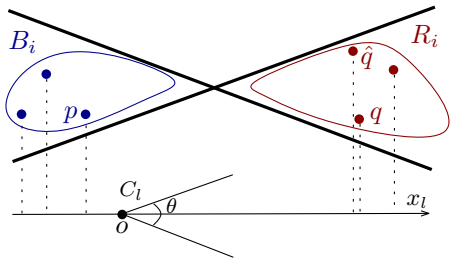}
  \end{center}
  \caption{A pair $(B_i,R_i)\in \Psi_{C_l}$.}
  \label{fig:RNNgraph}
\end{figure}

Let $b_i$ (resp. $r_i$) be the point with the maximum (resp. minimum) $x_l$-coordinate among the points in $B_i$ (resp. $R_i$). Let $E_l=\{(b_i,r_i)|~i=1,...,m\}$. We call the graph $G(P,E_l)$ the \textit{relative nearest neighbor graph} (or RNN$_l$ graph for short) with respect to $C_l$. Call the graph $G(P,\cup_l E_l)$ the \textit{RNN graph}.  The RNN graph has the following interesting properties: $(i)$ It can be constructed in $O(n\log^d n)$ time by using a $d$-dimensional RBRT, $(ii)$ it has $O(n\log^{d-1} n)$ edges, and $(iii)$ the degree of each point is $O(\log^d n)$. Lemma~\ref{the:RNNGlemma} below shows another property of the RNN graph which leads us to find some $(1+\epsilon)$-nearest neighbor for any point $p\in P$. 

\begin{lemma}\label{the:RNNGlemma}
Between all the edges incident to a point $p$ in the RNN graph, there exists an edge $(p,\hat{q})$ such that $\hat{q}$ is some $(1+\epsilon)$-nearest neighbor to $p$.
\end{lemma}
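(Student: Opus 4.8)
The plan is to exhibit, for a fixed point $p$, one explicit incident edge of the RNN graph whose other endpoint $\hat q$ is a $(1+\epsilon)$-nearest neighbor of $p$. First I would choose the opening angle $\theta$ of the cones small enough that $\cos(\theta/2)\ge 1/(1+\epsilon)$; by Lemma~\ref{the:NumofPolyCones} this still uses only a constant number $c=O(1/\theta^{d-1})$ of cones, so nothing in the complexity claims is affected. The argument then separates into a geometric step (a suitable candidate $\hat q$ approximates the nearest neighbor) and a combinatorial step (that candidate is in fact adjacent to $p$ in the RNN graph).

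For the geometric step, let $q$ be the exact nearest neighbor of $p$ and let $C_l$ be the cone with $q\in C_l(p)$. Take $\hat q$ to be the point of $P\cap C_l(p)$ of minimum $x_l$-coordinate. Since every $y\in C_l(p)$ makes an angle at most $\theta/2$ with the axis, its projection satisfies $|py|\cos(\theta/2)\le x_l(y)-x_l(p)\le |py|$. Applying the upper bound to $\hat q$, the lower bound to $q$, and using $x_l(\hat q)\le x_l(q)$ (minimality of $\hat q$) yields $|p\hat q|\le (x_l(\hat q)-x_l(p))/\cos(\theta/2)\le |pq|/\cos(\theta/2)\le (1+\epsilon)|pq|$, so $\hat q$ is a $(1+\epsilon)$-nearest neighbor. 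I would also record the monotonicity I need later: whenever $a\in C_l(b)$ one has $x_l(a)>x_l(b)$, because the cone axis makes an acute angle with every ray of the cone.

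For the combinatorial step I would locate the CSPD pair realizing the edge. Since $\hat q\in C_l(p)$, the separation and uniqueness properties in Theorem~\ref{the:RBRTcomplexity} give a unique pair $(B_j,R_j)$ with $p\in B_j$, $\hat q\in R_j$, and $R_j\subseteq C_l(p)$. As $\hat q$ minimizes $x_l$ over all of $C_l(p)$, it also minimizes it over $R_j$, so $\hat q=r_j$. What remains is to prove $p=b_j$, i.e.\ that $p$ attains the maximum $x_l$-coordinate in $B_j$; this is exactly what makes the edge $(b_j,r_j)=(p,\hat q)$ incident to $p$ and finishes the proof.

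The main obstacle is this last identification. Writing $b=b_j$ and supposing $b\ne p$, the separation property applied to $b\in B_j$, $\hat q\in R_j$ gives $\hat q\in C_l(b)$, hence $x_l(\hat q)>x_l(b)\ge x_l(p)$. If one could further show $b\in C_l(p)$, then $b$ would be a point of $C_l(p)$ with $x_l(b)<x_l(\hat q)$, contradicting the minimality that defines $\hat q$ and forcing $b=p$. The trouble is that $b\in B_j$ does not by itself place $b$ in $C_l(p)$: two points of a single canonical block need not be comparable in the cone order, so a priori $b_j$ could lie outside $C_l(p)$ yet still have a larger axis-coordinate than $p$. I therefore expect the real work to be in exploiting the finer, axis-parallel canonical structure of the range-tree blocks of Theorem~\ref{the:RBRTcomplexity} to rule out such a dominating $b_j\notin C_l(p)$; failing a clean argument there, the fallback is to connect $p$ to the minimum-$x_l$ point among the $r_{j'}$ with $p=b_{j'}$ and to charge its length against $|pq|$ through the same cone inequality. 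Settling one of these two routes is the crux.
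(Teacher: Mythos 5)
Your geometric step is fine (and in fact gives a clean, self-contained replacement for the paper's citation of Abam--de Berg), but the combinatorial step has a genuine gap, and it sits exactly where you suspected: for your choice of $\hat q$ (the minimum-$x_l$ point of all of $P\cap C_l(p)$), the identification $p=b_j$ is not merely hard to prove --- it is false in general, so neither of your two fallback routes can close it. Concretely, in the plane with half-angle $30^\circ$ around the positive $x$-axis, take $p=(0,0)$, $q=(1,0)$ (the exact nearest neighbor), $\hat q = 1.15\,(\cos 30^\circ, -\sin 30^\circ)\approx(0.996,-0.575)$ (inside the cone, projection $0.996<1$, distance $1.15$), and $b=\hat q-(\cos 30^\circ,\sin 30^\circ)\approx(0.13,-1.08)$. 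Then $\hat q\in C_l(b)$, $|pb|\approx 1.08>|pq|$, $b\notin C_l(p)$ (so $\hat q$ really is the minimum-projection point of the cone), and $x_l(b)>x_l(p)$. A pair $\bigl(\{p,b\},\{\hat q\}\bigr)$ is consistent with every property listed in Theorem~\ref{the:RBRTcomplexity}, and for such a pair the RNN edge generated is $(b,\hat q)$, not $(p,\hat q)$: within a canonical set $B_j$ points are only separated from $R_j$, not from each other, so no finer exploitation of the block structure can rule out a dominating $b$. Your fallback has the same hole: to charge the shortest edge among pairs with $p=b_{j'}$ against $|pq|$, you must exhibit one such pair whose $r_{j'}$ has projection at most $x_l(q)$, which again requires knowing $p$ is the maximum in some specific pair.

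The paper resolves this by choosing the pair through the exact nearest neighbor rather than through the minimum-projection point: let $(B_i,R_i)$ be the unique pair with $p\in B_i$ and $q\in R_i$, and set $\hat q:=r_i$, the minimum-$x_l$ point of $R_i$ only. The identification $p=b_i$ now does follow: every $b\in B_i$ lies in the reflected cone $\bar{C}_l(q)$ (separation property applied to $b$ and $q$), and since $q$ is the \emph{exact} nearest neighbor of $p$ we have $|pq|\le|pb|$, so the triangle/angle argument of Lemma~\ref{the:keyLemma2} (applied in mirrored form, with apex $q$, axis $-x_l$, and opening angle $\theta\le\pi/3$) forces $p$ to maximize $x_l$ over $B_i$. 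This is precisely the leverage your configuration lacks: for your $\hat q$, the apex of the relevant reflected cone is only an approximate nearest neighbor, so the angle argument has no traction, and my example above exploits exactly that slack. Once $p=b_i$ is in hand, your own projection inequality finishes the proof verbatim with $r_i$ in place of your $\hat q$, since $q\in R_i$ gives $x_l(r_i)\le x_l(q)$ and hence $|pr_i|\le\bigl(x_l(q)-x_l(p)\bigr)/\cos(\theta/2)\le(1+\epsilon)|pq|$.
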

\begin{proof}
Let $q$ be the nearest neighbor to $p$ and let $q\in C_l(p)$. From the definition of a CSPD with respect to $C_l$, for $p$ and $q$ there exists a unique pair $(B_i,R_i)\in \Psi_{C_l}$ such that $p\in B_i$ and $q\in R_i$. From Lemma~\ref{the:keyLemma2}, $p$ has the maximum $x_l$-coordinate among the points in $B_i$. 

Let $\hat{q}$ be the point with the minimum $x_l$-coordinate among the points in $R_i$. For any $\epsilon>0$, there exist an appropriate angle $\theta$ and a vector $x_l$ such that $|p\hat{q}|+(1+\epsilon).|q\hat{q}|\leq (1+\epsilon).|pq|$~\cite{Abam:2011:KSX:1971362.1971367}; this satisfies that $|p\hat{q}|\leq (1+\epsilon).|pq|$. 

Therefore, the edge $(p,\hat{q})$ which is an edge of the RNN graph gives some $(1+\epsilon)$-nearest neighbor.
\end{proof}

Consider the set $E_l$ of the edges of the RNN$_l$ graph. Let $N_l(p)=\{r_i|~(b_i,r_i)\in E_l~and~b_i=p\}$. Denote by $n_l(p)$ the point in $N_l(p)$ whose $x_l$-coordinate is minimum. Let $L(N_l(p))$ be a sorted list of the points in $N_l(p)$ in ascending order according to their $x_l$-coordinates; the first point in $L(N_l(p))$ gives $n_l(p)$. 

From Lemma~\ref{the:RNNGlemma}, if the nearest neighbor of $p$ is in some set $R_i$, then $r_i$ gives some $(1+\epsilon)$-nearest neighbor to $p$. Note that we do not know which cone $C_l(p)$, $0\leq l\leq c-1$, of $p$ contains the nearest neighbor of $p$, but it is obvious that the nearest point to $p$ among these $c$ points $n_0(p),...,n_{c-1}(p)$ gives some $(1+\epsilon)$-nearest neighbor of $p$. Thus for all $l=0,...,c-1$, we track the distances of all the $n_l(p)$ to $p$ over time. A kinetic sorted list (or a tournament tree) $KSL(p)$ of size $c$ with $O(1)$ certificates can be used to maintain the nearest point to $p$. 

Similar to Section~\ref{sec:kineticKSYG} we handle two types of events, \textit{$u$-swap events} and \textit{$x$-swap events}. Note that we do not need to define a certificate for each two consecutive points in $L(N_l(.))$.  The following shows how to apply changes (\eg, insertion, deletion, and exchanging the order between two consecutive points) to the sorted lists $L(N_l(.))$ when an event occurs.

Each event can make $O(\log^d n)$ updates to the edges of $E_l$. Consider an updated pair $(b_i,r_i)$ that the value of $r_i$ (resp. $b_i$) changes from $p$ to $q$. For this update, we must delete $p$ (resp. $r_i$) form the sorted list $L(N_l(b_i))$ (resp. $L(N_l(p))$) and insert $q$ (resp. $r_i$) into $L(N_l(b_i))$ (resp. $L(N_l(q))$). If the event is an $x$-swap event, we must find all the subscripts $i$ where $r_i=q$ and check whether $n_l(b_i)=p$ or not; if so, $p$ and $q$ are in the same set $N_l(.)$ and we need to exchange their order in the corresponding sorted list $L(N_l(.))$. 

Now the following theorem gives the main result of this section. 

\begin{theorem}\label{the:KinEpsANN}
Our KDS for maintenance of all the $(1+\epsilon)$-nearest neighbors of a set of $n$ moving points in $\mathbb{R}^d$, where the trajectory of each one is an algebraic function of constant degree $s$, uses $O(n\log^{d} n)$ space and handles $O(n^2\log^d n)$ events, each in the worst-case time $O(\log^d n\log\log n)$. The KDS is compact, efficient, responsive, and local.
\end{theorem}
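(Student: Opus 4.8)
The plan is to verify the five quantities in the statement in turn — space, event count, worst-case time per event, and then the four framework criteria — drawing on the apparatus already assembled above: the $c=O(1)$ kinetic RBRTs ${\cal T}_l$, the edge sets $E_l$ of the RNN$_l$ graphs, the sorted lists $L(N_l(\cdot))$, and the constant-size structures $KSL(\cdot)$. For the space bound I would simply sum the storage: by Theorem~\ref{the:RBRTcomplexity} each of the $O(1)$ RBRTs uses $O(n\log^d n)$ storage, which dominates; the RNN graph has $O(n\log^{d-1}n)$ edges, so the lists $L(N_l(\cdot))$ total $O(n\log^{d-1}n)$, and the $n$ structures $KSL(\cdot)$ of size $c=O(1)$ total $O(n)$. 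Hence $O(n\log^d n)$.

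For the number of events, the point is that certificates live only in the kinetic sorted lists $L(u_1),\dots,L(u_d)$, $L(x_l)$ and in the $KSL(\cdot)$, since, as noted in the construction, no certificate is stored between consecutive entries of $L(N_l(\cdot))$. The $d+c=O(1)$ coordinate lists generate $O(n^2)$ $u$-swap and $x$-swap events on the fixed point set (each pair swaps a constant number of times). Each such swap triggers worst-case $O(\log^d n)$ structural changes in ${\cal T}_l$ (Theorem~\ref{the:RBRTcomplexity}), hence at most $O(\log^d n)$ changes to the extremes $b_i,r_i$, \ie~to the edges of $E_l$; every edge change is a single insert/delete into some $L(N_l(\cdot))$ and possibly a single update of an $n_l(\cdot)$ value, hence at most one insert/delete into the corresponding $KSL(\cdot)$. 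Summing, the $KSL$ structures receive $O(n^2\log^d n)$ updates in total, and being of constant size they generate only $O(n^2\log^d n)$ events (Theorem~\ref{the:KineticSL} or~\ref{the:KineticTT} with constant maximum size). The grand total is $O(n^2\log^d n)$.

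For the time to process one event I would proceed in order: update the triggering coordinate list in $O(\log n)$; update ${\cal T}_l$ in worst-case $O(\log^d n)$ (Theorem~\ref{the:RBRTcomplexity}); then for each of the $O(\log^d n)$ affected pairs recompute $b_i,r_i$ and perform the resulting delete/insert in $L(N_l(\cdot))$. The key numeric observation is that $|N_l(p)|\le\deg(p)=O(\log^d n)$, so a balanced binary search tree realizes each such operation in $O(\log(\log^d n))=O(\log\log n)$ worst-case time; over the $O(\log^d n)$ affected pairs this is $O(\log^d n\log\log n)$. For an $x$-swap I would additionally locate the $O(\log^d n)$ subscripts $i$ with $r_i=q$ (in $O(\log^d n)$ time via ${\cal T}_l$) and, where $n_l(b_i)=p$, exchange $p,q$ in $L(N_l(b_i))$ at the same $O(\log\log n)$ cost apiece; each touched $KSL(\cdot)$ has constant size, so its maintenance is $O(1)$. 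The dominant term is $O(\log^d n\log\log n)$, and crucially every bound here is \emph{worst-case} — the RBRT never rebalances and the per-event edge count is worst-case $O(\log^d n)$ — which is exactly what yields genuine responsiveness rather than merely amortized responsiveness.

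Finally the criteria. Compactness is immediate from the event-count paragraph: the certificates reside in $O(1)$ coordinate lists carrying $O(n)$ certificates and in $n$ constant-size $KSL(\cdot)$, for $O(n)$ certificates overall. Responsiveness follows from the worst-case $O(\log^d n\log\log n)$ bound just established. For locality I would fix a point $q$ and bound the certificates touching it: $O(1)$ as the centre of $KSL(q)$, plus, as an endpoint $n_l(p)$ of some other $KSL(p)$, at most the number of pairs in which $q$ plays the role of $r_i$, which is $O(\log^d n)$ by Theorem~\ref{the:RBRTcomplexity} — polylogarithmic, so the KDS is local. The remaining and most delicate point is \emph{efficiency}: I would compare the internal count $O(n^2\log^d n)$ against the worst-case number of external events, \ie~genuine changes to the maintained $(1+\epsilon)$-nearest neighbour summed over all points, and argue this is $\Omega(n^2)$ in the worst case, so the ratio is $O(\log^d n)=\text{polylog}(n)$. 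Pinning down this external-event count is the step I expect to require the most care, since it demands a matching lower bound on how often an \emph{approximate} nearest neighbour is forced to change, rather than the straightforward combinatorial upper bounds that suffice for the other four quantities.
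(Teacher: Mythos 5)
Your proposal follows essentially the same route as the paper's proof: the same $O(n\log^d n)$ space accounting over the RBRTs, RNN graph, lists $L(N_l(\cdot))$ and constant-size $KSL(\cdot)$ structures; the same count of $O(n^2)$ swap events each causing $O(\log^d n)$ changes to RNN-graph edges (hence $O(n^2\log^d n)$ events in total); the same worst-case per-event cost of $O(\log^d n)$ affected pairs times an $O(\log\log n)$ update to a polylogarithmic-size sorted list; and the same degree-based locality bound of $O(\log^d n)$ certificates per point. If anything you are more explicit than the paper, which asserts the $O(\log\log n)$ update time without your $|N_l(p)|=O(\log^d n)$ justification and never addresses the efficiency criterion at all, so your remark that efficiency requires a worst-case $\Omega(n^2)$ lower bound on forced changes to the approximate nearest neighbors identifies a point the paper's own proof silently skips.
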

\begin{proof}
The proof of the preprocessing time and space follows from the properties of an RNN graph. Each event can make $O(\log^d n)$ changes to the edges of the RNN graph. Each update to a sorted list $L(N_l(.))$ can be done in $O(\log\log n)$. Thus an event can be handled in worst-case time $O(\log^dn\log\log n)$.

Since each event makes $O(\log^d n)$ changes to the values of $n_l(.)$, and since the size of each kinetic sorted list $KSL(p)$ is constant, the number of all events to maintain all the $(1+\epsilon)$-nearest neighbors is $O(n^2\log^d n)$.

Each point participates in a constant number of certificates in the kinetic sorted lists corresponding to the coordinate axes $u_i$ and $x_l$. Since the degree of each point in the RNN graph is $O(\log^d n)$, a change to the trajectory of a point may causes $O(\log^d n)$ changes in the certificates of the kinetic sorted lists $KSL(p)$. Therefore, each point participates in $O(\log^d n)$ certificates.
\end{proof}
\section{Discussion and Conclusion}\label{sec:conclusion}
We have provided KDS's for  maintenance of both the \1syg~and all the $1$-nearest neighbors, where the trajectories of the points are polynomials of degree bounded by some constant. These KDS's are amortized responsive. A future direction is to give KDS's for the \1syg~and all the $1$-nearest neighbors such that each event can be handled in a polylogarithmic worst-case time. The next open direction is to design a local KDS for maintenance of all the $1$-nearest neighbors.

Finding a linear-space KDS for all (approximate) $1$-nearest neighbors in $\mathbb{R}^d$, such that it satisfies other standard performance criteria, is an interesting future work.

In order to answer \rknn~queries over time, for any $k\geq 1$, we have provided a KDS for all the $k$-nearest neighbors. Our KDS is the first KDS for all the $k$-nearest neighbors in $\mathbb{R}^d$, for any $k\geq 1$. It processes $O(n^2\phi(n))$ events, each in amortized time $O(\log n)$. Another open problem is to design a KDS for maintenance of all the $k$-nearest neighbors that processes less than $O(n^2\phi(n))$ events.


\paragraph{\textbf{Acknowledgments}} We would like to thank Timothy M. Chan for his remarks on the best current bounds on the complexity of the $k$-level of partially-defined bounded-degree polynomials, and also for his helpful comments in the analysis of the KDS for maintenance of all the $k$-nearest neighbors.
\bibliographystyle{elsarticle-num}
\bibliography{References_Main}

\begin{thebibliography}{10}
\expandafter\ifx\csname url\endcsname\relax
  \def\url#1{\texttt{#1}}\fi
\expandafter\ifx\csname urlprefix\endcsname\relax\def\urlprefix{URL }\fi
\expandafter\ifx\csname href\endcsname\relax
  \def\href#1#2{#2} \def\path#1{#1}\fi

\bibitem{DBLP:journals/corr/RahmatiA13}
Z.~Rahmati, M.~A. Abam, V.~King, S.~Whitesides, Kinetic data structures for the
  {S}emi-{Y}ao graph and all nearest neighbors in $\mathbb{R}^d$, in:
  Proceedings of the 26th Canadian Conference on Computational Geometry (CCCG
  '14), 2014.

\bibitem{RahmatiKW14_IWOCA14}
Z.~Rahmati, V.~King, S.~Whitesides, Kinetic reverse $k$-nearest neighbor
  problem, in: Proceedings of the 25th International Workshop on Combinatorial
  Algorithms (IWOCA '14), Lecture Notes in Computer Science, Springer Berlin
  Heidelberg, 2014.

\bibitem{rahmati2014simple}
Z.~Rahmati, Simple, faster kinetic data structures, Ph.D. thesis, University of
  Victoria (2014).

\bibitem{Clarkson:1987:AAS:28395.28402}
K.~Clarkson, Approximation algorithms for shortest path motion planning, in:
  Proceedings of the 19th Aannual ACM Symposium on Theory of Computing (STOC
  '87), ACM, New York, NY, USA, 1987, pp. 56--65.

\bibitem{Keil:1988:ACE:61764.61787}
J.~M. Keil, Approximating the complete euclidean graph, in: Proceedings of the
  1st Scandinavian Workshop on Algorithm Theory (SWAT '88), Springer-Verlag,
  London, UK, UK, 1988, pp. 208--213.

\bibitem{DBLP:journals/corr/BoseCMRV14}
P.~Bose, J.~D. Carufel, P.~Morin, A.~van Renssen, S.~Verdonschot, Towards tight
  bounds on theta-graphs, CoRR abs/1404.6233.

\bibitem{Basch:1997:DSM:314161.314435}
J.~Basch, L.~J. Guibas, J.~Hershberger, Data structures for mobile data, in:
  Proceedings of the 8th Annual ACM-SIAM Symposium on Discrete Algorithms (SODA
  '97), Society for Industrial and Applied Mathematics, Philadelphia, PA, USA,
  1997, pp. 747--756.

\bibitem{4567872}
M.~I. Shamos, D.~Hoey, Closest-point problems, in: Proceedings of the 16th IEEE
  Symposium on Foundations of Computer Science (FOCS '75), 1975, pp. 151--162.

\bibitem{Bentley:1976:DMS:800113.803652}
J.~L. Bentley, M.~I. Shamos, Divide-and-conquer in multidimensional space, in:
  Proceedings of the 8th Annual ACM Symposium on Theory of Computing (STOC
  '76), ACM, New York, NY, USA, 1976, pp. 220--230.

\bibitem{FSHJ1078}
M.~O. Rabin, Probabilistic algorithms, in: Algorithms and Complexity: New
  Direction and Results, Academic Press, 1976, pp. 21--39.

\bibitem{Vaidya:1989:ONL:70530.70532}
P.~M. Vaidya, An {O}($n\log n$) algorithm for the all-nearest-neighbors
  problem, Discrete {\&} Computational Geometry 4~(2) (1989) 101--115.

\bibitem{Dickerson:1996:APP:236464.236474}
M.~T. Dickerson, D.~Eppstein, Algorithms for proximity problems in higher
  dimensions, International Journal of Computational Geometry and Applications
  5~(5) (1996) 277--291.

\bibitem{Callahan366854}
P.~B. Callahan, S.~R. Kosaraju, A decomposition of multidimensional point sets
  with applications to $k$-nearest-neighbors and $n$-body potential fields,
  Journal of the ACM 42~(1) (1995) 67--90.

\bibitem{Clarkson:1983:FAN:1382437.1382825}
K.~L. Clarkson, Fast algorithms for the all nearest neighbors problem, in:
  Proceedings of the 24th Annual Symposium on Foundations of Computer Science
  (FOCS '83), IEEE Computer Society, Washington, DC, USA, 1983, pp. 226--232.

\bibitem{Korn:2000:ISB:342009.335415}
F.~Korn, S.~Muthukrishnan, Influence sets based on reverse nearest neighbor
  queries, in: Proceedings of the 2000 ACM SIGMOD International Conference on
  Management of Data (SIGMOD '00), ACM, New York, NY, USA, 2000, pp. 201--212.

\bibitem{Kumar:2008:EAR:1463434.1463483}
Y.~Kumar, R.~Janardan, P.~Gupta, Efficient algorithms for reverse proximity
  query problems, in: Proceedings of the 16th ACM SIGSPATIAL International
  Conference on Advances in Geographic Information Systems (GIS '08), ACM, New
  York, NY, USA, 2008, pp. 39:1--39:10.

\bibitem{DBLP:conf/sdm/LinED08}
J.~Lin, D.~Etter, D.~DeBarr, Exact and approximate reverse nearest neighbor
  search for multimedia data, in: Proceedings of the 2008 SIAM International
  Conference on Data Mining (SDM '08), SIAM, 2008, pp. 656--667.

\bibitem{MaheshwariVZ02cccg2002}
A.~Maheshwari, J.~Vahrenhold, N.~Zeh, On reverse nearest neighbor queries, in:
  Proceedings of the 14th Canadian Conference on Computational Geometry (CCCG
  '02), 2002, pp. 128--132.

\bibitem{CheongIJCGA2011}
O.~Cheong, A.~Vigneron, J.~Yon, Reverse nearest neighbor queries in fixed
  dimension, International Journal of Computational Geometry and Applications
  21~(02) (2011) 179--188.

\bibitem{Basch:1997:PPM:262839.262998}
J.~Basch, L.~J. Guibas, L.~Zhang, Proximity problems on moving points, in:
  Proceedings of the 13th Annual Symposium on Computational Geometry (SoCG
  '97), ACM, New York, NY, USA, 1997, pp. 344--351.

\bibitem{Agarwal:2008:KDD:1435375.1435379}
P.~K. Agarwal, H.~Kaplan, M.~Sharir, Kinetic and dynamic data structures for
  closest pair and all nearest neighbors, ACM Transactions on Algorithms 5
  (2008) 4:1--37.

\bibitem{Rahmati2014}
Z.~Rahmati, M.~A. Abam, V.~King, S.~Whitesides, A.~Zarei, A simple, faster
  method for kinetic proximity problems, Computational Geometry (2014).

\bibitem{Cheema:2012:CRK:2124885.2124903}
M.~A. Cheema, W.~Zhang, X.~Lin, Y.~Zhang, X.~Li, Continuous reverse k nearest
  neighbors queries in euclidean space and in spatial networks, The VLDB
  Journal 21~(1) (2012) 69--95.

\bibitem{Abam:2011:KSX:1971362.1971367}
M.~A. Abam, M.~de~Berg, Kinetic spanners in $\mathbb{R}^d$, Discrete {\&}
  Computational Geometry 45~(4) (2011) 723--736.

\bibitem{Berg:2008:CGA:1370949}
M.~d. Berg, O.~Cheong, M.~v. Kreveld, M.~Overmars, Computational Geometry:
  Algorithms and Applications, 3rd Edition, Springer-Verlag TELOS, Santa Clara,
  CA, USA, 2008.

\bibitem{Pettie:2013:SBD:2493132.2462390}
S.~Pettie, Sharp bounds on {D}avenport-{S}chinzel sequences of every order, in:
  Proceedings of the 29th Annual Symposium on Computational Geometry (SoCG
  '13), ACM, New York, NY, USA, 2013, pp. 319--328.

\bibitem{Agarwal:1995:DSG:868483}
M.~Sharir, P.~K. Agarwal, Davenport-Schinzel Sequences and their Geometric
  Applications, Cambridge University Press, New York, NY, USA, 1995.

\bibitem{AACS1998}
P.~K. Agarwal, B.~Aronov, T.~M. Chan, M.~Sharir, On levels in arrangements of
  lines, segments, planes, and triangles, Discrete {\&} Computational Geometry
  19~(3) (1998) 315--331.

\bibitem{chanii2005}
T.~M. Chan, On levels in arrangements of curves, ii: A simple inequality and
  its consequences, Discrete \& Computational Geometry 34~(1) (2005) 11--24.

\bibitem{Chan:2008:LAC:1377676.1377691}
T.~M. Chan, On levels in arrangements of curves, iii: further improvements, in:
  Proceedings of the 24th annual Symposium on Computational Geometry (SoCG
  '08), ACM, New York, NY, USA, 2008, pp. 85--93.

\bibitem{SharirM1991}
M.~Sharir, On $k$-sets in arrangements of curves and surfaces, Discrete {\&}
  Computational Geometry 6~(1) (1991) 593--613.

\bibitem{DBLP:journals/siamcomp/Yao82}
A.~C.-C. Yao, On constructing minimum spanning trees in $k$-dimensional spaces
  and related problems, SIAM Journal on Computing 11~(4) (1982) 721--736.

\bibitem{Chan:2011:ORS:1998196.1998198}
T.~M. Chan, K.~G. Larsen, M.~P\u{a}tra\c{s}cu, Orthogonal range searching on
  the ram, revisited, in: Proceedings of the 27th Annual Symposium on
  Computational Geometry (SoCG '11), ACM, New York, NY, USA, 2011, pp. 1--10.

\bibitem{Frederickson1982197}
G.~N. Frederickson, D.~B. Johnson, The complexity of selection and ranking in x
  + y and matrices with sorted columns, Journal of Computer and System Sciences
  24~(2) (1982) 197--208.

\bibitem{Arya:1998:OAA:293347.293348}
S.~Arya, D.~M. Mount, N.~S. Netanyahu, R.~Silverman, A.~Y. Wu, An optimal
  algorithm for approximate nearest neighbor searching in fixed dimensions,
  Journal of the ACM 45~(6) (1998) 891--923.

\bibitem{10.1109/TVCG.2010.9}
M.~Connor, P.~Kumar, Fast construction of $k$-nearest neighbor graphs for point
  clouds, IEEE Transactions on Visualization and Computer Graphics 16~(4)
  (2010) 599--608.

\end{thebibliography}
\end{document}